\documentclass[a4paper,11pt,numbers=noenddot]{article}
\usepackage[a4paper,left=2.5cm,right=2.5cm,top=2.5cm,bottom=2.5cm]{geometry}
\usepackage{graphicx} 
\usepackage[utf8]{inputenc}
\usepackage{comment}
\usepackage{amsmath}
\usepackage{amsthm}
\usepackage{amssymb}
\usepackage{fdsymbol}
\usepackage{dsfont}%
\usepackage{physics}
\usepackage{faktor}
\usepackage[colorlinks=true, allcolors=blue]{hyperref}
\usepackage{bbm}
\usepackage{float}
\usepackage{subfig}
\usepackage{caption}
\usepackage{subcaption}
\usepackage{bbm}
\usepackage{mathrsfs}
\usepackage[dvipsnames]{xcolor}

\newcommand{\R}{\mathbb{R}}
\newcommand{\C}{\mathbb{C}}

\newcommand{\T}{\mathbb{T}}
\newcommand{\Z}{\mathbb{Z}}
\newcommand{\Sp}{\mathbb{S}}
\newcommand{\Hp}{\mathcal{H}}

\newcommand{\N}{\mathbb{N}}

\newcommand{\Pp}{\mathbb{P}}
\newcommand{\E}{\mathbb{E}}

\newcommand{\dist}{ \operatorname{dist}} 

\newtheorem{theorem}{Theorem}
\newtheorem{definition}{Definition}
\newtheorem{lemma}{Lemma}
\newtheorem{proposition}{Proposition}

\theoremstyle{definition}

\newtheorem{remark}{Remark}
\numberwithin{equation}{section}
\usepackage{mathtools}
\mathtoolsset{showonlyrefs}

\usepackage[square,sort,comma,numbers]{natbib}

\usepackage{tikz}
\usetikzlibrary{arrows.meta,positioning}
\usetikzlibrary{decorations.markings}
\tikzset{
  midarrow/.style={
    postaction={decorate},
    decoration={
      markings,
      mark=at position 0.5 with {\arrow{Stealth[length=4pt]}}
    }
  },
  edge/.style={
    thick,
    midarrow
  },
  purpleedge/.style={
    line width=1.6pt,
    color=violet,
    postaction={decorate},
    decoration={
      markings,
      mark=at position 0.5 with {\arrow{Stealth[length=7pt, width=6pt]}}
    }
  },
  blueedge/.style={
    line width=1.6pt,
    color=blue,
    postaction={decorate},
    decoration={
      markings,
      mark=at position 0.5 with {\arrow{Stealth[length=7pt, width=6pt]}}
    }
  }
}
\tikzset{
  node/.style={
    circle,
    draw,
    thick,
    minimum size=18pt,
    inner sep=1pt
  }
}

\usepackage[none]{hyphenat}
\usepackage{authblk}
\begin{document}
\title{Dynamical Localization for\\ General Scattering Quantum Walks}

\author[1]{Alain Joye\footnote{alain.joye@univ-grenoble-alpes.fr}}
\affil[1]{Univ. Grenoble Alpes,
CNRS, Institut Fourier,
F-38000 Grenoble France
}
\author[1]{Andreas Schaefer\footnote{andreas.schaefer@univ-grenoble-alpes.fr}}

\author[2,3]{Simone Warzel\footnote{simone.warzel@tum.de}}
\affil[2]{Departments of Mathematics and Physics, TU Munich, Germany
}
\affil[3]{Munich Center for Quantum Science and Technology, Munich, Germany}

\date{\today}
\maketitle

\begin{abstract}
    We consider quantum walks defined on arbitrary infinite graphs, parameterized by a family of scattering matrices attached to the vertices. Multiplying each scattering matrix by an i.i.d.\ random phase, we obtain a random scattering quantum walk. 
    We prove dynamical localization for random scattering walks in a large-disorder regime. The result is based on a relation between fractional moment estimates and eigenfunction correlators of independent interest, which we establish for general random unitary operators.
\end{abstract}

\setcounter{tocdepth}{2}
\tableofcontents

\section{Introduction}

In the last few decades, quantum walks on graphs, which are discrete-time linear dynamical systems, have become increasingly popular as they lie at the crossroads of different scientific fields \cite{AAKV:2001, Kempe:2003, VenegasAndraca:2012, ABJ:2015, GZ:2023}. They provide simplified, yet sensible, models for the quantum dynamics of different physical systems, {\it e.g.}\ in condensed matter physics, quantum optics or quantum statistical mechanics; see  \cite{Chalker:1988,  BHJ:2003, KOK:2005, FH:2005, TMT:2020, WM:2013, AJR:2021, BJS:2025} and references therein. Quantum walks also appear within the context of quantum information and algorithms, in particular as a versatile framework for designing and implementing search algorithms that are more efficient than classical ones; see \cite{Grover:1996, Shenvi:2003, Childs:2009, Portugal:2013, Watrous:2018, Qiang:2024}. From a probabilist's perspective, quantum walks can be viewed as non-commutative analogs of classical Markov chains, a viewpoint that has sparked extensive mathematical analysis of (open) quantum walks by comparison with the classical case, \cite{Szegedy:2004,  Gudder:2008, APSS:2012, Konno:2008, GVWW:2013, Tiedra:2021, joye:2024}, to give a few examples. Let us mention that CMV matrices \cite{CMV:2003}, the unitary counterparts of Jacobi matrices in the self-adjoint setting, which play a pivotal role in the theory of orthogonal polynomials and the spectral analysis of unitary operators \cite{Simon:2005a, Simon:2005b}, are special cases of quantum walks. Quantum walks may also exhibit topological properties with consequences on their spectral and dynamical behavior, as shown for example in \cite{Cedzich:2017, DFT:2017, SS-B:2017, Richard:2017, ABJ:2019, ABJ:2020, CGG+:2018, Delplace2020}. 

The line of research to which the present contribution belongs is concerned with the effect of disorder on the dynamical properties of quantum walks. In applications, randomness may arise from various sources: in condensed matter physics, the environment in which the quantum walker evolves is often disordered, while in quantum information theory, the randomness stems from errors or lack of control in the implementation of the algorithms. In such situations, we speak of a random quantum walk, and the objective is to study the hindrance of the propagation of the quantum walker induced by disorder, {\it i.e.}\ localization. Mathematical analyses of (de-)localization for quantum walks subjected to random alterations have been led in various contexts, see {\it e.g.} \cite{Koshovets:1991, Joye:2005, HJS:2006, Simon:2006, HJS:09, JM:2010,  ASW:2011, HJ:2014, ABJ:2012, Joye:2012, BoumazaMarin:2015, ShapiroTauber:2019,  Schaefer:2025, BoumazaKhouildi:2025}. Works addressing the spectral properties of walks characterized by quasi-periodic variations of parameters include \cite{BHJ:2003, WangDamanik:2019, CedzichWerner:2021, CFO:2023, CFLOZ:2024}. 

In this paper, we study scattering quantum walks on infinite graphs propagating in a disordered environment. Scattering quantum walks, characterized by a family of scattering matrices attached to the vertices of the graph, provide a unifying framework for the study of a broad class of quantum walks on arbitrary graphs, including coined quantum walks and network models; see \cite{joye:2024}. We focus on an environment with a minimal amount of randomness, namely, one random phase per scattering matrix only.    

Our goal is twofold. First, we develop a general method for concluding dynamical localization from fractional moment localization results, in an abstract unitary framework. Second, as an application, we prove dynamical localization for random scattering quantum walks defined on arbitrary graphs, with minimal disorder. Specifically, we establish exponential fractional moment estimates in the strong-disorder regime for random scattering quantum walks on arbitrary graphs with a reduced number of random variables, {\it i.e.}\ one random phase per scattering matrix only. 
The scarcity of randomness prevents us from concluding dynamical localization via a second moment estimate based on rank-one analysis as in \cite{HJS:09}. This difficulty motivated the development of a general approach that allows one to conclude dynamical localization from fractional moment bounds via eigenfunction correlators. Unlike previous methods \cite{Warzel:15, HJS:09}, our approach does not rely on rank-one analysis. 
This method is expected to be applicable in other contexts where fractional moment results are available, but second moment estimates are not, see for example \cite{Klausen:2023}.

\section{Eigenfunction correlators for random unitaries} \label{sec:EC}
\subsection{Definitions and basic properties} \label{sec:defEC}
The eigenfunction correlators (EC for short) of a unitary operator $ U $, which is defined on a separable Hilbert space $ \mathcal{H} $, are simply the total variation measures of the operator's spectral measure $ \mu_{\psi,\varphi} $ associated with normalized vectors $ \psi, \varphi \in  \mathcal{H}  $, i.e., for any open $ I \subset \mathbb{S}^1 $
\begin{equation}\label{def:EC}
    Q(\psi,\varphi;I) \equiv | \mu_{\psi,\varphi} |(I) \coloneqq \sup_{\substack{ F\in C_c(I) \\ \| F \|_\infty \leq 1 }}  \left|\langle \psi | F(U)  \varphi \rangle \right| . 
\end{equation}
This quantity shares most of the properties with its self-adjoint counterpart \cite{AizenmanSchenker:2001,Warzel:15}. Since the proofs of these properties are straightforward extensions of the self-adjoint case, we summarize them only briefly:
\begin{enumerate}
\item 
As is evident from the definition: $  Q(\psi,\varphi;I) \leq 1 $. 
    \item 
By Lusin's theorem, the supremum can also be taken over the set of measurable $ F $ with the property $F = F \ 1_I $ with $ \| F \|_\infty\leq 1$ and $ 1_I $ denoting the indicator function. As a particular case, one thus concludes
\begin{equation} \label{eq: loc from ec decay}
    \sup_{n \in \mathbb{Z} }  \left|\langle \psi | U^n P_I(U)  \varphi \rangle \right| \leq  Q(\psi,\varphi;I) .
\end{equation}
The EC thus captures information about the time-discrete dynamics generated by $ U $, when restricted to the spectral subspace associated with $ I $.
\item By the RAGE theorem~\cite{RichardTiedra:2022}, if 
 $ (\psi_j)_{j \in \mathbb{N} } $ stands for any orthonormal basis, then for any normalized $ \varphi \in \mathcal{H} $ the spectral projection $ P^c_I(U) $ of the unitary on its continuous spectrum within $ I $ is given by 
\begin{align*}
    \langle \varphi | P_I^c(U)  \varphi \rangle =  \lim_{L\to \infty } \lim_{N\to \infty} \frac{1}{N} \sum_{n=0}^{N-1} \sum_{j=L}^\infty |\langle \psi_j | U^n P_I(U) \varphi \rangle |^2 
\end{align*}
It is hence straightforward that if $ \sum_{j\in \mathbb{N}} Q(\psi_j,\varphi;I)^2 < \infty $ for some $ \varphi $, the continuous component of the spectral measure associated with that vector vanishes, $ \langle \varphi | P_I^c(U)  \varphi \rangle =  0  $, cf.~\cite[Thm. 7.2]{Warzel:15}.
\item\label{ec:4}
If a sequence of spectral measures converges weakly, i.e.\ 
$$ \mu^{(n)}_{\psi,\varphi}(F) \to \mu_{\psi,\varphi}(F) \coloneqq \int_{\mathbb{S}^1} F(s) \ d\mu_{\psi,\varphi}(ds)  $$ for all $ F \in C(\mathbb{S}^1) $, then the eigenfunction correlators satisfy $$ | \mu_{\psi,\varphi} |(I) \leq \liminf_{n\to \infty}  | \mu_{\psi,\varphi}^{(n)} |(I)  $$ for all open $ I \subset \mathbb{S}^1 $, cf.~\cite[Prop. 7.6]{Warzel:15}. 
\end{enumerate}

The naming derives from the special case in which the unitary has a discrete spectral decomposition
\begin{equation}\label{decUF}
    U = \sum_\alpha \lambda_\alpha P_\alpha, 
\end{equation}
such that  
$ Q(\psi,\varphi;I)  = \sum_{\lambda_\alpha \in I} | \langle \psi | P_\alpha \varphi \rangle | $. 
If~\eqref{decUF} applies, we may also introduce the family of interpolated eigenfunction correlators 
\begin{equation}\label{def:IEC}
    Q(\psi,\varphi;I,\beta) \coloneqq \sum_{\lambda_\alpha\in I } \langle \psi | P_\alpha \psi \rangle^{1-\beta} | \langle \psi | P_\alpha \varphi \rangle|^{\beta} ,
\end{equation}
with parameter $\beta\in [0,1]$ where $ \psi, \varphi \in \mathcal{H} $ are normalized.
In the self-adjoint case, these quantities were introduced in~\cite{AizenmanSchenker:2001}, see also~\cite{Warzel:15}.
Proofs of the following properties again extend verbatim:
\begin{enumerate}
    \item $ Q(\psi,\varphi;I,0) = Q(\psi,\psi;I) $ and $  Q(\psi,\varphi;I,1) =  Q(\psi,\varphi;I)  $.
    \item In its dependence on $ \beta $, the interpolated EC is a log-convex function, i.e., 
for any $\lambda \in [0,1] $ and $\beta_\lambda \coloneqq (1-\lambda)\beta_0  + \lambda \beta_1 \in [0,1]$:
$$ Q(\psi,\varphi;I,\beta_\lambda)\leq   Q(\psi,\varphi;I,\beta_0)^{1-\lambda} Q(\psi,\varphi;I,\beta_1)^{\lambda} .
$$
Consequently, $ 0 \leq Q(\psi,\varphi;I,\beta) \leq Q(\psi,\varphi;I,\alpha)^{\frac{1-\beta}{1-\alpha}} $  for any $ 0 \leq \alpha \leq \beta \leq 1 $. 
    \item For any $ \beta \in [0,1]$:
    \begin{equation}
    Q(\psi,\varphi;I) \leq \sqrt{Q(\psi,\varphi;I,\beta)Q(\varphi,\psi;I,\beta)} ,
    \end{equation}
    see~\cite[Eq.~(7.44)]{Warzel:15}. 
\end{enumerate}
In particular, an upper bound on the interpolated eigenfunction correlators at one $ \beta \in (0,1] $ implies a bound on any $ \beta \in (0,1] $. \\

Eigenfunction correlators are a useful tool in harvesting dynamical localization properties for self-adjoint random operators~\cite{KuS80,Aizenman94,AizenmanSchenker:2001,Warzel:15}.  
To connect to the localization analysis of random operators, we, however, need additional structure. 
\subsection{Consistent families of random unitaries on subgraphs}\label{sec:consfam}

A rather general framework for random unitaries, which covers random Scattering Quantum Walks (SQW) as a particular case, is described in terms of
graphs $(V,E)$ with vertex set $V$ and countable edge set $E$. We associate random variables to the edges, $ \omega: E \to \mathbb{R}, e \mapsto \omega_e $. They form the canonical probability space $ ( \mathbb{R}^E, \mathcal{B}(\mathbb{R}^E), \mathbb{P} )$.  
The unitaries will be defined on the Hilbert space 
\begin{equation}
    \ell^2(E) \coloneqq \left\{ \psi: E \to \mathbb{C} \, \bigg| \, \sum_{e\in E} |\psi(e)|^2 < \infty\right\} .
\end{equation}
Subsequently, we identify edges $ e \in E $ with the associated canonical orthonormal basis $ | e \rangle $ in this Hilbert space. 
Denoting by $ \mathcal{U}(E)  $ the set of unitary operators on this Hilbert space, weakly measurable maps
\begin{equation} \label{eq:RU} 
U : \mathbb{R}^E \to \mathcal{U}(E) , \, \omega \mapsto U_\omega 
\end{equation}
will be referred to as random unitaries. 

In order to be able to address finite-volume approximations, which are unitary as well, the following notion turns out to be useful. 
\begin{definition} \label{ass1}
We speak of a \emph{consistent family $\mathcal{F} $ of subsets $ F \subset E $} for random unitaries on the Hilbert space $ \ell^2(E) $ of a graph $ (V,E) $ with underlying probability space $( \mathbb{R}^E, \mathcal{B}(\mathbb{R}^E), \mathbb{P} )$ iff 
(i) for any $ F \in  \mathcal{F} $ also its complement $ F^c =  E \setminus F $ is in $ \mathcal{F} $ and 
(ii) there is a random unitary
\begin{equation}\label{eq:finitevolUF}
    U^F: \mathbb{R}^E \to \mathcal{U}( F )  , \ \omega \mapsto U_\omega^F ,
\end{equation}
which is weakly measurable with respect to the sub-sigma algebra $ \mathcal{B}(\mathbb{R}^F) $. 
\end{definition}
\begin{remark} 
The required measurability expresses the fact that $ U^F_\omega $ only depends on the projection of $ \omega $ onto $ \mathbb{R}^F$. 
Given a random unitary~\eqref{eq:RU}, the straightforward choice $ U_\omega^F = P^F U_\omega P^F $ with $ P^F $ the orthogonal projection onto the subspace $ \ell^2(F) $ will generally not be unitary. This is one of the notable differences with the case of random self-adjoint operators for which consistent, self-adjoint Dirichlet restrictions exist for any subset $ F $.  Random SQW naturally lead to consistent families of subsets as will be explained in Section~\ref{sec:RSQW}. 
\end{remark}

For any consistent family $ \mathcal{F} $ of subsets for random unitaries, we have a family of random eigenfunction correlators, $ Q_\omega^F$ with $ F \in \mathcal{F} $ associated with $ U_\omega^F $ and normalized vectors in $ \ell^2(F) $.

In the above set-up, any 
disjoint decomposition of the edge set, $ E = F \uplus F^c $ for $ F \in \mathcal{F} $, comes with the direct sum of unitaries, $ U_\omega^F \oplus U_\omega^{F^c} $, whose 'decoupled' action goes along with the 
orthogonal decomposition of the Hilbert space $  \ell^2(E) = \ell^2(F)  \oplus\ell^2(F^c) $. The boundary operator of $ F$ within $ E $:
\begin{equation}\label{bdryT}
    T_\omega^{E,F} \coloneqq U_\omega^{E} - U_\omega^F \oplus U_\omega^{F^c}  
\end{equation}
compares the decoupled unitaries to the unitary $ U_\omega^{E} \equiv U_\omega $ on the whole graph. 

For any finite subset $ F \in \mathcal{F} $ the random unitary  $ U^{F} $ defines a finite-volume approximation of $ U $. 
In this context, it is natural to inquire about the nature of the convergence of such approximations for exhaustive sequences of finite subsets. This is addressed in the following 
\begin{proposition} \label{propo: EC lower semicont}
Given a sequence of finite subsets $ E_L \in \mathcal{F} $ in a consistent family $\mathcal{F} $ for random unitaries on the Hilbert space over a graph $ (V,E) $ such that the boundary operators converge strongly, i.e., for all $ e \in E $ and all $ \omega \in \mathbb{R}^{E}$
\begin{equation}\label{eq:strTrans}
\lim_{L\to \infty } \| T_\omega^{E,E_L} e \| =0 .
\end{equation} 
Then for all $ e,f \in E$ and all $ \omega \in \mathbb{R}^{E}$:
\begin{enumerate}
    \item\label{item:wconv} the associated spectral measures  converge weakly, i.e., $ (\mu^{E_L}_{\omega})_{e,f}(F) \to  (\mu^{E}_\omega)_{e,f}(F) $ for all $ F\in C(\mathbb{S}^1) $. 
    \item
 the EC is lower semicontinuous, i.e.,  for any open $ I \subset \mathbb{S}^1$
\begin{equation}\label{eq:ECconv}
Q_\omega^{E}(e,f;I) \leq \liminf_{L\to \infty} Q_\omega^{E_L}(e,f;I) .
\end{equation}
\end{enumerate}
\end{proposition}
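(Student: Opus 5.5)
The plan is to regard all operators as acting on the full space $\ell^2(E)$ by setting $W_L \coloneqq U_\omega^{E_L}\oplus U_\omega^{E_L^c}$. This is a unitary on $\ell^2(E)$ that leaves $\ell^2(E_L)$ invariant. For $e,f\in E$ we have $e,f\in E_L$ once $L$ is large, at least if the sequence exhausts $E$. This is implicit in the notion of finite-volume approximation; without it, $(\mu^{E_L}_\omega)_{e,f}$ is understood via $W_L$, which changes nothing below. For such $L$,
\[
(\mu^{E_L}_\omega)_{e,f}(F)=\langle e| F(U^{E_L}_\omega) f\rangle = \langle e | F(W_L) f\rangle ,
\]
because $F(W_L)=F(U^{E_L}_\omega)\oplus F(U^{E_L^c}_\omega)$. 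Since $W_L = U_\omega - T_\omega^{E,E_L}$, hypothesis~\eqref{eq:strTrans} says exactly that $W_L\to U_\omega$ strongly on each basis vector $e$. The family $(W_L)$ is uniformly bounded in norm by $1$, so the convergence extends by density to strong convergence on all of $\ell^2(E)$.

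Item~\ref{item:wconv} then follows from a polynomial approximation argument.

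1. For monomials $z^n$ with $n\ge 0$: a product of uniformly bounded, strongly convergent sequences converges strongly, so $W_L^n\to U_\omega^n$ strongly by induction on $n$.

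2. For negative powers: the adjoints satisfy $\|(W_L^*-U_\omega^*)\varphi\| = \|W_L^*(U_\omega - W_L)U_\omega^*\varphi\|\le \|(U_\omega-W_L)U_\omega^*\varphi\|\to 0$. Hence $W_L^{*}\to U_\omega^*$ strongly, and negative powers converge as well.

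3. Consequently $\langle e|p(W_L)f\rangle\to\langle e|p(U_\omega)f\rangle$ for every trigonometric polynomial $p$.

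4. For general $F\in C(\mathbb{S}^1)$, choose a trigonometric polynomial $p$ with $\|F-p\|_\infty<\varepsilon$ (Fejér / Stone--Weierstrass). The functional calculus for unitaries gives $\|F(W)-p(W)\|\le\varepsilon$ uniformly in the unitary $W$, and a $3\varepsilon$-argument finishes the proof of weak convergence.

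The second item is then exactly property~\ref{ec:4} of the eigenfunction correlators from Section~\ref{sec:defEC}, applied to $\mu^{(L)}=(\mu^{E_L}_\omega)_{e,f}$, which identifies with the spectral measure of $W_L$. If one wants it self-contained, the argument is short. Fix $F\in C_c(I)$ with $\|F\|_\infty\le1$. By item~\ref{item:wconv},
\[
|\langle e|F(U_\omega)f\rangle| = \lim_L |\langle e|F(W_L)f\rangle| \le \liminf_L Q^{E_L}_\omega(e,f;I),
\]
since $F$ is admissible in the supremum~\eqref{def:EC} for every $L$. Taking the supremum over $F$ gives~\eqref{eq:ECconv}.

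I expect the only genuinely delicate point to be the passage from strong convergence on basis vectors to strong convergence of the negative powers, which the adjoint identity in step 2 resolves. The bookkeeping that places $e,f$ inside $E_L$, so that $Q^{E_L}_\omega(e,f;\cdot)$ is defined, is the other point needing care, and working with $W_L$ on the whole space removes it.
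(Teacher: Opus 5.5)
Your proposal is correct and follows the paper's route. The paper applies Lemma~\ref{lem:Us} to $U_\omega^{E_L}\oplus U_\omega^{E_L^c}=U_\omega-T_\omega^{E,E_L}$ (convergence of powers via a telescoping identity, then Stone--Weierstrass), and then invokes property~\ref{ec:4} of Section~\ref{sec:defEC} for the lower semicontinuity. Your version is slightly more careful than the paper's: you explicitly extend the convergence from basis vectors to all of $\ell^2(E)$, and you handle the negative powers needed because trigonometric, rather than ordinary, polynomials are what is dense in $C(\mathbb{S}^1)$.
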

\begin{proof}
The proof of the first item is based on Lemma~\ref{lem:Us}. The second item follows from item~\ref{ec:4} in Section~\ref{sec:defEC}. 
\end{proof}

\begin{lemma}\label{lem:Us}
    If $\lim_{n\rightarrow \infty}\|(U_n- U)\varphi\|= 0$ for all $\varphi\in \mathcal{H}$, then $\lim_{n\rightarrow}\mu_n(F)=\mu(F)$ for all $F\in C(\mathbb{S}^1)$, where $d\mu_n$, respectively $d\mu$, are the spectral measures of $U_n$, respectively $U$ associated with an arbitrary vector.  
\end{lemma}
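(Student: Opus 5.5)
We prove the claim first for trigonometric polynomials and then pass to general continuous $F$ by density. Fix $\varphi\in\mathcal{H}$ and write $\mu_n(F)=\langle\varphi|F(U_n)\varphi\rangle$ and $\mu(F)=\langle\varphi|F(U)\varphi\rangle$. The polarized version, for vectors $\psi,\varphi$ such as the $e,f$ in Proposition~\ref{propo: EC lower semicont}, follows by polarization or by the same argument verbatim.

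\textbf{Step 1: monomials.} We show that $U_n^k\to U^k$ strongly for every $k\in\mathbb{Z}$.

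For $k\ge 0$ we induct on $k$, using the telescoping identity
\begin{equation}
U_n^{k}-U^{k}=\sum_{j=0}^{k-1}U_n^{j}(U_n-U)U^{k-1-j}.
\end{equation}
Since $\|U_n^j\|=1$, we get
\begin{equation}
\|(U_n^k-U^k)\varphi\|\le\sum_{j=0}^{k-1}\|(U_n-U)U^{k-1-j}\varphi\|,
\end{equation}
and each term tends to zero by hypothesis, applied to the fixed vectors $U^{k-1-j}\varphi$.

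For $k=-1$ we use
\begin{equation}
U_n^{*}-U^{*}=U_n^{*}(U-U_n)U^{*},
\end{equation}
so that $\|(U_n^*-U^*)\varphi\|\le\|(U-U_n)U^*\varphi\|\to0$. This is the one point where unitarity is genuinely used: in general, strong convergence does not imply strong convergence of adjoints, but it does for unitaries. The case of negative powers $k<-1$ then follows from the same telescoping identity applied to $U_n^*$ and $U^*$.

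By linearity, $p(U_n)\varphi\to p(U)\varphi$ for every trigonometric polynomial $p(z)=\sum_{|k|\le K}c_kz^k$ on $\mathbb{S}^1$. In particular $\mu_n(p)\to\mu(p)$.

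\textbf{Step 2: density.} Let $F\in C(\mathbb{S}^1)$ and $\varepsilon>0$. By Stone--Weierstrass (or Fejér's theorem), there is a trigonometric polynomial $p$ with $\|F-p\|_\infty<\varepsilon$. The functional calculus gives $\|F(V)-p(V)\|\le\|F-p\|_\infty$ for any unitary $V$. Hence
\begin{equation}
|\mu_n(F)-\mu(F)|\le 2\varepsilon\|\varphi\|^2+|\mu_n(p)-\mu(p)|,
\end{equation}
so $\limsup_n|\mu_n(F)-\mu(F)|\le2\varepsilon\|\varphi\|^2$. Since $\varepsilon$ is arbitrary, $\mu_n(F)\to\mu(F)$. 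The estimate is uniform in $n$ precisely because all the spectral measures are supported on $\mathbb{S}^1$ with total mass at most $\|\varphi\|^2$.

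\textbf{Main obstacle.} The only nontrivial point is handling the negative powers, i.e.\ showing strong convergence of the adjoints $U_n^*\to U^*$; unitarity resolves it via the resolvent-type identity above. For the application in Proposition~\ref{propo: EC lower semicont}, we take $U_L=U^{E_L}_\omega\oplus U^{E_L^c}_\omega$. Then \eqref{eq:strTrans} gives strong convergence on basis vectors, and this extends to all of $\ell^2(E)$ because the operators are uniformly bounded. Projecting onto $\ell^2(E_L)$ is harmless, since $e,f\in E_L$ for large $L$ and the decoupled sum commutes with that projection.
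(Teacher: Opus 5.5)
Your proof is correct and follows the same route as the paper: you reduce via Stone--Weierstrass to monomials and conclude with the telescoping identity $U_n^k-U^k=\sum_{j=0}^{k-1}U_n^j(U_n-U)U^{k-1-j}$. Your explicit treatment of negative powers through $U_n^*-U^*=U_n^*(U-U_n)U^*$ is a worthwhile addition, since the paper writes out only $k\in\mathbb{N}$ even though polynomials in $z$ alone are not dense in $C(\mathbb{S}^1)$ (at the level of the measures one could also simply use $\langle\varphi|U_n^{-k}\varphi\rangle=\overline{\langle\varphi|U_n^{k}\varphi\rangle}$).
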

\begin{proof}
    By functional calculus and the Stone-Weierstrass theorem, we can approximate any $F\in C(\mathbb{S}^1)$ by a polynomial in the $L^\infty$ norm, so that it is enough to consider monomials. The identity 
    \begin{align}
        U_n^k-U^k=\sum_{j=0}^{k-1} U_n^j(U_n-U)U^{k-j-1}
    \end{align}
    valid for any $ k \in \mathbb{N} $, yields
    \begin{align}
        \|(U_n^k-U^k)\varphi \|\leq \sum_{j=0}^{k-1} \|U_n^j(U_n-U)U^{k-j-1}\varphi \| = \sum_{j=0}^{k-1} \|(U_n-U)(U^{k-j-1}\varphi) \|,
    \end{align}
    where each summand goes to zero by assumption.
\end{proof}

\begin{remark}
As will be explained in Section~\ref{sec:RSQW}, for random SQW
\eqref{eq:strTrans} applies in case $ E_L $ are sequences of balls and the action of boundary operators are restricted to the vicinity of the surface of these balls.
\end{remark}

\subsection{Relation to fractional moments}
The lower semicontinuity~\eqref{eq:ECconv} implies that an upper bound on the EC of a random unitary on a potentially infinite graph $(V,E) $ is given in terms of finite-volume approximations. 
In the following, we may hence restrict our attention on graphs with a finite edge set $ E $. In this case, any unitary has a discrete spectrum~\eqref{decUF}, and we may hence work with the interpolated ECs given by~\eqref{def:IEC}. The following is our first main result.

\begin{theorem}\label{thm:fmec}
    Consider a consistent family  $ \mathcal{F} $ of subsets for random unitaries  on a graph $(V, E) $ with a finite number of edges $ E $. Let $E \neq   B \in \mathcal{F}  $,  
    $f \in B$, and $ e \in B^c $ and assume:
    \begin{description}
    \item[1. Fractional moments:] for some $ s \in (0,1) $ and $C_{s} < \infty$ and all $ f' \in B$:\begin{equation} \label{eq: assu frac mom}
        \sup_{|z| < 1} \mathbb{E}\left[ |\langle f | (U^B - z)^{-1} f' \rangle|^{s} \right]\leq C_{s}.
        \end{equation}
    \item[2. Spectral averaging:] for some
     $C_W<\infty$ almost surely
\begin{equation} \label{eq: assu spec avg}
   \sup_{|z| < 1}\mathbb E\Big[\langle e \, | {\rm Re} \left( (U+z) (U-z)^{-1} \right) \, e \rangle \, \Big| \, \mathcal{B}(\mathbb{R}^{B})  \Big] \leq C_W ,
\end{equation}
where the expectation conditions on the sigma algebra associated with the edge set~$ B$. 
    \end{description}
   Then for any $ \beta \in (0,s)$ with $ \beta \leq 1 - \beta/s $ and all open $ I \in \mathbb{S}^1$:\begin{equation}\label{eq:ECFFM}
    \mathbb{E}\left[Q(e,f;I,\beta)\right] \leq C_W^{\frac{\beta}{s}} \sum_{f'\in B, e' \in E} t_{f'e'}^\beta \sup_{\delta \in (0,1) } \left\{ \int_I \mathbb{E}\left[ |\langle f | (U^B - \delta e^{i\theta})^{-1} f' \rangle|^{s} \right] \frac{d\theta}{2\pi}\right\}^{\frac{\beta}{s}},
    \end{equation}
    where $ t_{ef} := \sup_\omega |\langle e | 
    T_\omega^{E,B} f \rangle |$. 
\end{theorem}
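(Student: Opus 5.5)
The plan is to follow the Aizenman–Schenker strategy for self-adjoint operators, adapted to the unitary setting via the Carathéodory function. Write $z=\delta e^{i\theta}$ with $\delta\in(0,1)$, and let $P_\delta(\theta,\lambda)=(1-\delta^2)/|\lambda-\delta e^{i\theta}|^2$ denote the Poisson kernel. Since $E$ is finite, $U$ has the decomposition~\eqref{decUF}, and
$$\langle e|{\rm Re}\,((U+z)(U-z)^{-1})e\rangle=\sum_\alpha P_\delta(\theta,\lambda_\alpha)\langle e|P_\alpha e\rangle .$$

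\textbf{Step 1: a limiting integral representation of $Q(e,f;I,\beta)$.} I would first show that
$$Q(e,f;I,\beta)\le \liminf_{\delta\uparrow 1}\int_I (1-\delta^2)^{\beta}\,\big|\langle f|(U-\delta e^{i\theta})^{-1}e\rangle\big|^{\beta}\,\Big(\langle e|{\rm Re}\,((U+z)(U-z)^{-1})e\rangle\Big)^{1-\beta}\frac{d\theta}{2\pi}.$$
The reason is as follows. Near an eigenvalue $\lambda_\alpha=e^{i\theta_\alpha}\in I$, the first factor behaves like $(1-\delta^2)^\beta|\langle f|P_\alpha e\rangle|^\beta|\lambda_\alpha-z|^{-\beta}$ and the second like $(1-\delta^2)^{1-\beta}|\lambda_\alpha-z|^{-2(1-\beta)}\langle e|P_\alpha e\rangle^{1-\beta}$. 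The product is therefore $(1-\delta^2)|\lambda_\alpha-z|^{-(2-\beta)}$ times the correct weight, and its $\theta$-integral concentrates on the atom. If this exact scaling is off, I would adjust the power of $(1-\delta^2)$ or the splitting of exponents so that the atom contributions converge to $\langle e|P_\alpha e\rangle^{1-\beta}|\langle f|P_\alpha e\rangle|^\beta$, with the cross terms vanishing in the limit. This is a finite-dimensional computation. Fatou's lemma then allows the expectation to be moved inside the $\liminf$ and the $\theta$-integral.

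\textbf{Step 2: geometric resolvent identity.} Use the identity
$$(U-z)^{-1}=(U^B\oplus U^{B^c}-z)^{-1}-(U^B\oplus U^{B^c}-z)^{-1}T^{E,B}(U-z)^{-1}.$$
Since $f\in B$ and $e\in B^c$, the decoupled term contributes nothing. This gives
$$\langle f|(U-z)^{-1}e\rangle=-\sum_{f'\in B,\,e'\in E}\langle f|(U^B-z)^{-1}f'\rangle\,\langle f'|T^{E,B}e'\rangle\,\langle e'|(U-z)^{-1}e\rangle .$$
Because $\beta<1$, we have $|\sum a_i|^\beta\le\sum|a_i|^\beta$, which produces the factor $t_{f'e'}^\beta$. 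The remaining term $(1-\delta^2)^\beta|\langle e'|(U-z)^{-1}e\rangle|^\beta$ is bounded by Cauchy–Schwarz and the Poisson formula:
$$(1-\delta^2)|\langle e'|(U-z)^{-1}e\rangle|\le \big((1-\delta^2)\|(U-z)^{-1}e\|^2\big)^{1/2}(1-\delta^2)^{1/2}\|(U-z)^{-1}e'\|.$$
Absorbing the powers of $(1-\delta^2)$ carefully, the whole $e$-dependence can be reduced to powers of $W(z):=\langle e|{\rm Re}\,((U+z)(U-z)^{-1})e\rangle$.

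\textbf{Step 3: H\"older and conditional expectation.} Each summand now has the form $\mathbb{E}\big[|\langle f|(U^B-z)^{-1}f'\rangle|^\beta\, W(z)^{\gamma}\big]$ with some $\gamma\le 1$. Apply H\"older in $\omega$ with exponents $s/\beta$ and $(1-\beta/s)^{-1}$. The first factor is exactly the fractional moment $\mathbb{E}[|\cdot|^s]^{\beta/s}$. For the second factor, the hypothesis $\beta\le 1-\beta/s$ guarantees $\gamma/(1-\beta/s)\le 1$, so Jensen's inequality applies. Condition on $\mathcal{B}(\mathbb{R}^B)$: since $U^B$ is $\mathcal{B}(\mathbb{R}^B)$-measurable, this can also be done before H\"older, pulling the $U^B$ factor out. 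Then~\eqref{eq: assu spec avg} bounds the second factor by a power of $C_W$, and this power should come out as $C_W^{\beta/s}$. A second application of H\"older, now in $\theta$ over $I$, moves the power $\beta/s$ outside the $\theta$-integral. Finally, take $\sup_\delta$ to obtain~\eqref{eq:ECFFM}.

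\textbf{Main obstacle.} I expect the hardest part to be Steps 1–2: choosing the precise $\delta$-normalization in the integral representation so that, after the resolvent expansion, exactly one Poisson-type factor $W(z)$ remains with an exponent compatible with the condition $\beta\le1-\beta/s$. Getting this bookkeeping right is also what must yield the clean constant $C_W^{\beta/s}$. My first attempt would be to keep $W(z)^{1-\beta}$ together with the $e$-part of the split term, and then check which exponents close the argument.
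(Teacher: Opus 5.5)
\textbf{There is a genuine gap.} It sits exactly at the normalization you flag as the main obstacle, and it is an obstruction rather than bookkeeping.

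Put $\epsilon=1-\delta$ and let $\lambda_\alpha=e^{i\theta_\alpha}\in I$ be an eigenvalue carrying weight, so that $|\lambda_\alpha-z|^2\approx\epsilon^2+(\theta-\theta_\alpha)^2$.

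\emph{Step~1 is false as written.} As you say, your integrand is $\approx c\,\epsilon\,|\lambda_\alpha-z|^{-(2-\beta)}$ near $\theta_\alpha$. But the $\theta$-integral of this is $O(\epsilon^{\beta})$, so the right-hand side of Step~1 tends to $0$. Dropping $(1-\delta^2)^\beta$ repairs it: $\liminf_{\delta\uparrow1}\int_I|\langle f|(U-z)^{-1}e\rangle|^\beta W(z)^{1-\beta}\frac{d\theta}{2\pi}\ge c_\beta\,Q(e,f;I,\beta)$, with $c_\beta=\frac{2^{1-\beta}}{2\pi}\int_{\mathbb{R}}(1+u^2)^{\beta/2-1}\,du\neq1$. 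Your Step~2 resolvent identity is correct.

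\emph{With the repaired normalization, Steps~2--3 cannot close.} The $e$-factor left after Step~2 is $X(z)=|\langle e'|(U-z)^{-1}e\rangle|^\beta W(z)^{1-\beta}\approx c\,\epsilon^{1-\beta}|\lambda_\alpha-z|^{-(2-\beta)}$. It has a peak of height $\epsilon^{-1}$, but its tails are much fatter than those of $W\approx c\,\epsilon|\lambda_\alpha-z|^{-2}$. Three consequences follow.
\begin{itemize}
\item No bound $X\le C\,W^\gamma$ holds uniformly in $\delta$: the peak forces $\gamma\ge1$, while the region away from the spectrum forces $\gamma\le1-\beta$. Concretely, your Cauchy--Schwarz step now leaves a negative power of $1-\delta^2$.
\item H\"older in $\omega$ at fixed $z$ also fails. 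If the disorder smears $\theta_\alpha$ with bounded density, then $\mathbb{E}[X^q]\sim\int\epsilon^{(1-\beta)q}|\lambda_\alpha-z|^{-(2-\beta)q}\,d\theta\sim\epsilon^{1-q}$ with $q=s/(s-\beta)>1$. After integrating in $\theta$ you lose a factor $(1-\delta)^{-\beta/s}$.
\item Hypothesis~\eqref{eq: assu spec avg} is a first-moment bound on the spectral measure of $e$ alone. It controls neither $\mathbb{E}[W^{\gamma q}\mid\mathcal{B}(\mathbb{R}^B)]$ for $\gamma q>1$ nor the $W_{e'}$-factors your Cauchy--Schwarz produces.
\end{itemize}
With your extra $(1-\delta^2)^\beta$ one does get $X\le 2^\beta W$, and conditioning first would work. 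So the normalization that makes Steps~2--3 work is precisely the one that makes Step~1 false. Relatedly, $\beta\le1-\beta/s$ does not enter through a Jensen step on powers of $W$.

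\textbf{The missing idea} is to perform H\"older in the spectral variable rather than in $\omega$, so that one full power of the spectral measure of $e$ is paired with $|G^B|^s$.

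The paper works at the eigenprojection level. The identity $P^BP_\alpha=-(U^B-\delta\lambda_\alpha)^{-1}P^B\big(T-(1-\delta)\lambda_\alpha\big)P_\alpha$ expresses $\langle f|P_\alpha e\rangle$ through $\langle e'|P_\alpha e\rangle$, which are atoms with no tails.

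H\"older over $\alpha$ uses $p=s/\beta$, $q=s/(s-\beta)$ and the split $1-\beta=\frac1p+\frac{1-\beta q}{q}$. This collapses the $e'$-dependence into the deterministic factor $Q(e,e';I,\beta q)^{1/q}\le1$. This is exactly where $\beta q\le1$, i.e.\ $\beta\le1-\beta/s$, is used.

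What remains is $\big(\sum_{\lambda_\alpha\in I}\langle e|P_\alpha e\rangle\,|\langle f|(U^B-\delta\lambda_\alpha)^{-1}f'\rangle|^s\big)^{\beta/s}$. This is linear in $\mu_e$, so it equals a Poisson integral with exactly one power of $W$ (radius $r\to1$, independent of $\delta$). Fatou, Jensen for $x\mapsto x^{\beta/s}$ and conditional spectral averaging then give $C_W^{\beta/s}$. Finally, $\delta\to1$ removes the $(1-\delta)\lambda_\alpha$ correction.

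Your resolvent-level route can be partly rescued by doing H\"older in $\theta$ with $W^{1-\beta}=W^{\beta/s}\,W^{1-\beta-\beta/s}$. However, this produces the extra factor $c_\beta^{-1}c_{\beta q}^{1/q}$, which is in general $\neq1$ and diverges as $\beta q\to1$. So it does not yield~\eqref{eq:ECFFM} as stated.
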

In a fractional-moment (FM) localization analysis \cite{AizenmanMolchanov:1993}, the theorem is applied to cases where $ E $ is the given finite-volume subset on which the analysis is done. The subset $ B $ is chosen as a ball centered near $ f $ with radius half the distance $ d_E(e,f) $, see Figure~\ref{fig:ECcorr}. Note that the graph distance between two vertices $x,y\in V$ is $d(x,y)$ and as a distance $d_E(e,f)$ between two edges $e,f\in E$ we choose the maximal graph distance between their endpoints. The transition matrix elements $  t_{ef} $ are a worst-case estimate of coupling over the separating surface between $ B $ and its complement within $ E $.  As the difference of two unitaries, we have $ t_{ef} \leq 2 $. In this set-up, the summation in~\eqref{eq:ECFFM} is restricted to $ e' $ and $ f' $ on the boundary of $ B $. In case the FM in the RHS of~\eqref{eq:ECFFM} decays exponentially, this decay has to outweigh the growth of the size of the boundary of $ B $. The application to random SQW is explained in detail in Section~\ref{sec: main proof}. 

Theorem~\ref{thm:fmec} is an adaptation of an analogous result for random self-adjoint operators \cite{AW08}, whose proof is somewhat inspired by \cite{Aizenman06}. 

\begin{figure} [h!]
    \centering
    \includegraphics[scale=0.6]{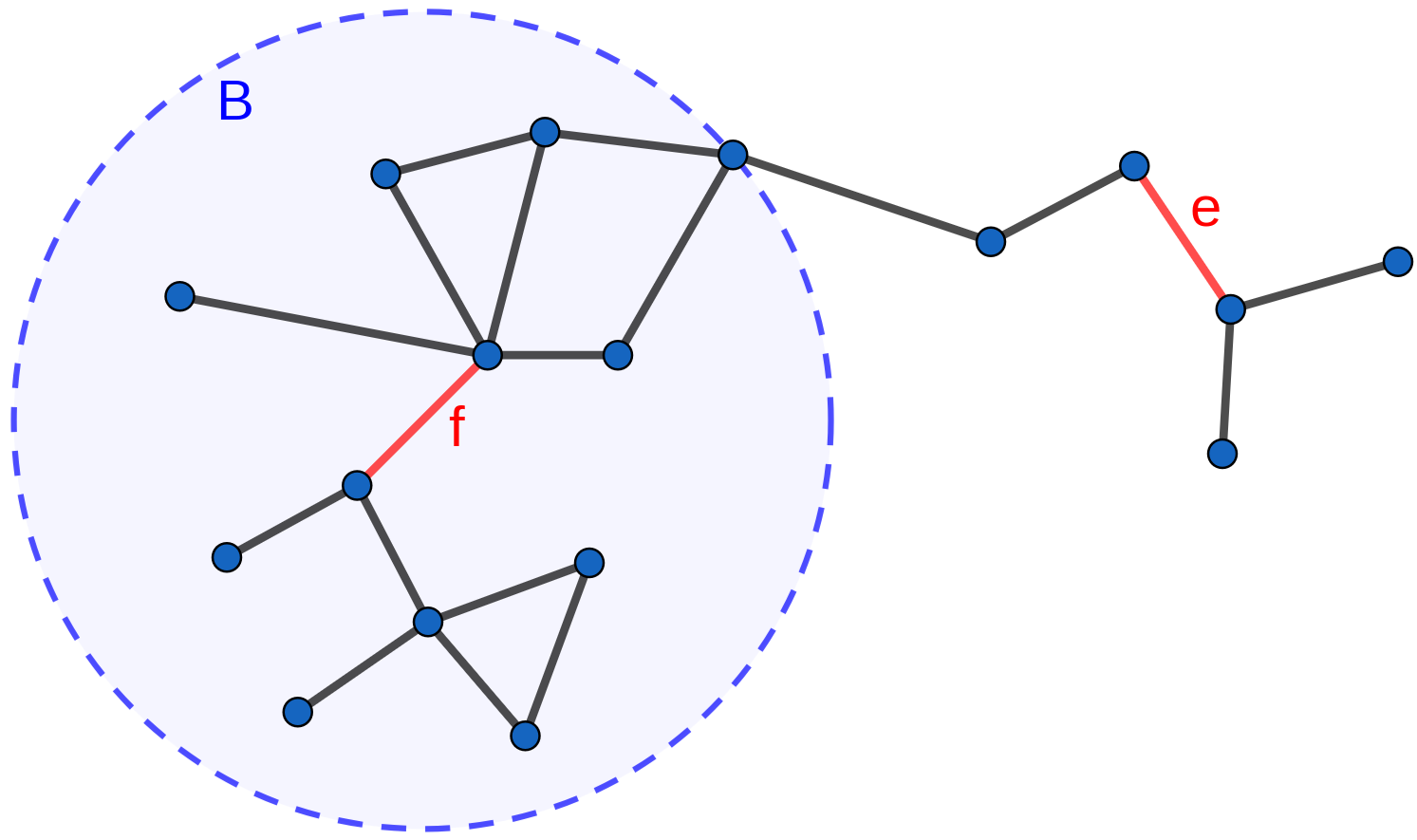}
    \caption{A ball $B$ centered at $f$ of radius $d_E(e,f)/2$.}
    \label{fig:ECcorr}
\end{figure}

\begin{proof}[Proof of Theorem~\ref{thm:fmec}]
    For simplicity, we drop the $\omega$ dependence from the notation in the proof, and we abbreviate the boundary operator by $ T\coloneqq T_\omega^{E,B} = U- U^B \oplus U^{B^c} $.
    Denoting by $P^B$ the projection onto $\ell^2(B)$, we have
\begin{align}
    P^B U = P^B \big( U^B \oplus U^{B^c} + T \big) = P^B U^B \oplus 0 + P^B T \equiv U^B + P^B T.
\end{align}
This allows us to get an expression for $P^BP_\alpha$  with $ P_\alpha $ the eigenprojection of $ U $ corresponding to eigenvalue $\lambda_\alpha\not\in \sigma(U^B)$, see \eqref{decUF}. To avoid dealing with this condition, we introduce a regularization parameter $0<\delta<1$ and write
\begin{align}\label{eq: PB Palpha}
     P^B P_\alpha& =\big( U^B - \lambda_\alpha\delta \big)^{-1} \big((1-\delta) \lambda_\alpha P^B P_\alpha - P^B T  P_\alpha \big)\nonumber\\
     &=-\big( U^B - \lambda_\alpha\delta \big)^{-1}P^B(T-(1-\delta) \lambda_\alpha )P_\alpha = -\big( U^B - \lambda_\alpha\delta \big)^{-1}P^B   T_\alpha^\delta P_\alpha , 
\end{align} 
where we introduced the abbreviation
\begin{align}\label{eq: Tad}
    T_\alpha^\delta\coloneqq T-(1-\delta)  \lambda_\alpha.
\end{align}
Spelling out the projector
$P^B = \sum_{f' \in B} \ket{f'} \bra{f'}$,
and taking  the  $f \in B $, $e\in B^c$  matrix element of
\eqref{eq: PB Palpha}, we conclude
\begin{align}
    \langle f|P_\alpha \,  e\rangle = \sum_{\substack{f' \in B, \, e' \in E}
    } - \langle f \, | \big( U^B - \delta\lambda_\alpha \big)^{-1} \, f' \rangle \, \langle f' \, | T_\alpha^\delta \, e' \rangle \, \langle e' | P_\alpha \,  e\rangle.
\end{align}
Inserting this into the definition~\eqref{def:IEC} and using that $ | a+b|^\beta \leq |a|^\beta + |b|^\beta $ since $ \beta \in (0,1] $, we arrive at
\begin{align}\label{www}
    Q(&e, f; I,\beta) = \sum_{\lambda_\alpha\in I} \langle e| P_\alpha e\rangle^{1-\beta} \, |\langle e| P_\alpha f\rangle|^{\beta} \\ \notag
    &\leq \sum_{\substack{f' \in B, \, e' \in E}} | \langle f' \, | T_\alpha^\delta \, e' \rangle \, |^\beta \sum_{\lambda_\alpha\in I} \langle e| P_\alpha \, e\rangle^{1-\beta}  \,  | \langle f \, | \big( U^B - \delta\lambda_\alpha \big)^{-1} \, f' \rangle|^\beta \,  | \, \langle e' | P_\alpha \,  e\rangle|^\beta.
\end{align}
We pick $p = \frac{s}{\beta} $ and $ q = \frac{s}{s-\beta}$ such that $q^{-1}+p^{-1}=1$, and write $1-\beta=p^{-1}+(1-\beta q)q^{-1}$. An application of  H\"older's inequality yields
\begin{align}
    &Q(e, f; I,\beta) 
    \leq \sum_{\substack{f' \in B, \, e' \in E}} | \langle f' \, | T_\alpha^\delta \, e' \rangle \, |^\beta  \, \Big( \sum_{\lambda_\alpha\in I} \langle e| P_\alpha \, e\rangle  \,  | \langle f \, | \big( U^B - \delta\lambda_\alpha \big)^{-1} \, f' \rangle|^{\beta p} \Big)^\frac{1}{p} \\ &\phantom{{xxxxxxxxxxxxxxxxxxxxxxxxxxxxx} } \times\Big( \sum_{\lambda_\alpha\in I} \langle e| P_\alpha \, e\rangle^{1-\beta q}  \,   | \, \langle e' | P_\alpha \,  e\rangle|^{\beta q} \Big)^\frac{1}{q} \nonumber\\
    &= \sum_{\substack{f' \in B, \, e' \in E}} | \langle f' \, | T_\alpha^\delta \, e' \rangle \, |^\beta   \, Q(e, e', \beta q) ^\frac{1}{q} \, 
    \Big( \sum_{\lambda_\alpha\in I} \langle e| P_\alpha \, e\rangle  \,  | \langle f \, | \big( U^B - \delta\lambda_\alpha \big)^{-1} \, f' \rangle|^{\beta p} \Big)^\frac{1}{p}. \nonumber
\end{align}
Since $\beta q = \frac{\beta s}{s-\beta}\leq 1$, we know $0\leq Q(e,e';I,\beta q) \leq 1$, which yields with $p=\frac{s}{\beta}$
\begin{align} \label{eq: 2nd simone}
    Q(&e, f; I,\beta)  \leq \sum_{\substack{f' \in B, \, e' \in E}} | \langle f' \, | T_\alpha^\delta \, e' \rangle \, |^\beta  \, 
    \Big( \sum_{\lambda_\alpha\in I} \langle e| P_\alpha \, e\rangle  \,  | \langle f \, | \big( U^B - \delta\lambda_\alpha \big)^{-1} \, f' \rangle|^{s} \Big)^\frac{\beta}{s}.
\end{align}
For any $F\in C(\mathbb{S}^1)$ by functional calculus
and using the integral representation as detailed in \cite[Sec. 5.2]{HJS:09}, we have
\begin{align} 
    \sum_{\lambda_\alpha \in I} &F(\lambda_\alpha) \,  \langle e| P_\alpha \, e\rangle = \langle e \, | F(U) P_I(U) \, e \rangle \nonumber \\ &=\lim_{r \to 1^-} \frac{1-r^2}{2 \pi} \, \int_I \langle e \, | (U - re^{i \theta} )^{-1} (U^{-1} - re^{-i \theta} )^{-1}  \, e \rangle F(e^{i \theta} ) d \theta .
\end{align}
Applying this to $F(z) = | \langle f \, | \big( U^B - \delta z \big)^{-1} \, f' \rangle |^{s}$ with $|z\delta|=\delta<1$ yields 
\begin{align} \label{eq: integral rep}
   &\sum_{\lambda_\alpha\in I} \langle e| P_\alpha \, e\rangle  \,  | \langle f \, | \big( U^B - \delta\lambda_\alpha \big)^{-1} \, f' \rangle|^{s}  \\ \nonumber  &= \lim_{r \to 1^-} \frac{1-r^2}{2 \pi} \, \int_I \langle e \, | (U - re^{i \theta} )^{-1} (U^{-1} - re^{-i \theta} )^{-1}\, e \rangle \, | \langle f \, | \big( U^B - \delta e^{i \theta} \big)^{-1} \, f' \rangle|^{s} \, d \theta .
\end{align}
For $|z|\neq 1$, functional calculus yields the identity
\begin{align}
     \text{Re} &\left( (U+z) (U-z)^{-1} \right) = (1-|z|^2) \, (U-z)^{-1} \, (U^{-1} - \bar{z})^{-1} .
\end{align}
Inserting \eqref{eq: integral rep} into \eqref{eq: 2nd simone}, using the computation above as well as the expression~\eqref{eq: Tad} and $0<\beta<1$, we obtain:
\begin{align}
    & Q(e, f; I,\beta)  \leq \sum_{\substack{f' \in B,\,  e' \in E}} \big(  t_{f'e'}^\beta +(1-\delta)^\beta \delta_{e' f'}\big) \,  \, \\ &\times \Big(\lim_{r \to 1^-} \frac{1}{2 \pi} \, \int_I \langle e \, | \text{Re} \left( (U+re^{i\theta}) (U-re^{i\theta})^{-1} \right) \, e \rangle \, | \langle f \, | \big( U^B - \delta e^{i\theta} \big)^{-1} \, f' \rangle |^{s} \, d \theta \Big)^\frac{\beta}{s}.
    \notag
\end{align}
We now take the expectation of the LHS. We then apply Jensen's inequality 
as well as Fatou's and Tonelli's lemmas to get
\begin{align}
    &\E  \big[ Q(e, f; I,\beta) \big] \leq \sum_{\substack{f' \in B, \, e' \in E}} \big(  t_{f'e'}^\beta +(1-\delta)^\beta \delta_{e' f'}\big) \,    \\ & \phantom{x}\times 
    \liminf_{r \to 1^-} \Bigg( \int_I  \E  \Big( \,\langle e \, | \text{Re} \left( (U+re^{i\theta}) (U-re^{i\theta})^{-1}\right) \, e \rangle \nonumber 
    | \langle f \, | \big( U^B - \delta e^{i\theta} \big)^{-1} \, f' \rangle |^{s} \Big)  \frac{d \theta}{2\pi}  \Bigg)^\frac{\beta}{s}.
\end{align}
By assumption the random variable $ \langle f \, | \big( U^B - \delta e^{i\theta} \big)^{-1} \, f' \rangle  $ with $ f , f' \in B $  only depends on the values of $ \omega $ projected to $ B $. Taking the conditional expectation, when conditioning on the latter first, we may use spectral averaging to bound
\begin{align}
    & \E  \Big( \,\langle e \, | \text{Re} \left( (U+re^{i\theta}) (U-re^{i\theta})^{-1}\right) \, e \rangle | \langle f \, | \big( U^B - \delta e^{i\theta} \big)^{-1} \, f' \rangle |^{s} \Big) \notag \\
    &= \E  \Big( \E  \big[\langle e \, | \text{Re} \left( (U+re^{i\theta}) (U-re^{i\theta})^{-1}\right) \, e \rangle | \mathcal{B}(\R^B) \big] \; | \langle f \, | \big( U^B - \delta e^{i\theta} \big)^{-1} \, f' \rangle |^{s} \Big)  \notag \\
    &\leq C_W \ \E  \Big( | \langle f \, | \big( U^B - \delta e^{i\theta} \big)^{-1} \, f' \rangle |^{s} \Big).
\end{align}
Using the finiteness of the fractional moment, we therefore arrive at
\begin{align}
    \E & \big[ Q(e, f; I,\beta) \big] \nonumber \\
    & \leq C_W^{\frac{\beta}{s}}\ \Bigg(\sum_{\substack{f' \in B,\, e' \in E}} |  t_{f'e'}  |^\beta \sup_{\delta\in (0,1)}\Big\{\E \Big[ 
    | \langle f \, | \big( U^B - \delta e^{i\theta} \big)^{-1} \, f' \rangle |^{s}\Big\}^{\frac{\beta}{s}} + C_{s}^{\frac{\beta}{s}}(1-\delta)^{\beta}|B|\Bigg). 
\end{align}
Since the LHS is independent of $\delta$, it remains to take the limit $\delta \to 1^-$ to make the last term of the RHS vanish, which yields the result.
\end{proof}

\section{Application to scattering quantum walks}
\subsection{Scattering quantum walks}\label{sec:SQW}

Following \cite{joye:2024},  scattering quantum walks (SQW) are formulated on the digraph $ (V,D) $ of an underlying graph $(V,E)$ in which the directed edges $ D $ arise from doubling the original edge set and associating with every neighboring pair $ x,y \in V $ in the original graph two directed edges $ (xy) $ and $(yx) $, see Figure~\ref{fig:digraph}.
We write $x \sim y$ whenever there is an edge connecting vertices $x$ and $y$. A directed edge from $x$ to $y$ is denoted by $(yx)$ and $ d_x \coloneqq | \{ y \in V \  \text{ s.t. } \ (yx ) \in D \} | $ counts the number of outgoing edges from $x$, which is the same as the number of incoming edges. \\
\begin{figure} [h!]
    \centering
    \includegraphics[scale=1]{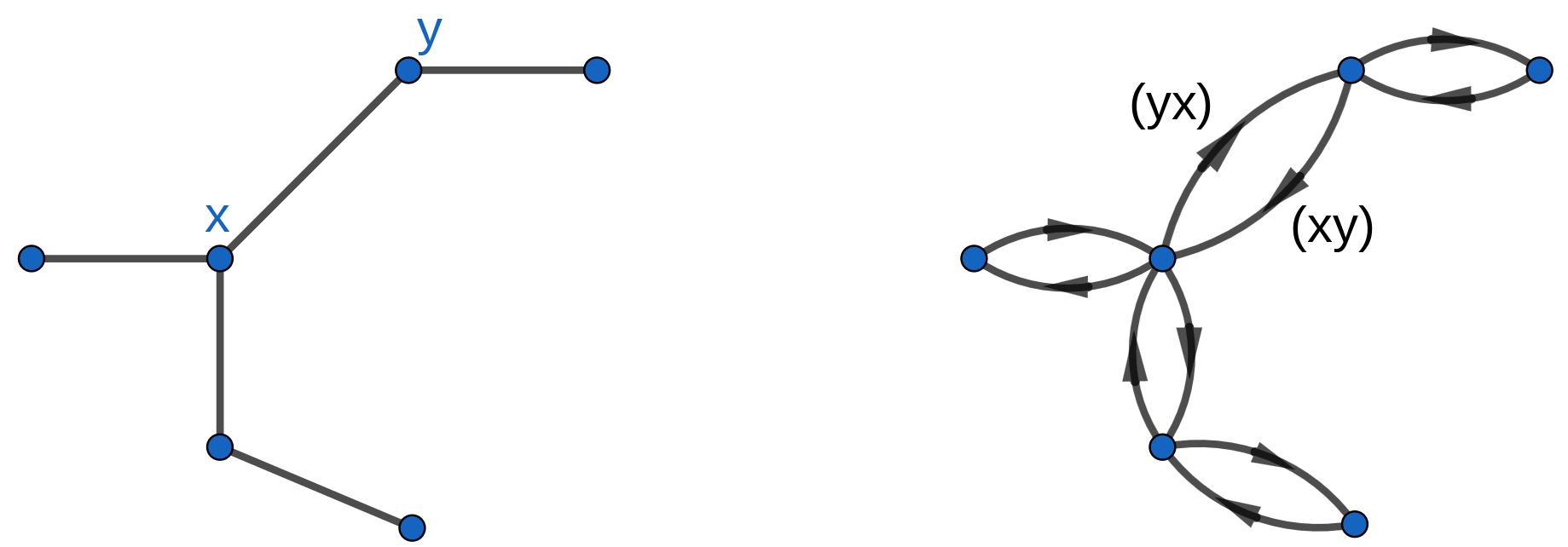}
    \caption{A graph $(V,E)$ and its digraph $(V,D)$}
    \label{fig:digraph}
\end{figure} \\
We will refer to $ d_x $ as the degree of a vertex $x \in V$ and assume throughout that the degree is uniformly bounded
\begin{align} \label{eq: assu bounded degree}
    \exists \,  d <\infty  \text{ s.t. } \forall \, x \in V: \, d_x \leq d .
\end{align}
We define the Hilbert space $\Hp = \ell^2(D)$ and associate to every directed edge $(yx)$ a canonical basis vector $\ket{yx}$.  
We define the subspaces of incoming and outgoing edges at vertex $x \in V$:
\begin{align*}
    \Hp_x^{\rm I} = \text{span} \big\{ \ket{xy} \, \text{ s.t. } y \sim x \big\} \;\; \text{and} \;\; \Hp_x^{\rm O} = \text{span} \big\{ \ket{yx} \, \text{ s.t. } y \sim x \big\}
\end{align*}
and note that
\begin{align*}
    \Hp = \bigoplus_{x \in V} \Hp_x^{\rm I} = \bigoplus_{x \in V} \Hp_x^{\rm O}.
\end{align*}
The orthogonal projections onto $\Hp_x^{\rm I}$ and $\Hp_x^{\rm O}$  are $P_x^{\rm I}$ and $P_x^{\rm O}$ respectively. 

To each vertex, we associate a unitary scattering matrix $S(x) \in \mathcal{U}(d_x)$ and label the entries $ S_{zy}(x) $ by $ z \sim x $ and $ y \sim x $. The family of scattering matrices is denoted by ${\mathcal S}=(S(x))_{x\in V}$. The action of the scattering quantum walk $U_\mathcal{S}$ on an element of the canonical ONB is then defined by
$
    U_\mathcal{S} \ket{xy} = \sum_{z \sim x} S_{zy}(x) \ket{zx} $, see figure \ref{fig:US_graph}.
    \begin{figure}[htbp]
\centering

\begin{tikzpicture}[scale=1.0]


\node[node] (x) at (0,1.5) {$x$};
\node[node] (y) at (-1.8,0.4) {$y$};
\node[node] (z) at (-0.3,-0.8) {$z$};
\node[node] (t) at (1.4,-0.2) {$t$};
\node[node] (u) at (3.2,0.7) {$u$};

\draw[purpleedge,bend left=15] (y) to (x);

\draw[edge,bend left=10] (x) to (y);

\draw[edge,bend left=10] (y) to (z);
\draw[edge,bend left=10] (z) to (y);

\draw[edge,bend left=12] (x) to (z);
\draw[edge,bend left=12] (z) to (x);

\draw[edge,bend left=10] (x) to (u);
\draw[edge,bend left=10] (u) to (x);

\draw[edge,bend left=10] (z) to (t);
\draw[edge,bend left=10] (t) to (z);

\draw[edge,bend left=12] (t) to (u);
\draw[edge,bend left=12] (u) to (t);

\node[color=red] at (-0.3,2.1) {$S(x)$};
\node[color=violet] at (-1.2,1.6) {$|xy\rangle$};


\node at (4,1.5) {$U_{\mathcal S}$};
\draw[->,thick] (3.8,1.1) -- (4.3,1.1);


\node[node] (x2) at (6.6,1.5) {$x$};
\node[node] (y2) at (4.8,0.4) {$y$};
\node[node] (z2) at (6.1,-0.8) {$z$};
\node[node] (t2) at (7.9,-0.2) {$t$};
\node[node] (u2) at (9.8,0.7) {$u$};

\draw[blueedge,bend left=15] (x2) to (y2);
\draw[blueedge,bend left=12] (x2) to (z2);
\draw[blueedge,bend left=12] (x2) to (u2);

\draw[edge,bend left=10] (y2) to (x2);
\draw[edge,bend left=10] (z2) to (x2);

\draw[edge,bend left=10] (u2) to (x2);

\draw[edge,bend left=10] (y2) to (z2);
\draw[edge,bend left=10] (z2) to (y2);

\draw[edge,bend left=10] (z2) to (t2);
\draw[edge,bend left=10] (t2) to (z2);

\draw[edge,bend left=12] (t2) to (u2);
\draw[edge,bend left=12] (u2) to (t2);

\node[color=red] at (6.3,2.1) {$S(x)$};
\node[color=blue] at (8.5,2.1)
{$\sum_{z\sim x} S_{zy}(x)\,|zx\rangle$};

\end{tikzpicture}

\caption{Action of $U_S$}
\label{fig:US_graph}

\end{figure}
In other words, in ket-bra notation we have
\begin{align*}
    U_\mathcal{S} \coloneqq \sum_{x \in V} \sum_{z \sim x} \sum_{y \sim x} S_{zy}(x) \ket{zx} \bra{xy}.
\end{align*}
The operator $U_\mathcal{S}$ is unitary on $ \mathcal{H} $ and, consistent with notions in scattering theory, it intertwines incoming and outgoing subspaces. 
\begin{align}\label{interOI}
    U_\mathcal{S} P_x^{\rm I} = P_x^{\rm O} U_\mathcal{S} .
\end{align} 
 For more details, see \cite{joye:2024}.

\subsection{Random scattering quantum walks}\label{sec:RSQW}
Applications of Theorem~\ref{thm:fmec} to random quantum walks, require enough randomness ensuring the finiteness of fractional moments and spectral averaging. For random SQWs, a rather limited amount of randomness is sufficient. 

Random SQWs are characterized by a deterministic set of scattering matrices  ${\mathcal S}=(S(x))_{x\in V}$ and a probability measure $\mu$ on the torus $\T$, which we assume to be absolutely continuous with bounded density: 
\begin{align}\label{eq: distrib}
d\mu(\theta)=\tau(\theta)d\theta \ \text{with} \ \tau\in L^\infty(\T).
\end{align}
We equip every $x \in V$ with a random variable $\omega_x\in \T$, such that the family $(\omega_x)_{x \in V}$ is independent and identically distributed (i.i.d.) according to $\mu$. We then define the random scattering matrices
    \begin{align}\label{eq: rand1S}
    	S_\omega(x)=e^{i\omega_x}S(x), \ \ \forall x\in V.
    \end{align}
and set the corresponding random operator on $\ell^2(D)$ describing the random SQW as
\begin{align}\label{eq:defrSQW}
	U_\omega \coloneqq U_{{\mathcal S_\omega}}.
\end{align}
By construction, this operator is a unitary of the form $
    U_\omega = D_\omega U_\mathcal{S} $ 
with random diagonal unitary operators acting as
\begin{align} \label{eq: rand1}
 D_\omega \ket{xy} = e^{i \omega_y} \ket{xy}  . 
\end{align}  
This structure is instrumental in the analysis of (de-)localization of random unitary operators, and it is inherited from the construction of the Chalker-Coddington model of condensed matter physics; see the discussions in \cite{Chalker:1988, KOK:2005, ABJ:2010, JM:2010, ABJ:2012}.

    Any random SQW \eqref{eq:defrSQW} with arbitrary underlying deterministic scattering matrices $ {\mathcal S}$ fits into the set-up defined in Section~\ref{sec:consfam}. To do so, one identifies $ (V,E) = (V,D) $ and realizes the randomness on the measure space $(\mathbb{R}^D, \mathcal{B}(\mathbb{R}^D)) $ associated with the edge set instead of the vertex set. This is achieved by setting 
    \begin{equation} \label{eq: randomness on edge}
        \omega_{xy} \coloneqq \omega_y \quad \mbox{for all $ x $ s.t.\ $(xy) \in D$.}
    \end{equation}
    This induces a probability measure, which we again denote by $\mathbb{P}$, on the measure space over the oriented edge set. 
    
    A consistent family $ \mathcal{F} $ of subsets for the random SQW, which fits the requirements of Definition~\ref{ass1}, consists of all subsets $ F \subset D $, which have the property that with an oriented edge $ (xy) \in F $ the reversely oriented edge also belongs to the set. 
    The unitary $ U^F$ from~\eqref{eq:finitevolUF} is then defined as follows. Given $ F \in \mathcal{F} $, its edge boundary of vertices and  the set of inner vertices is 
    \begin{align*}
    V_{\partial F}& \coloneqq \{ x \in V \ \mbox{s.t.} \  \exists y, z \in V : \; (xy) \in F \ \mbox{and} \ \ (zx) \notin F \} \\ 
    V_{F}^\circ & \coloneqq \{ x \in V \ \mbox{s.t.} \  \forall y \sim x : \; (xy) \in F \} .
    \end{align*}
    To define $ U^F$ we modify the scattering matrix on the edge boundary of $ F $ by setting 
    \begin{equation}
        S^F(x) \coloneqq \begin{cases} S(x) & x \in V_{F}^\circ , \\ \mathbbm{1}_{d_x^F}  & x \in V_{\partial F}  . \end{cases} 
    \end{equation}
    Here, the identity matrix in the last case has dimension $ d_x^F \coloneqq | \{ y \in V \ \mbox{s.t.}\ (yx) \in F \} | <d_x$.  
    This family of deterministic unitary scattering matrices  ${\mathcal S}^F=(S^F(x))_{x\in V_{\partial F} \cup V_{F}^\circ }$, which implement "reflecting boundary conditions" on $  V_{\partial F} $, then defines the unitary operator 
    \begin{equation}
        U^F_\omega \coloneqq D_\omega U_{{\mathcal S}^F} , 
    \end{equation}
    which acts on $ \ell^2(F) $ and satisfies all the requirements from~\eqref{eq:finitevolUF}. 
    \medskip

    To control the boundary operator induced by such boundary conditions, $T^{D,F}_\omega$, see \eqref{bdryT}, one may use \cite[Lemma 2.4]{joye:2024}. It guarantees that, 
given two sets of scattering matrices, $\mathcal{S} = \big( S(x) \big)_{x \in V}$ and $\mathcal{S}' = \big( S'(x) \big)_{x \in V}$,  together with the corresponding quantum walks $U_\mathcal{S}$ and $U_{\mathcal{S}'}$,  the natural uniform topology on those can be expressed in terms of the underlying scattering matrices, 
\begin{align} \label{eq: scat mat norm}
    \|U_\mathcal{S} - U_{\mathcal{S}'} \| = \sup_{x \in V} \|S(x) - S'(x) \|_{HS} =: \| \mathcal{S} - \mathcal{S}' \|,
\end{align}
where $\| \cdot \|_{HS}$ denotes the Hilbert-Schmidt norm of a matrix. We will use this in case one of those two sets is the identity on all vertices, $ \mathcal{I} \coloneqq \big( S(x) \big)_{x \in V} $ with $ S(x) = \mathbbm{1} $ for all $ x \in V $. 
In this case, the corresponding random SQW $\tilde{U}_\omega$ satisfies 
\begin{align} \label{eq: I reflection}
    \tilde{U}_\omega \ket{xy}=e^{i \omega_{x}}\ket{yx} 
\end{align}
for all $\ket{xy} \in \Hp $. It is hence trivially fully localized for any randomness. 
Less trivial and 
as an application of Theorem \ref{thm:fmec}, we show dynamical localization whenever the family of scattering matrices $\mathcal{S}$ is close enough to the fully localized case $\mathcal{I}$:
\begin{theorem} \label{thm: dyn loc SQW}
    Let $ U $ be a random SQW  on a digraph $ (V,D) $ with uniformly bounded degree as in~\eqref{eq: assu bounded degree} and random phases that are i.i.d.\ with bounded density as in \eqref{eq: distrib}. Then, for any $g>0$ there exist $\varphi, c > 0$ such that for any family of scattering matrices with $\| \mathcal{S} - \mathcal{I} \| \leq \varphi$, we have for all $\ket{xy}, \ket{x'y'} \in \Hp$ and $I \subset S^1$
    \begin{equation}\label{eq:dynloc}
        \E \Big( \sup_{n \in \Z} | \langle xy \, | U^n \, P_I(U) \, x'y' \rangle | \Big) \leq c \, e^{-g \, d(x,x')} .
    \end{equation}
\end{theorem}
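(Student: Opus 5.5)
The plan is to reduce \eqref{eq:dynloc} to a bound on eigenfunction correlators and then to verify the two hypotheses of Theorem~\ref{thm:fmec} for the random SQW. By \eqref{eq: loc from ec decay}, the left side of \eqref{eq:dynloc} is bounded by $\E[Q^D(xy,x'y';I)]$. Next we pick balls $E_L\in\mathcal F$, namely all directed edges with both endpoints within distance $L$ of $x$. Since $S^{E_L}$ differs from $S$ only on $V_{\partial E_L}$, which moves off to infinity, the boundary operator acting on any fixed basis vector vanishes once $L$ is large, so \eqref{eq:strTrans} holds. Proposition~\ref{propo: EC lower semicont} and Fatou's lemma then reduce the task to bounding $\E[Q^{E}(e,f;I)]$ for finite $E=E_L$, uniformly in $L$. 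Property 3 of the interpolated correlators gives $Q\le\sqrt{Q(e,f;\beta)Q(f,e;\beta)}$, and Cauchy--Schwarz reduces this to bounding $\E[Q(e,f;I,\beta)]$ for a small fixed $\beta$, in both orders. That is precisely the content of Theorem~\ref{thm:fmec}.

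For the hypotheses: set $e=\ket{xy}$ and $f=\ket{x'y'}$, and let $B\in\mathcal F$ be the edges inside a ball around $x'$ of radius about $d(x,x')/2$, intersected with $E$. Take $t_{f'e'}\le 2$, and note it is nonzero only for $f'\in B$ and $e'$ adjacent to the boundary vertices of $B$.

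\emph{Spectral averaging} \eqref{eq: assu spec avg}: conditioned on $\mathcal B(\R^B)$, the phase $\omega_y$ at the tail vertex $y$ of $e$ is still free, since $y$ lies outside $B$ after a harmless enlargement of the radii. Now $U=D_\omega U_{\mathcal S}$, and $e^{i\omega_y}$ multiplies all outgoing edges $P^{\rm O}_y$. Writing $U_\omega=e^{i\omega_y P^{\rm O}_y}\,\cdot(\text{rest})$, the averaging over $\omega_y$ should be handled as in the rank-$d_y$ spectral averaging for unitaries of \cite{HJS:09}. The function $\langle e|\mathrm{Re}((U+z)(U-z)^{-1})e\rangle$ is a Poisson-type (Carathéodory) function, and integrating it against a density $\tau\in L^\infty$ yields at most $2\pi\|\tau\|_\infty$ times the average over the uniform measure. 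Since $e$ is a single vector in the range of $P^{\rm O}_y$, the uniform average over the circle of the Carathéodory function equals $1$, which gives $C_W=\|\tau\|_\infty$ (up to $2\pi$).

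\emph{Fractional moments}: this step I expect to be the main obstacle. We must show both that $\sup_{|z|<1}\E|\langle f|(U^B-z)^{-1}f'\rangle|^s\le C_s$ and that these moments decay exponentially, $\le C e^{-\mu\, d(f,f')}$ with $\mu$ large when $\varphi$ is small. The plan is a perturbative argument around $\mathcal I$. Write $U^B=\tilde U^B+K$, where $\tilde U^B$ is the trivial reflection walk \eqref{eq: I reflection} and $\|K\|\le\varphi$ by \eqref{eq: scat mat norm}. The operator $\tilde U^B$ decomposes into $2\times2$ blocks on the pairs $\{\ket{uv},\ket{vu}\}$, and its resolvent is explicit and random through $\omega_u,\omega_v$. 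Expanding with the resolvent identity $G=\tilde G-\tilde G K G$ and using that $K$ has range between adjacent edges, we iterate. At each step the a priori bound $\E[|G_{ab}|^s\mid\text{rest}]\le C$ is obtained by averaging over a single phase, using the rank-$d_y$ structure and the boundedness of $\tau$; this is the unitary analogue of the Aizenman--Molchanov decoupling lemma \cite{AizenmanMolchanov:1993,HJS:09}. Each iteration then gains a factor $C\varphi^s$, and the result is a subharmonic inequality
\[
\E|G_{f f'}|^s\le C\varphi^s\sum_{f''\sim f}\E|G_{f''f'}|^s ,
\]
which iterates to the decay $(C d\varphi^s)^{d(f,f')}$.

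Finally, we insert these into \eqref{eq:ECFFM}. The number of boundary terms is at most $d^{\,c\,d(x,x')}$ by bounded degree, while the moments decay like $(Cd\varphi^s)^{d(x,x')/2}$. Choosing $\varphi$ small in terms of $g$, $d$, $\beta$ and $s$ makes the decay beat the boundary growth, with rate $g$ after the square root of property 3. The estimate is uniform in $L$ and $I$, which concludes the argument.
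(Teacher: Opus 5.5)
Your reduction matches the paper and is sound: \eqref{eq: loc from ec decay}, Proposition~\ref{propo: EC lower semicont} with Fatou, property~3 plus Cauchy--Schwarz in both orders (handled more explicitly than in the paper), and Theorem~\ref{thm:fmec} with a ball of radius $\approx d(x,x')/2$. Your spectral-averaging claim is also correct, though you only assert it: for $e=\ket{xy}\in\mathrm{Ran}\,P^{\rm O}_y$ the uniform $\omega_y$-average of $\langle e|\mathrm{Re}((U+z)(U-z)^{-1})e\rangle$ is exactly $1$. In the notation of the proof of Theorem~\ref{thm: spec avg} this follows from $(\widehat F+1)^{-1}+(\widehat F^{-1}+1)^{-1}=1$; the paper proves only a bound.

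\textbf{The gap is in the fractional-moment decay, where you expected it.} From $G=\tilde G-\tilde GKG$ and $\tilde G_{ff'}=0$ you get $\E|G_{ff'}|^s\le\varphi^s\sum_{a,b}\E\big[|\tilde G_{fa}|^s|G_{bf'}|^s\big]$, with $f=\ket{uv}$, $a\in\{\ket{uv},\ket{vu}\}$ and $b$ incoming at $u$ or $v$. The factor $|\tilde G_{fa}|\le|e^{i(\omega_u+\omega_v)}-z^2|^{-1}$ is singular, and $G_{bf'}$ depends on the same phases. Your subharmonic inequality therefore needs a decoupling estimate $\E[|\tilde G_{fa}|^s|G_{bf'}|^s]\lesssim\E|G_{\cdot f'}|^s$.

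The Aizenman--Molchanov decoupling lemma you invoke rests on the entry depending on the variable through a rank-one M\"obius formula. Here $\omega_u$ enters via $e^{i\omega_u}P^{\rm O}_u$, of rank $d_u$, and the singularity involves $\omega_u+\omega_v$. So no such lemma is available; this is exactly the minimal-randomness obstruction the paper stresses. An a priori bound is not a substitute: H\"older plus a priori bounds give only $\E|G_{ff'}|^s\le C\varphi^s$, with no decay in distance, so nothing iterates. Also, averaging over a single phase does not bound an off-diagonal entry uniformly. You need the phases at the tails of both edges, i.e.\ the rank-two argument of Theorem~\ref{thm: frac mom bound}.

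\textbf{The missing idea is resampling, as in the paper's appendix.}
\begin{enumerate}
\item For fixed $(a,b)$, let $w$ be the tail of $b$ and put $J=\{u,v,w\}$. Replace $\omega_r$, $r\in J$, by independent copies $\hat\omega_r$ with law $\mu$, and let $\hat G$ be the resulting resolvent.
\item Then $G=\hat G-G\hat D U_{\mathcal S}\hat G$ with $\hat D=\sum_{r\in J}(e^{i\omega_r}-e^{i\hat\omega_r})P^{\rm O}_r$. Moreover $|\langle p|U_{\mathcal S}\hat G f'\rangle|=|z|\,|\hat G_{pf'}|$ for $p$ outgoing at $J$ and $p\neq f'$.
\item Since $\hat G$ does not depend on $\omega_J$, condition on $\omega_{J^c},\hat\omega_J$. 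Cauchy--Schwarz with $2s<1$ and the two-tail a priori bound for $G_{bp}$ then give $\E[|\tilde G_{fa}|^s|G_{bf'}|^s]\le C\,\E|G_{bf'}|^s+C\sum_p\E|G_{pf'}|^s$.
\item Hence $\E|G_{ff'}|^s\le C\varphi^s\sum_{f''}\E|G_{f''f'}|^s$, with $f''$ in a bounded neighbourhood of $f$. This iterates to decay at a rate of order $s|\log\varphi|$.
\end{enumerate}

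With this repair your route is a genuine alternative to the paper's. The paper runs a finite-volume criterion instead. It gets smallness of $\E|R^{B_n}|^s$ from the spectral-gap estimate for $\tilde U^{B_n}$ (Propositions~\ref{propo: spect gaps fully loc} and~\ref{propo: poly decay resolvent}). It then uses a two-sided geometric expansion over $\partial B_n$, $\partial B_{n+1}$ with two resampling steps (Proposition~\ref{propo: max on boundary}), which requires $s<1/3$. Your one-step version is shorter, needs only $s<1/2$, and covers every $|z|<1$ at once.
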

We prove Theorem \ref{thm: dyn loc SQW} using Theorem \ref{thm:fmec}: Sections \ref{sec:fmb} and \ref{sec: spec avg} are devoted to proving the fractional moments and spectral averaging conditions, while we obtain exponential decay of the fractional moments in section \ref{sec: loc proof}. We conclude in Section \ref{sec: main proof}, proving dynamical localization for random SQWs as stated in Theorem \ref{thm: dyn loc SQW}. 
\begin{remark}
    We have equipped every edge $(xy)$ with a random phase $\omega_y$ that depends only on the outgoing vertex $y$, see \eqref{eq: randomness on edge}. If we allow the random phase $\omega_{xy}$ of $(xy)$ to also depend on the incoming vertex $x$, we would not need the new results developed in Section \ref{sec:EC}. Instead, exponential decay of the fractional moments immediately implies dynamical localization using the rank-one analysis in \cite{HJS:09}.
\end{remark}

\subsection{Fractional moment bounds} \label{sec:fmb}
To demonstrate the utility of Theorem~\ref{thm:fmec}, we prove the boundedness of the FM and subsequently a related spectral averaging result in the above set-up of random SQW with any distribution satisfying \eqref{eq: distrib}. 
We formulate both results for the random SQW on an arbitrary digraph. In applications, this digraph may already be a subgraph of a bigger graph. 
The proof of the following bound traces the steps of \cite[Thm. 2]{Schaefer:2025}, which in turn is based on \cite[Thm. 3.1]{HJS:09}. 
\begin{theorem} \label{thm: frac mom bound}
    Given a random SQW $ U $ on a digraph $ (V,D) $ with uniformly bounded degree and $ \mu $ satisfying  \eqref{eq: distrib}, 
    for every $s \in (0,1)$ there exists $C(s) < \infty$ such that
    \begin{align} \label{eq: fract mom bound}
      \sup_{|z| \neq 1} \int_\T \int_\T | \langle yx \, | (U_\omega - z )^{-1} y'x' \rangle |^s \; d\mu( \omega_{x}) \, d\mu(\omega_{x'}) \leq C(s)
    \end{align}
    for all $(yx), (y'x') \in D$ and arbitrary values of all other random variables $\omega_{v}$, $v\notin \{x, x'\}$. 
\end{theorem}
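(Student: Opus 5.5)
The bound will follow from a rank-one type decomposition of the resolvent with respect to the random phases $\omega_x$ and $\omega_{x'}$, combined with the standard fact that a fractional moment of a Möbius-type function of a phase with bounded density is uniformly bounded. We write $U_\omega = D_\omega U_{\mathcal S}$, where $\omega_x$ enters only through the factor $e^{i\omega_x}$ multiplying the vectors $|zx\rangle$, $z\sim x$, that span $\Hp_x^{\rm O}$. Thus $U_\omega = (e^{i\omega_x}P_x^{\rm O} + (1-P_x^{\rm O}))\hat U$, with $\hat U$ independent of $\omega_x$. By \eqref{interOI}, this equals $\hat U(e^{i\omega_x}P_x^{\rm I} + (1-P_x^{\rm I}))$. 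The phase thus multiplies a finite-rank (rank at most $d$) spectral block by a single scalar. We first reduce to $|z|<1$: for $|z|>1$ we use $(U-z)^{-1} = -z^{-1}U^{-1}(U^{-1}-z^{-1})^{-1}$. Here $U^{-1}=U_{\mathcal S}^* D_\omega^*$ is a random SQW on the reversed structure with phases $-\omega$, and its matrix elements differ by a bounded unitary factor involving at most $d$ terms; using $|a+b|^s\le|a|^s+|b|^s$, this reduces to the same estimate.

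For $|z|<1$, we set $\eta=e^{i\omega_x}$ and write $U_\omega - z = \hat U - z + (\eta-1)\hat U P_x^{\rm I}$. The resolvent formula (Krein/Woodbury) gives
\begin{equation}
(U_\omega-z)^{-1} = \hat R - (\eta-1)\hat R \hat U P_x^{\rm I}\bigl(1+(\eta-1)P_x^{\rm I}\hat R\hat U P_x^{\rm I}\bigr)^{-1}P_x^{\rm I}\hat R, \qquad \hat R=(\hat U-z)^{-1}.
\end{equation}
Since $\hat R$ may be singular on the circle, we would instead use the Cayley-type structure of \cite[Thm.~3.1]{HJS:09}. We would express the $d_x\times d_x$ block $P^{\rm I}_x(U_\omega-z)^{-1}P^{\rm O}_x$ as $\bigl(\eta^{-1}A - \ldots\bigr)^{-1}$, a matrix Möbius function $(\eta^{-1} - M)^{-1}$ in $\eta$, where $M$ is a $d_x\times d_x$ matrix independent of $\omega_x$. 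The entries of $(U_\omega-z)^{-1}$ connecting $(yx)$ to $(y'x')$ then factor through such blocks at $x$ and at $x'$. Concretely, we decompose twice, first in $\omega_x$ and then in $\omega_{x'}$. This writes $\langle yx|(U_\omega-z)^{-1}y'x'\rangle$ as a sum of at most $O(d^2)$ terms, each a product of a bounded coefficient (bounded using $\|\cdot\|\le$ const after rearranging, as in \cite{Schaefer:2025}) and entries of $(\eta^{-1}-M)^{-1}$ and $(\eta'^{-1}-M')^{-1}$. For the coefficient bounds we exploit unitarity: for $|z|<1$ one has $\mathrm{Re}\bigl((U+z)(U-z)^{-1}\bigr)\ge0$, so the relevant matrix $M$ has the form of a contraction (dissipative structure). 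Consequently the blocks of the type $\hat U P\hat R$ appearing as coefficients are bounded once the singular part is isolated.

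The key lemma is the following. For any $d\times d$ matrix $M$, any $s\in(0,1)$, and bounded density $\tau$,
\begin{equation}
\int_\T \|(e^{-i\theta}-M)^{-1}\|^{s}\,\tau(\theta)\,d\theta \le C(s,d)\|\tau\|_\infty .
\end{equation}
To prove it, we use $\|(\eta^{-1}-M)^{-1}\|\le \|\mathrm{adj}\|/|\det(\eta^{-1}-M)|$ together with $\det(\eta^{-1}-M)=\prod_j(\eta^{-1}-\lambda_j)$. This requires controlling the adjugate relative to the determinant uniformly in $M$, which we do not expect to hold in general. So we would instead follow \cite{HJS:09}: we use the unitary/dissipative structure to write the block as $(\eta^{-1} - M)^{-1}$ with $M$ a strict contraction, and then reduce via a Schur-complement/rank-one step to the scalar estimate $\int |e^{-i\theta}-\lambda|^{-s}\tau(\theta)d\theta\le C_s\|\tau\|_\infty$, valid uniformly in $\lambda\in\C$ by weak-$L^1$ bounds. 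When $x=x'$ only one integration is needed, and the second is trivial since $\mu$ is a probability measure. When $x\ne x'$, we integrate iteratively, conditioning first on $\omega_{x'}$ and then on $\omega_x$, using that the coefficient in front depends boundedly on the other variable.

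We expect the main obstacle to be making the matrix-valued (rank $d_x$, not rank one) Möbius bound uniform in $M$ and in $z$. The single phase acts on a $d_x$-dimensional block, so the determinant gives up to $d_x$ poles on the circle. The natural route is to bound $\|(\eta^{-1}-M)^{-1}\|^s$ by $\sum$ over eigen-directions via a Schur-triangular form. This is exactly where the dissipativity coming from $\mathrm{Re}\,(U+z)(U-z)^{-1}\ge0$ must be used to keep the off-diagonal Schur coefficients controlled, as done in \cite[Thm.~2]{Schaefer:2025}.
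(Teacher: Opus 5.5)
\textbf{There is a genuine gap, in the off-diagonal case $x\neq x'$.} You decouple the two phases one at a time (a M\"obius block in $e^{i\omega_x}$ on $\mathcal{H}_x^{\rm I}$, then one in $e^{i\omega_{x'}}$). You then close by iterated integration ``using that the coefficient in front depends boundedly on the other variable''. That claim is false.

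A single-phase block only controls matrix elements between vectors in its own range. This is why it suffices for $x=x'$: there $\langle yx|G\,y'x\rangle=\langle U^*yx|G\,U^*y'x\rangle$ for $G$ a function of $U$, with both vectors in $\mathcal{H}_x^{\rm I}$. To reach $(y'x')$, your Krein/Woodbury formula produces coefficients such as $\langle a|\widehat R\,y'x'\rangle$ with $a\in\mathcal{H}_x^{\rm I}$. Decomposing these in $\omega_{x'}$ leaves coefficients that are resolvent entries of the walk with both phases frozen. They carry no randomness left to average, and they are of the order of the inverse distance from $z$ to the spectrum of that frozen walk, so they blow up as $|z|\to 1$. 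The positivity of ${\rm Re}\,(U+z)(U-z)^{-1}$ gives no control on them. Already in the scalar analogue $G_{12}=-A_{12}/[(v_1-A_{11})(v_2-A_{22})-A_{12}A_{21}]$, iterated integration only yields $C|A_{12}|^s$.

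Your key lemma has a related defect. For general $M$ it is false: a nilpotent $2\times 2$ matrix with off-diagonal entry $K$ gives $\|(e^{-i\theta}-M)^{-1}\|\ge K$ for all $\theta$. The contraction version holds, but it is in substance the uniform weak-$L^1$ bound for resolvents of maximally dissipative matrices, transported by a Cayley transform. The Schur-triangular reduction you sketch for it is only announced, not carried out.

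\textbf{The missing idea is the paper's centre-of-mass change of variables} $\alpha=\frac12(\omega_x+\omega_{x'})$, $\beta=\frac12(\omega_x-\omega_{x'})$. Absorbing $\beta$ into $V_\omega=D_\beta\widehat D_\omega U_{\mathcal S}$ gives $U_\omega=e^{i\alpha}V_\omega P+V_\omega(1-P)$. Here $P$ is the projection onto $\mathcal{H}_x^{\rm I}\oplus\mathcal{H}_{x'}^{\rm I}$, of dimension at most $2d$, so one scalar phase now acts on both sites at once. After passing from $(U_\omega-z)^{-1}$ to $(U_\omega+z)(U_\omega-z)^{-1}$, the off-diagonal element becomes $\langle V_\omega^*yx|F_zV_\omega^*y'x'\rangle$. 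This is a matrix element of the single function $F_z=-i(-i\widehat F_z+m(\alpha))^{-1}-i(-i\widehat F_z^{-1}-m(\alpha)^{-1})^{-1}$, with $m(\alpha)=-\cot(\alpha/2)$ and $\widehat F_z$ independent of $\alpha$.

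Since $\widehat F_z+\widehat F_z^*=2(1-|z|^2)P(V_\omega-z)^{-1}(V_\omega^*-\bar z)^{-1}P$, the matrices $-i\widehat F_z,-i\widehat F_z^{-1}$ (for $|z|>1$), resp.\ $i\widehat F_z,i\widehat F_z^{-1}$ (for $|z|<1$), are maximally dissipative. Both regimes are therefore handled alike. The substitution $t=m(\alpha)$ reduces everything to $\int_n^{n+1}\|(A+t)^{-1}\|^s\,dt\le C(s)$, uniformly over maximally dissipative $A$ on $\C^N$ with $N\le 2d$, summed against the weight $2/(1+t^2)$. The variable $\beta$ is integrated trivially, at the cost of $\|\tau\|_\infty^2$.

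An iterated scheme could be rescued only after first forming the joint Cayley-transformed contraction on $\mathcal{H}_x^{\rm I}\oplus\mathcal{H}_{x'}^{\rm I}$. Its Schur complement is again a contraction, but the prefactor is merely $L^s$-integrable, not bounded.

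Finally, a smaller point: as written, your reduction of $|z|>1$ to $|z|<1$ via the edge-reversed walk $U^{-1}$ moves the phases to be averaged away from $x,x'$. You would first have to replace $|yx\rangle,|y'x'\rangle$ by $U^*|yx\rangle\in\mathcal{H}_x^{\rm I}$ and $U^*|y'x'\rangle\in\mathcal{H}_{x'}^{\rm I}$.
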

\begin{proof}
    The bound \eqref{eq: fract mom bound} is trivial when $z$ is away from the unit circle:
    \begin{align*}
        | \langle yx \, | (U_\omega - z )^{-1} \, y'x' \rangle | \leq \|(U_\omega - z )^{-1} \| = \frac{1}{\text{dist}(z, \sigma(U_\omega))} \leq \frac{1}{| 1 - |z| \, |}. 
    \end{align*}
    We may thus assume without loss of generality, that $1/2 \leq |z| \leq 3/2$ and $|z| \neq 1$. Furthermore, for $z\neq 0$ it holds that
    \begin{align}\label{eq:zneq0}
        \frac{1}{2z} \, \Big( (U_\omega + z) (U_\omega - z)^{-1} - 1 \Big) = (U_\omega - z)^{-1}.
    \end{align}
    To prove \eqref{eq: fract mom bound} it thus suffices to show the similar bound on $(U_\omega + z)(U_\omega-z)^{-1}$:
    \begin{align} \label{eq: bound on U plus z U minus z}
        \int_\T \int_\T | \langle yx \, | (U_\omega+z) \, (U_\omega - z )^{-1} \, y'x' \rangle |^s \; d\mu( \omega_{x}) \, d\mu(\omega_{x'}) \leq C(s).
    \end{align}
    We assume for now that $x \neq x'$, while the simpler case $x = x'$ will be dealt with at the end. Following the steps of \cite{Schaefer:2025} and \cite{HJS:09}, we introduce the change of variables
    \begin{align*}
        \alpha = \frac{1}{2} \big( \omega_{x} + \omega_{x'} \big) \;\;\; \text{and} \;\;\; \beta = \frac{1}{2} \big( \omega_{x} - \omega_{x'} \big),
    \end{align*}
    and the unitary operators:
    \begin{align*}
        D_\alpha \ket{vu} &= \begin{cases}
            e^{i \alpha} \, \ket{vu} & \text{if } u \in \{ x, x' \} \\
            \ket{vu} & \text{else}
        \end{cases}, \;\;
        D_\beta \ket{vu} = \begin{cases}
            e^{i \beta} \, \ket{vu} & \text{if } u = x\\
            e^{-i \beta} \, \ket{vu} & \text{if } u = x' \\
            \ket{vu} & \text{else}
        \end{cases} \\ \widehat{D}_\omega \ket{vu} &= \begin{cases}
            \ket{vu} & \text{if } u \in \{ x, x' \} \\
            e^{i \omega_{u}} \, \ket{vu} & \text{else}
        \end{cases}.
    \end{align*}
    By definition, we have $D_\omega = D_\alpha \, D_\beta \, \widehat{D}_\omega$. We define the unitary operator $V_\omega = D_\beta \, \widehat{D} \, U_\mathcal{S}$, and denote by $P$ the projection onto the subspace
    \begin{align}\label{Pproj}
        \text{Ran}(P) = \text{span} \left\{ V_\omega^{-1} \ket{vx}, V_\omega^{-1} \ket{v'x'}, \, v\sim x, v' \sim x' \right\}.
    \end{align}
    We easily verify that $(U_\omega - V_\omega) \, (1 - P) = 0$ and $V_\omega^{-1} \, U_\omega P = e^{i \alpha} P$ and therefore conclude
    \begin{align} \label{eq: fmb estimate 1}
        U_\omega = U_\omega \, P + U_\omega \, (1-P) = e^{i \alpha} \, V_\omega \, P + V_\omega \, (1-P).
    \end{align}
  For $|z| \neq 1$,  we define the operators
    \begin{align}\label{defFF}
        F_z = P (U_\omega + z) (U_\omega -z)^{-1} P \text{  and  } \widehat{F}_z = P (V_\omega + z) (V_\omega -z)^{-1} P,
    \end{align}
    viewing them as operators on the range of $P$. Using that $[V_\omega +z, (V_\omega -z)^{-1}] =0$ and $\big( (V_\omega - z)^{-1} \big)^* = (V_\omega^* - \overline{z})^{-1}$, we compute
    \begin{align}\label{FpFstar}
    \begin{split}
        \widehat{F}_z + \widehat{F}_z^* = P \Big( 2\big( I - |z|^2 \big) (V_\omega -z)^{-1} \, \big( (V_\omega -z)^{-1} \big)^* \Big) P.
    \end{split}
    \end{align}
    Assuming that $|z| > 1$, we conclude that $\widehat{F}_z + \widehat{F}_z^* < 0 $ is invertible. Thus, $-i \widehat{F}_z$ is a dissipative operator if $|z| > 1$, where we recall that an operator $A$ is dissipative iff its imaginary part $\frac{1}{2i} (A-A^*)$ is positive. Using $\widehat{F}_z^{-1} = \widehat{F}_{-z}$, we conclude that $-i \widehat{F}_z^{-1}$ is also dissipative for $|z| > 1$. Similarly, we obtain that $i \widehat{F}_z$ and $i \widehat{F}_z^{-1}$ are dissipative for $|z| < 1$. Employing the resolvent identity and \eqref{eq: fmb estimate 1}, we compute
    \begin{align*}
        F_z - \widehat{F}_z = P \Big( 2z (V_\omega - z)^{-1} (e^{i \alpha}-1) V_\omega \, P \, (U_\omega -z)^{-1} \Big) P.
    \end{align*}
    We use the definitions of $F_z$ and $\widehat{F}_z$ to obtain on the range of $P$:
    \begin{align*}
        F_z - \widehat{F}_z = \frac{1}{2} \, (e^{i \alpha} -1) \, (1 + \widehat{F}_z) \, (1 - F_z). 
    \end{align*}
    Following the steps from \cite{HJS:09}, we define $m(\alpha) = i \frac{1+e^{i \alpha}}{1-e^{i \alpha}} = - \cot \big(\frac{\alpha}{2}\big)$ for $\alpha \notin \pi \Z$ and obtain from a direct computation:
    \begin{align} \label{eq: fmb estimate 2}
        F_z = -i \Big( -i \widehat{F}_z + m(\alpha) \Big)^{-1} -i \Big( -i \widehat{F}_z^{-1} - \frac{1}{m(\alpha)} \Big)^{-1}.
    \end{align}
    Using $V_\omega^{-1} \, U_\omega P = e^{i \alpha} P$ and unitarity of $V_\omega$, we obtain:
    \begin{align*}
        \langle yx \, | \, (U_\omega+z) \, (U_\omega -z)^{-1} \, y'x' \rangle = \langle V_\omega^* \, yx \, | \, F_z V_\omega^* \, y'x' \rangle.
    \end{align*}
    This allows us to estimate the integral in \eqref{eq: bound on U plus z U minus z}:
    \begin{align*}
        &\int_{\T^2} | \langle yx \, | \, (U_\omega+z) \, (U_\omega - z )^{-1} y'x' \rangle |^s \; d\mu( \omega_{x}) \, d\mu(\omega_{x'}) \leq \| \tau \|_\infty^2 \int_{-\pi}^\pi \int_0^{2\pi} \| F_z \|^s \; d\alpha \, d\beta .
    \end{align*}
    We stress that while $F_z$ depends on $\alpha$, $\widehat{F}_z$ does not. Note that for $s < 1$, we have $|a+b|^s \leq |a|^s + |b|^s$ for all $a,b \in \R$. Plugging equation \eqref{eq: fmb estimate 2} into the integral above, we obtain
    \begin{align} \label{eq: fmb estimate 3}
        &\int_0^{2\pi} \| F_z \|^s \; d\alpha 
        \leq \int_0^{2\pi} \| \big(-i \widehat{F}_z + m(\alpha) \big)^{-1} \|^s \; d\alpha + \int_0^{2\pi} \| \Big(-i \widehat{F}_z^{-1} - \frac{1}{m(\alpha)} \Big)^{-1} \|^s \; d\alpha .
    \end{align}
    Using that $\widehat{F}_z$ does not depend on $\alpha$, we substitute $t=m(\alpha)$ to bound the first integral on the RHS of \eqref{eq: fmb estimate 3}:
    \begin{align*}
        \int_0^{2\pi} \| \big(-i \widehat{F}_z + m(\alpha) \big)^{-1} \|^s \; d\alpha &= \lim_{R \to \infty} \int_{-R}^{R} \frac{2}{1+t^2} \, \| \big(-i \widehat{F}_z + t \big)^{-1} \|^s \; dt \\ &\leq 2 \sum_{n \in \Z} \frac{1}{1+(|n|-1)^2} \, \int_n^{n+1} \| (-i \widehat{F}_z + t)^{-1} \|^s \, dt.
    \end{align*}
    A similar expression holds for the second term in \eqref{eq: fmb estimate 3}. The operators $-i \widehat{F}_z$ and $-i \widehat{F}^{-1}_z$ are defined on $\text{Ran}(P)$, which has dimension $d_x + d_{x'}$ and are thus maximally dissipative. Reproducing the arguments from the end of Section 4.1 in \cite{Schaefer:2025}, we get that there exists a constant $C(s)$ such that
    \begin{align*}
        \int_n^{n+1} \| (A + t)^{-1} \|^s \, dt \leq C(s)
    \end{align*}
    for all maximally dissipative $A$, where $C(s)$ is independent of $n$, but depends on the dimension of $\text{Ran}(P)$. Since such a constant exists for all finite dimensions, and the maximal dimension is $2d$, we can take the maximum over all such constants. Using that $\sum_{n \in \Z} \frac{1}{1 + (|n|-1)^2} < \infty$, we can therefore bound the RHS of \eqref{eq: fmb estimate 3} for all $|z| > 1$. If $|z| < 1$, we drop a minus sign in both norms in \eqref{eq: fmb estimate 3} and use that $i \widehat{F}_z$ and $i \widehat{F}_z^{-1}$ are dissipative. Repeating the arguments from above yields the bound for all $|z| < 1$.

    The case $x = x'$ is significantly easier: We can directly take $\alpha = \omega_{x}$ and do not need $\beta$. The operators $F_z$ and $\widehat{F}_z$ act on a space of dimension $d_x$, while all other estimates still hold.
\end{proof}

\subsection{Spectral averaging} \label{sec: spec avg}

For random SQW, spectral-averaging over a single random variable at vertex $ x $ ensures the bound required in Theorem~\ref{thm:fmec}. We again only require the distribution to be absolutely continuous with bounded density~\eqref{eq: distrib}. Thanks to~\eqref{eq: bound on U plus z U minus z}, the subsequent bound \eqref{eq: assu integral bound} would hold trivially if the  integrand were raised to some power $s\in (0,1)$, see Theorem \ref{thm: frac mom bound} and \eqref{eq: bound on U plus z U minus z}.

\begin{theorem} \label{thm: spec avg}
Given a random SQW $ U_\omega $ on a digraph $ (V,D) $ with uniformly bounded degree  and $ \mu $ satisfying  \eqref{eq: distrib}, 
there is some $ C < \infty $ such that for all $  (x'x) \in D $  
\begin{align} \label{eq: assu integral bound}
  \sup_{|z| < 1 }  \int_{\mathbb{T}} \langle x'x \, | \text{Re} \left( (U_\omega+z) (U_\omega-z)^{-1} \right) \, x'x \rangle \, d \mu(\omega_x) \leq C ,
\end{align}
at arbitrary values of all other random variables $\omega_{v}$, $v\notin \{x\}$. 
\end{theorem}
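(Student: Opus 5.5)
We reuse the rank-one-like (actually rank-$d_x$) perturbation structure from the proof of Theorem~\ref{thm: frac mom bound}, now with a single random variable. We set $\alpha=\omega_x$, $V_\omega=\widehat D_\omega U_{\mathcal S}$ (all phases except the one at $x$), and let $P$ be the projection onto $\mathrm{span}\{V_\omega^{-1}\ket{vx}: v\sim x\}$, a space of dimension $d_x\le d$. As before, $U_\omega = e^{i\alpha}V_\omega P + V_\omega(1-P)$, and with $F_z$, $\widehat F_z$ defined as in \eqref{defFF} we get the representation \eqref{eq: fmb estimate 2},
\[
F_z = -i\big(-i\widehat F_z+m(\alpha)\big)^{-1} - i\big(-i\widehat F_z^{-1}-m(\alpha)^{-1}\big)^{-1},
\]
where $\widehat F_z$ is independent of $\alpha$. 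Since $(x'x)$ is an outgoing edge of $x$, the vector $V_\omega^*\ket{x'x}$ lies in $\mathrm{Ran}(P)$. Hence
\[
\langle x'x|(U_\omega+z)(U_\omega-z)^{-1}x'x\rangle=\langle \phi|F_z\phi\rangle, \qquad \phi=V_\omega^*\ket{x'x},\ \|\phi\|=1,
\]
and the integrand is $\mathrm{Re}\langle\phi|F_z\phi\rangle$, which is nonnegative for $|z|<1$ by \eqref{FpFstar}.

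Next we bound $\mathrm{Re}\langle\phi|F_z\phi\rangle\le\|\mathrm{Re}F_z\|$ and use $d\mu\le\|\tau\|_\infty d\alpha$. It then suffices to show that $\int_0^{2\pi}\mathrm{Re}\,F_z\,d\alpha$, an operator on the $d_x$-dimensional space $\mathrm{Ran}(P)$, is bounded uniformly in $|z|<1$. Substituting $t=m(\alpha)$ (so $d\alpha=2dt/(1+t^2)$) turns the two terms into integrals of resolvents of the dissipative matrices $A=i\widehat F_z$ and $B=i\widehat F_z^{-1}$, evaluated at real points $t$ and $-1/t$, against the Cauchy-type weight $2/(1+t^2)$. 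For the second term we substitute $u=-1/t$, which maps the weight to the same form, so both terms have the same structure.

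The key step is the following claim. For a dissipative matrix $A$ (with $\mathrm{Im}A\ge0$, or the appropriate sign after the factor $\pm i$), the weighted integral $\int_\R \frac{2}{1+t^2}\,\mathrm{Im}\,(A-t)^{-1}\,dt$ (up to signs and constants) equals a Poisson/Herglotz integral. This is exactly the boundary-value statement for the matrix-valued Herglotz function $w\mapsto (A-w)^{-1}$: the function $\frac{1}{1+t^2}$ is the Poisson kernel at $w=i$, so
\[
\frac1\pi\int_\R\frac{\mathrm{Im}\,(A-t)^{-1}}{1+t^2}\,dt \le \mathrm{Im}\,(A-i)^{-1} + (\text{a linear-term contribution}) ,
\]
and the right-hand side is bounded by $\|(A-i)^{-1}\|\le 1$, because $A$ dissipative gives $\mathrm{dist}(i,\text{numerical range of } A)\ge 1$ in the correct half-plane. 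More concretely, one can diagonalize on eigenvalues: for scalars $a$ with $\mathrm{Im}\,a\ge0$ we have $\int \frac{1}{1+t^2}\frac{\mathrm{Im}\,a}{|a-t|^2}dt=\pi\,\mathrm{Im}\frac{-1}{a+i}\cdot(\ldots)\le \pi$. For non-normal matrices, however, we use the Herglotz representation instead: $G(w)=\langle\phi|(A-w)^{-1}\phi\rangle$ is Herglotz in $w$ (because $A$ is dissipative), and for Herglotz functions $\int \mathrm{Im}\,G(t+i0)\frac{dt}{1+t^2}\le \pi\,\mathrm{Im}\,G(i)$. This yields a bound $\le \pi\|(A-i)^{-1}\|\le\pi$. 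Here $\langle\phi|\mathrm{Re}F_z\phi\rangle$ decomposes as a sum of $\mathrm{Im}\,G$-terms for the two Herglotz functions, and we work with the vector $\phi$ directly rather than with operator norms.

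The main obstacle is the bookkeeping of signs, namely verifying that both terms in \eqref{eq: fmb estimate 2}, after the $\pm i$ factors, produce imaginary parts of genuine Herglotz functions in the real variable $t$ (respectively $u$) for $|z|<1$. One must also handle the boundary limits, since $A$ may have real spectrum when $|z|\to1$. We handle this by first working at $|z|<1$ where $\widehat F_z+\widehat F_z^*>0$ strictly, so the matrices are strictly dissipative and the integrands are smooth. Then $\mathrm{Im}\,G(i)\le |G(i)|\le 1$ holds uniformly in $z$, giving \eqref{eq: assu integral bound} with $C=2\pi\|\tau\|_\infty$ (up to harmless constants).
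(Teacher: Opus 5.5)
Your proposal is correct once the sign issue below is fixed, and it shares the paper's skeleton. The common steps are the decomposition $U_\omega=e^{i\alpha}V_\omega P+V_\omega(1-P)$, the representation \eqref{eq: fmb estimate 2} with $\alpha$-independent $\widehat F_z$ and $\phi=V_\omega^*\ket{x'x}$, the Cauchy weight $2/(1+t^2)$ produced by $t=m(\alpha)$, and the final bound via accretivity of $\widehat F_z$.

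The genuine difference is how you evaluate the Cauchy average of the resolvent. The paper writes $(\widehat F_z+im)^{-1}=\int_0^\infty e^{-t(\widehat F_z+im)}\,dt$, which is justified by the Gronwall bound $\|e^{-t\widehat F_z}\|\le e^{-rt}$. It then exchanges the integrals and uses $\int_0^{2\pi}e^{-itm(\alpha)}\,d\alpha=2\pi e^{-|t|}$. This gives the exact matrix identity $\int_0^{2\pi}(\widehat F_z+im(\alpha))^{-1}\,d\alpha=2\pi(\widehat F_z+1)^{-1}$. You instead read $\frac{1}{\pi(1+t^2)}$ as the Poisson kernel at $i$ and invoke the Nevanlinna representation of a scalar Herglotz function.

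Your route is shorter and shows why the answer is a resolvent evaluated at $\pm i$. The paper's route is fully elementary and yields an identity for the non-normal matrix itself, not only for imaginary parts of one matrix element. Your functions are rational, analytic on the closed upper half-plane and $O(1/|w|)$ at infinity, so your inequality is actually an equality (close the contour). The two terms then add up to exactly $2\pi\,\mathrm{Re}\langle\phi|[(\widehat F_z+1)^{-1}+\widehat F_z(\widehat F_z+1)^{-1}]\phi\rangle=2\pi$, so $C=2\pi\|\tau\|_\infty$ as you guessed.

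The sign bookkeeping you flagged must be carried out, because the version you wrote is wrong. With $A=i\widehat F_z$ one has $\mathrm{Im}\,A=\mathrm{Re}\,\widehat F_z>0$, so $\sigma(A)$ lies in the upper half-plane. Hence $w\mapsto\langle\phi|(A-w)^{-1}\phi\rangle$ is not Herglotz, and $\|(A-i)^{-1}\|$ is not bounded. For example, at $z=0$ the operator $\widehat F_0$ is the identity on $\mathrm{Ran}(P)$, so $A-i=0$.

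The fix is as follows. The first term of \eqref{eq: fmb estimate 2} equals $i(A-m)^{-1}$, so its contribution to the integrand is $\mathrm{Re}\langle\phi|i(A-t)^{-1}\phi\rangle=\mathrm{Im}\,H(t)$ with $H(w)=\langle\phi|(A^*-w)^{-1}\phi\rangle$. This $H$ is Herglotz, and $\mathrm{Im}\,H(i)=\mathrm{Re}\langle\phi|(\widehat F_z+1)^{-1}\phi\rangle\le 1$.

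The second term works the same way with $B=i\widehat F_z^{-1}$ and the variable $u=-1/m(\alpha)$. You should justify that $B$ is dissipative, using $\widehat F_z^{-1}+(\widehat F_z^{-1})^*=\widehat F_z^{-1}(\widehat F_z+\widehat F_z^*)(\widehat F_z^{-1})^*>0$. Its value at $i$ then gives $\mathrm{Re}\langle\phi|(\widehat F_z^{-1}+1)^{-1}\phi\rangle\le 1$.
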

\begin{proof}
The proof is based on similar arguments as the diagonal case in Theorem~\ref{thm: frac mom bound}. To illuminate the similarity, we abbreviate $\alpha = \omega_x \in [0, 2\pi]$ for fixed $x \in V $. We write the random operator $U_\omega = D_\omega U_\mathcal{S} $ and define
$$
   D_\alpha \ket{vu} = \begin{cases}
            e^{i \alpha} \, \ket{vu} & \text{if } u \in \{ x \} \\
            \ket{vu} & \text{else}
        \end{cases}, \qquad \widehat{D}_\omega \ket{vu} = \begin{cases}
            \ket{vu} & \text{if } u \in \{ x \} \\
            e^{i \omega_{u}} \, \ket{vu} & \text{else}
        \end{cases}.
$$
Setting $V_\omega = \widehat{D}_\omega$, we have $U_\omega = D_\alpha V_\omega$. We again consider the orthogonal projection $P $ corresponding to $\text{Ran}(P) = \text{span} \left\{ V_\omega^{-1} \ket{vx},  \, v\sim x\right\}$, 
and the operators, 
\begin{align}
    F_z = P (U_\omega +z) (U_\omega -z)^{-1} P, \qquad
    \widehat{F}_z &= P (V_\omega+z) (V_\omega -z)^{-1} P .
\end{align}
We stress that $\widehat{F}_z$ is independent of $\alpha$. 
For these operators, the relation~\eqref{FpFstar} is valid, which implies that 
$\widehat{F}_z + \widehat{F}_z^* > 0$ for any $|z|<1$. By the same argument as in the proof of Theorem~\ref{thm: frac mom bound}, this implies that  $\widehat{F}_z $ and $ \widehat{F}_z^*$ are invertible on the range of $P$. 
Moreover, we also have
\begin{align} \label{eq: real part rewriting}
   0 \leq  \langle x'x \, | \text{Re} \left( (U+z) (U-z)^{-1} \right) \, x'x \rangle = \frac{1}{2} \, \langle x'x \, | \, \big( F_z + F_z^* \big) \, x'x \rangle .
\end{align}
The operators $ F_z $ and $ \widehat{F}_z $ are again related by~\eqref{eq: fmb estimate 2} 
with $m(\alpha) = - \cot(\frac{\alpha}{2})$ such that
\begin{align}
    F_z + F_z^* 
    &= \Big(  \widehat{F}_z + i m(\alpha) \Big)^{-1} + \Big( \widehat{F}_z^{-1} - i m(\alpha)^{-1} \Big)^{-1} + \Big( \widehat{F}_z^* -i m(\alpha) \Big)^{-1} 
    +  \Big( (\widehat{F}_z^*)^{-1} +i m(\alpha)^{-1} \Big)^{-1}
\end{align}
By the assumption~\eqref{eq: distrib} on the distribution, the claim~\eqref{eq: assu integral bound}  thus follows from the bounds
\begin{equation} \label{eq: assu integral bound 2}
\begin{split}
    \int_0^{2 \pi} \langle x'x \, | \, \Big(  \widehat{F}_z + i m(\alpha) \Big)^{-1} + \Big( \widehat{F}_z^* -i m(\alpha) \Big)^{-1} \, x'x \rangle \, d \alpha \leq C \\
    \int_0^{2 \pi} \langle x'x \, | \, \Big( \widehat{F}_z^{-1} - i m(\alpha)^{-1} \Big)^{-1} + \Big( (\widehat{F}_z^*)^{-1} +i m(\alpha)^{-1} \Big)^{-1} \, x'x \rangle \, d \alpha \leq C 
\end{split}
\end{equation}
that we now establish. 
To simplify the notation, we will write $\widehat{F}_z = \widehat{F}$ and $m(\alpha) = m$, and we stress that from now on, we consider the restriction of all operators to the finite-dimensional range of $P$, without systematically mentioning it explicitly in the notation. 
Since the operator $\widehat{F}$ restricted to the range of $P$ is dissipative, we have $H = \frac{1}{2}(\widehat{F}+\widehat{F}^*) = H^* > 0$ and 
\begin{align}
    r = \inf_{\substack{\varphi \in \text{Ran}(P) \\ \| \varphi \| = 1} }\langle \varphi \, | H \varphi \rangle = \min \big( \lambda \in \sigma(H) \big) > 0 . 
\end{align}
We now pick $\varphi \in \text{Ran}(P)$ arbitrary and  compute for  $y(t) = e^{-t\widehat{F}} \varphi$ the derivative
\begin{align} \notag 
    \frac{d}{dt} \| y(t) \|^2 &= \Big\langle \varphi \, | \, \frac{d}{dt} \Big( e^{-t\widehat{F}^*} \, e^{-t\widehat{F}} \Big) \, \varphi \Big\rangle = \langle \varphi \, | \, - e^{-t\widehat{F}^*} (\widehat{F}^* + \widehat{F}) e^{-t\widehat{F}} \, \varphi \rangle \\
    &= -2 \, \langle y(t) \, | H \, y(t) \rangle \leq -2 \, r \, \|y(t)\|^2 .
\end{align}
Gronwall's inequality then yields
$    \| y(t) \|^2 \leq \| \varphi \|^2 e^{-2rt} $, 
which proves the uniform bound
\begin{equation}\label{eq: bound on norm of expo}
  \| e^{-t \widehat{F}} \| \leq e^{- t r}  \quad\mbox{on $\text{Ran}(P)$.}
  \end{equation}
  We can therefore compute
\begin{align}
    \int_0^\infty e^{-t(\widehat{F}+im)} dt &= \int_0^\infty \frac{d}{dt} \Big( -(\widehat{F}+im)^{-1} \, e^{-t(\widehat{F}+im)} \Big) \, dt \\ \nonumber &= \lim_{t \to \infty} \Big( - (\widehat{F}+im)^{-1} \, e^{-t(\widehat{F}+im)} \Big) + (\widehat{F}+im)^{-1} = (\widehat{F}+im)^{-1} 
\end{align}
and similarly 
$ 
    \int_0^\infty e^{-t(\widehat{F}^* - im)} dt = (\widehat{F}^* - im)^{-1} $. 
Restoring the dependence in $\alpha$ in the notation, and using Fubini's theorem we thus conclude
\begin{align*}
    &\int_0^{2 \pi} \langle x'x \, | \, \big( \widehat{F}+ i m(\alpha) \big)^{-1} \, x'x \rangle \, d \alpha = \int_0^{2 \pi} \langle x'x \, | \, \Big( \int_0^\infty e^{-t(\widehat{F}+im(\alpha))} dt \Big) \, x'x \rangle \, d \alpha \\
    &= \int_0^{2\pi} \int_0^\infty \langle x'x \, | \, e^{-t(\widehat{F}+im(\alpha))} \, x'x \rangle \, dt \, d \alpha = \int_0^\infty \langle x'x \, | \, e^{-t\widehat{F}} \, \Big( \int_0^{2\pi} e^{-itm(\alpha)} \, d \alpha \Big) \, x'x \rangle \, dt .
\end{align*}
The substitution $y = m(\alpha) = -\cot(\frac{\alpha}{2})$ with $d \alpha  = \frac{2}{1+y^2} dy $ yields for the inner integral
\begin{align*}
    \int_0^{2\pi} e^{-itm(\alpha)} \, d \alpha = 2\int_{-\infty}^\infty e^{-tiy} \frac{1}{1+y^2} \, dy = 2 \pi \, e^{-|t|} .
\end{align*}
The final equality can be checked with the residue theorem. We substitute this into the equation above and obtain
\begin{align*}
    &\int_0^{2 \pi} \langle x'x \, | \, \big( \widehat{F}+ i m(\alpha) \big)^{-1} \, x'x \rangle \, d \alpha = 2 \pi \, \int_0^\infty \langle x'x \, | \,  e^{-t(\widehat{F}+1)} \, x'x \rangle \, dt \\ 
    &= 2\pi \langle x'x \, | \Big( \int_0^\infty e^{-t(\widehat{F}+1)} \, dt \Big) x'x \rangle = 2 \pi \, \langle x'x \, | \, (\widehat{F}+1)^{-1} \, x'x \rangle ,
\end{align*}
by using the argument above in the evaluation of the last integral. Taking the absolute value, we can bound the scalar product by $\| (\widehat{F}+1)^{-1} \|$. Due to $\widehat{F}+\widehat{F}^* > 0$ we compute for every $\lambda > 0$ and $0\neq \varphi\in \text{Ran}(P)$
\begin{align*}
    \| ( \lambda +\widehat{F} ) \varphi \|^2 = \lambda^2 \| \varphi \|^2 + \| \widehat{F} \varphi \|^2 + \lambda \langle \varphi \, | (\widehat{F} + \widehat{F}^*) \varphi \rangle >  \lambda^2 \| \varphi\|^2 .
\end{align*}
We see that $\| (\lambda + \widehat{F}) \varphi \| > \lambda \| \varphi \|$ and conclude that $\lambda + \widehat{F}$ is invertible on $\text{Ran}(P)$ with
\begin{align*}
    \| (\lambda + \widehat{F})^{-1} \varphi \| < \frac{1}{\lambda} \, \| (\lambda + \widehat{F}) ( \lambda+\widehat{F})^{-1} \varphi \| = \frac{1}{\lambda} \, \| \varphi \| .
\end{align*}
In particular, we have $\| (\widehat{F}+1)^{-1} \| \leq 1$. In summary, we have shown
\begin{align*}
    \Big| \int_0^{2 \pi} \langle x'x \, | \, \big( \widehat{{F}}_z+ i m(\alpha) \big)^{-1} \, x'x \rangle \, d \alpha \Big| \leq 2 \pi .
\end{align*}
Copying the arguments above, we obtain the same bound with $ \widehat{{F}}_z $ replaced by $ \widehat{{F}}_z^*$.
Using $\widehat{F}_z^{-1} + ( \widehat{F}_z^{-1} )^* = \widehat{F}_z^{-1} \big( \widehat{F}_z + \widehat{F}_z^* \big) (\widehat{F}_z^{-1})^*$, we conclude $ \widehat{F}_z^{-1} + (\widehat{F}_z^{-1})^* > 0$. Since $\widehat{F}_z + \widehat{F}_z^* > 0$ was the only requirement on $\widehat{F}_z$, we can repeat the arguments above and obtain similar bounds for $\widehat{F}_z^{-1} $ and $(\widehat{F}_z^{-1})^*$. This shows the bounds \eqref{eq: assu integral bound 2} and therefore \eqref{eq: assu integral bound}, which finishes the proof.
\end{proof}

\section{Localization proof} \label{sec: loc proof}
In order to obtain dynamical localization for random SQWs, it remains to show exponential decay of the fractional moments. Thanks to Theorem \ref{thm:fmec}, this yields exponential decay of the eigenfunction correlator, which proves dynamical localization by~\eqref{eq: loc from ec decay}. We show the following
\begin{theorem} \label{thm: expo dec frac mom}
    Under the assumptions of Theorem \ref{thm: dyn loc SQW}, for any $s \in (0, \frac{1}{3})$ and any $g>0$, there exist $\varphi, c >0$ such that for any collection of scattering matrices $(S(x))_{x \in V}$ with $\| \mathcal{S} - \mathcal{I} \| \leq \varphi$ we have
    \begin{align} \label{eq: expo dec of frac mom}
      \sup_{|z|\in [1/2,3/2]\setminus \{1\}}  \E \left( | \langle xy | (U - z)^{-1} | x'y' \rangle |^s \right) \leq c e^{-g \dist(x,x')},
    \end{align}
    for any two basis vectors $\ket{xy}, \ket{x'y'} \in \ell^2(D)$.
\end{theorem}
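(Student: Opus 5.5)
We propose a perturbative fractional-moment argument around the fully localized walk $\tilde U_\omega = D_\omega U_{\mathcal I}$, in the spirit of the strong-disorder analysis of \cite{HJS:09}, iterating a geometric decoupling bound. Write $U = \tilde U + W$ with $W = D_\omega(U_{\mathcal S} - U_{\mathcal I})$. By \eqref{eq: scat mat norm}, $\|W\|\le \varphi$. Moreover, $W$ is local: $\langle ab|W|cd\rangle \neq 0$ only if $a$ or $b$ equals $c$, i.e.\ only if both edges touch the same vertex $c$. The matrix elements are bounded by $\varphi$.

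The unperturbed resolvent is explicit. By \eqref{eq: I reflection}, $\tilde U$ maps $|xy\rangle\mapsto e^{i\omega_x}|yx\rangle$. It therefore decomposes into $2\times 2$ blocks on $\mathrm{span}\{|xy\rangle,|yx\rangle\}$, each with eigenvalues $\pm e^{i(\omega_x+\omega_y)/2}$. The resolvent $(\tilde U - z)^{-1}$ is thus block diagonal, supported on pairs of reversed edges.

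We expand with the second resolvent identity,
\[
G(z)=(U-z)^{-1} = \tilde G - \tilde G\, W\, G ,
\]
where $\tilde G=(\tilde U - z)^{-1}$, and take $s$-th moments. This gives
\[
\E|G(xy,x'y')|^s \le \sum_{u,v} \E\big(|\tilde G(xy,u)|^s\,|W(u,v)|^s\,|G(v,x'y')|^s\big)
\]
for $xy$ off the block of $x'y'$. The key step is to decouple $|\tilde G(xy,u)|^s$, which depends only on $\omega_x,\omega_y$, from $|G(v,x'y')|^s$, which also depends on these variables.

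For the decoupling we would proceed as in \cite{HJS:09} and Theorem~\ref{thm: frac mom bound}. We use Hölder with exponents making the total power $3s<1$; this is where $s<1/3$ enters. Conditioning on all variables except $\omega_x,\omega_y$, we need a two-variable a priori bound $\int\!\!\int |\tilde G|^{3s}\,d\mu\,d\mu\le C$, which follows as in Theorem~\ref{thm: frac mom bound}. We also need a re-sampling or rank-two argument showing that $\E[|G(v,x'y')|^{3s}\mid\cdot]$ is comparable to the same quantity with the dependence on $\omega_x,\omega_y$ removed. The standard route is to express $G$ through $F_z$ as in \eqref{eq: fmb estimate 2}. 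The $\alpha$-dependence then enters only through $m(\alpha)$, and the bound of Theorem~\ref{thm: frac mom bound} applies conditionally uniformly.

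This yields a subharmonic-type inequality
\[
\E|G(xy,x'y')|^s \le C(s)\,\varphi^s \sum_{v:\ d(v,x)\le 2} \E|G(v,x'y')|^s .
\]
The number of terms is at most $K(d)$ by bounded degree \eqref{eq: assu bounded degree}. Iterating $\lfloor \dist(x,x')/2\rfloor$ times and using Theorem~\ref{thm: frac mom bound} at the final step gives the bound
\[
C\,\big(K C(s)\varphi^s\big)^{\dist(x,x')/2}.
\]
Choosing $\varphi$ so small that $K C(s)\varphi^s\le e^{-2g}$ yields \eqref{eq: expo dec of frac mom}. The constants are uniform for $|z|\in[1/2,3/2]$, and since all bounds are uniform in the deterministic $\mathcal S$, we can take the supremum over $z$.

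The main obstacle is the decoupling step. Because there is only one phase per vertex, shared by all outgoing edges, the local factor $\tilde G$ and the remainder $G$ are correlated through the same $\omega_x$. This makes it necessary to prove the conditional a priori bound with the Hölder splitting into $3s$ moments. I would first try to control the product by the rank-$(d_x+d_y)$ Krein-type formula \eqref{eq: fmb estimate 2}.
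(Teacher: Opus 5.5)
Your route is genuinely different from the paper's and can be completed, but the decoupling step has to be done by resampling, not as you sketch it.

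\textbf{Comparison of the two routes.} The paper works at a scale $n$. Two geometric resolvent identities over the spheres $S_n$, $S_{n+1}$ (reflecting boundary conditions) produce a product of three resolvent factors. The inner factor is controlled by Proposition~\ref{propo: poly decay resolvent}: perturbation around $\tilde U^{B_n}$ on events where $z$ is $\eta$-far from $\sigma(\tilde U^{B_n})$ (Proposition~\ref{propo: spect gaps fully loc}), and H\"older with Theorem~\ref{thm: frac mom bound} on the complement. Two resampling arguments then decouple the three factors; the term $A_4$ is where $s<1/3$ is used. This gives a contraction $q$ at a fixed scale $N_0$ under $\|\mathcal S-\mathcal I\|\le\varphi(N_0)$, iterated in steps of $N_0+3$.

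You instead expand once, globally, around $\tilde U$. Because $\tilde G$ is exactly $2\times2$ block diagonal and $W$ already carries the factor $\varphi$, you need no balls, no Wegner-type estimate and no choice of scale. One resampling suffices, and then only two factors depend on the resampled phases, so $2s<1$ would already do. (Heads of the new edges can be at distance $3$ from $x$, so iterate about $\dist(x,x')/3$ times, not $/2$.) The paper's scheme follows the finite-volume-criterion template of \cite{HJS:09}, which could take non-perturbative finite-volume input. For this perturbative statement your scheme is shorter.

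\textbf{The decoupling step.} It does not follow from the Krein-type formula \eqref{eq: fmb estimate 2}, for three reasons:
\begin{itemize}
\item that formula only describes the compression of the resolvent to $\mathrm{Ran}(P)$;
\item Theorem~\ref{thm: frac mom bound} only gives upper bounds;
\item $G(v,x'y')$ depends on $(\omega_x,\omega_y)$ through a rank-$(d_x+d_y)$ matrix M\"obius map, for which no analogue of the rank-one decoupling lemma is available.
\end{itemize}
Conditioning only on $\omega_x,\omega_y$ is also insufficient. The edges $v$ with $W(u,v)\neq0$, $u\in\{xy,yx\}$, are $\ket{yb}$ or $\ket{xb}$ with $b$ a neighbour, so they carry the phase $\omega_b$. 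Theorem~\ref{thm: frac mom bound} needs integration over the phases of both edges.

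What works is resampling as in the paper's appendix. For fixed $u,v$, set $J=\{x,y,b\}$, take independent $\mu$-distributed $\hat\omega_J$, put $\hat U=D_{\omega,\hat\omega}U_{\mathcal S}$ and $\hat G=(\hat U-z)^{-1}$, and use
\begin{equation*}
G(v,x'y')=\hat G(v,x'y')-\sum_{r\in J}\sum_{p\sim r}\big(e^{i\omega_r}-e^{i\hat\omega_r}\big)\,e^{-i\hat\omega_r}\,G(v,pr)\,\langle pr|(1+z\hat G)\,x'y'\rangle .
\end{equation*}
Integrate over $\omega_J$ with $\omega_{J^c}$ and $\hat\omega$ fixed.
\begin{itemize}
\item $|\tilde G(xy,u)|^s\le (\tfrac32)^s|z^2-e^{i(\omega_x+\omega_y)}|^{-s}$ has uniformly bounded integral.
\item $|\tilde G(xy,u)|^s|G(v,pr)|^s$ has bounded integral by Cauchy--Schwarz and Theorem~\ref{thm: frac mom bound} in $(\omega_b,\omega_r)$.
\item $\hat G$ does not depend on $\omega_J$, so afterwards you may replace $\hat\omega$ by $\omega$.
\end{itemize}
This gives, for $x'y'$ far away,
\begin{equation*}
\E\big(|\tilde G(xy,u)|^s|G(v,x'y')|^s\big)\le C\,\E|G(v,x'y')|^s+C\sum_{r\in J,\,p\sim r}\E|G(pr,x'y')|^s .
\end{equation*}
This is your subharmonic inequality. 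Its right side involves boundedly many resampled neighbours. It is not a literal comparability of $\E(|G(v,x'y')|^{3s}\,|\,\cdot)$ with a quantity free of $\omega_x,\omega_y$, which is what you proposed to prove.
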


\subsection{Subspaces associated to balls} \label{subsec: restricting to balls}
Given a root $r \in V$ and a radius $n$, we will work with a geometric resolvent equation, which imposes reflecting boundary conditions on the surface of the ball $B_n(r)\subset V$. The quantum walker is hence trapped inside or outside this ball. For practical purposes, we apply the ideas presented in Section \ref{sec:RSQW} to subsets of vertices instead of subsets of edges. Given a root $r \in V$, which we frequently suppress in the notation, we denote by $\dist(x) = d(x,r)$ the graph distance between $x$ and $r$ and define
\begin{align*}
    B_n \equiv B_n(r) = \{ x \in V \, | \dist(x) \leq n \} \text{ and } S_n(r) = \{ x \in V \, | \dist(x) = n \}. 
\end{align*}
Due to assumption \eqref{eq: assu bounded degree}, we can bound the size of $S_n(r)$ by
\begin{align} \label{eq: def sn}
    s_n = \sup_{r \in V} \big( |S_n(r)| \big) \leq d \, (d-1)^{n-1} = c_d \, e^{\gamma n} \text{ with } \gamma = \ln(d-1).
\end{align}
For a given family of scattering matrices $(S(x))_{x \in V}$, we define a new family of deterministic scattering matrices by:
\begin{align*}
    S^{(n)}(x) = \begin{cases} \mathbbm{1} &\text{if } x \in S_n(r) \\
    S(x) &\text{else.} \end{cases}
\end{align*}
Let $U_\omega^{(n)}$ denote the random SQW  corresponding to the scattering matrices $(S^{(n)}(x))_{x \in V}$. Recalling \eqref{eq: I reflection}, we see that the walker incoming at $x\in S_n(r)$ from any site $y$, gets reflected from $x$ back to $y$. We define the subspaces
\begin{align*}
    \Hp_n = \text{span} \left\{ \ket{xy} \text{ s.t. } \dist(x) \leq n, \dist(y) \leq n,\, x\sim y,\, (x,y)\in V^2 \right\}
\end{align*}
and $\Hp_n^\perp$ its orthogonal complement in $\Hp$. Thanks to \eqref{eq: I reflection}, the spaces $\Hp_n$ and $\Hp_n^\perp$ are invariant under $U_\omega^{(n)}$, and we let $U_\omega^{B_n}$ and $U_\omega^{B_n^c}$ denote the restrictions of $U_\omega^{(n)}$ to $\Hp_n$ and $\Hp_n^\perp$ respectively, so that
\begin{align}
	U_\omega^{(n)}=U_\omega^{B_n}\oplus U_\omega^{B_n^c}  \text{ on } \Hp=\Hp_n\oplus \Hp_n^\perp.
\end{align}
We denote the boundary operator that couples the two subspaces by
\begin{align*}
    T^{(n)} = U_\omega - U_\omega^{(n)}.
\end{align*}
Using \eqref{eq: scat mat norm}, we obtain 
 $\| T^{(n)} \| \leq \sup_{x \in S_n(r)} \big( \| S(x) - I \|_{HS} \big).$

\subsection{Spectral gaps of the fully localized random operator} \label{sec: spec gaps full loc op}
Let $\Tilde{U}_\omega$ denote the random SQW corresponding to the family of scattering matrices $\mathcal{I}$ as defined above \eqref{eq: I reflection} and note that it reflects any directed edge $(xy)$ onto the edge $(yx)$ in the opposite direction. Given two adjacent vertices $u \sim v$, we define the 2-dimensional subspace
$ \Hp^{uv} = \text{span} \big\{ \ket{uv}, \ket{vu} \big\} $. 
The subspaces $\Hp^{uv}$ are invariant under $\tilde{U}_\omega$ and its restriction $\tilde{U}_\omega^{uv}$ in the ONB $\{\ket{uv}, \ket{vu}\}$ of $\Hp^{uv}$ reads:
\begin{align}\label{uuv}
    \tilde{U}_\omega^{uv} = \tilde{U}_\omega \Big|_{\Hp^{uv}} = \begin{pmatrix}
    0 & e^{i \omega_u} \\ e^{i \omega_v} & 0 \end{pmatrix}.
\end{align}
We see that its eigenvalues are given by $\pm e^{i \frac{\omega_u + \omega_v}{2}}$. Our next goal is the following Proposition, which estimates the probability of spectral gaps for the fully localized operator: 
\begin{proposition} \label{propo: spect gaps fully loc}
    Under the assumptions of Theorem \ref{thm: dyn loc SQW}, we have for any $n \in \N^*$, any $0<\eta <1$ and any root $r \in V$:
    \begin{align*}
      \sup_{|z| \neq 1 }  \Pp \Big( \text{dist} \big(z, \sigma \left( \tilde{U}_\omega^{B_n(r)} \right) \big) \leq \eta \Big) \leq 4 \pi^2 \| \tau \|_\infty^2 \, d |B_n(r)| \, \eta .
    \end{align*}
\end{proposition}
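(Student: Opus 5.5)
The plan is to decompose $\tilde{U}_\omega^{B_n(r)}$ into its $2\times 2$ blocks and apply a union bound. Since $\tilde{U}_\omega$ leaves each $\Hp^{uv}$ invariant, and $\Hp_n$ is the orthogonal sum of the $\Hp^{uv}$ over unordered adjacent pairs $\{u,v\}\subset B_n(r)$, we have $\tilde U^{B_n}_\omega=\bigoplus_{\{u,v\}}\tilde U^{uv}_\omega$. This includes edges between two vertices of $S_n(r)$: there the reflecting condition coincides with $\mathcal{I}$, so nothing changes. By \eqref{uuv},
\begin{equation}
\sigma\big(\tilde U^{B_n}_\omega\big)=\big\{\pm e^{i(\omega_u+\omega_v)/2}\ :\ u\sim v,\ u,v\in B_n(r)\big\}.
\end{equation}
The number of unordered edges with both endpoints in $B_n(r)$ is at most $\frac12 d|B_n(r)|\le d|B_n(r)|$, by the bounded degree assumption \eqref{eq: assu bounded degree}.

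Since $\sigma\big(\tilde U^{B_n}_\omega\big)$ is contained in the unit circle, the next step is to replace $z$ by a point on the circle. Write $z=\rho e^{i\phi}$. For $\lambda\in\Sp^1$ we have $|z-\lambda|\ge |e^{i\phi}-\lambda|$. Hence $\dist(z,\sigma)\le\eta$ implies that some eigenvalue $\lambda=\pm e^{i(\omega_u+\omega_v)/2}$ satisfies $|e^{i\phi}-\lambda|\le\eta$. For $\eta<1$, the arc-length distance on the circle is then at most $\frac{\pi}{2}\eta$, since chord $\ge \frac{2}{\pi}\,$arc. In terms of the angle variable $\theta=(\omega_u+\omega_v)/2$, taken modulo $\pi$ because of the $\pm$, the event therefore forces $\theta$ to lie in a set of Lebesgue measure at most $c\,\eta$ modulo $\pi$, with $c$ a numerical constant of order $\pi$.

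The core estimate is that, for a fixed edge $\{u,v\}$ with $u\ne v$, the probability of this event is controlled by the joint density of $(\omega_u,\omega_v)$. By independence this density is $\tau(\omega_u)\tau(\omega_v)\le\|\tau\|_\infty^2$ on $\T^2$. The condition constrains $\omega_u+\omega_v$ modulo $2\pi$ to a union of at most two arcs, each of length $\lesssim \pi\eta$. For each fixed $\omega_u$, the admissible set of $\omega_v$ therefore has measure $\lesssim 2\pi\eta$. Integrating over $\omega_u\in\T$ gives
\begin{equation}
\Pp(\cdot)\le \|\tau\|_\infty^2\cdot 2\pi\cdot 2\pi\eta = 4\pi^2\|\tau\|_\infty^2\eta .
\end{equation}
Some care is needed with the halving $\theta=(\omega_u+\omega_v)/2$ and the $\pm$ sign. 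Together they mean $e^{i(\omega_u+\omega_v)}$ must lie within chord distance $\lesssim 2\eta$ of $e^{2i\phi}$, an arc of length about $2\pi\eta$ after the chord-to-arc conversion. This step is where the constant must be tracked, and it is the only real obstacle; I would be generous and accept any constant $\le 4\pi^2$. A union bound over the at most $d|B_n(r)|$ edges then yields the claim, uniformly in $|z|\ne1$ and in $r$.

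A subtlety to check is that edges from vertices of $S_n(r)$ to $B_n^c$ do not contribute to $\Hp_n$. They do not, since $\Hp_n$ only contains edges with both endpoints in the ball. In addition, loops $u=v$ do not occur, because the graph is simple.
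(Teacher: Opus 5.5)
Your overall route is the paper's: split $\tilde U^{B_n(r)}_\omega$ into the blocks $\tilde U^{uv}_\omega$, take a union bound over edges, and bound each term by $\|\tau\|_\infty^2$ times the Lebesgue measure of the bad set of $(\omega_u,\omega_v)$. Your Fubini step (for fixed $\omega_u$, the map $\omega_v\mapsto\omega_u+\omega_v$ preserves Lebesgue measure on $\T$) is a cleaner version of the paper's change of variables $\alpha=(\theta+\theta')/2$, $\beta=\theta-\theta'$.

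The step that fails is the reduction to a point of the circle. For $z=\rho e^{i\phi}$ and $\lambda\in\Sp^1$ one has the identity
\[
|z-\lambda|^2=\rho\,|e^{i\phi}-\lambda|^2+(1-\rho)^2 .
\]
So $|z-\lambda|\ge|e^{i\phi}-\lambda|$ holds for $\rho\ge 1$ but not for $\rho<1$, and the supremum over $|z|\neq1$ includes $\rho<1$. For instance, $z=4/5$, $\lambda=(4+3i)/5$ and $\eta=3/5$ give $|z-\lambda|=\eta$ but $|1-\lambda|=\sqrt{2/5}>\eta$. Hence the implication from $\dist(z,\sigma)\le\eta$ to $|e^{i\phi}-\lambda|\le\eta$ for some eigenvalue is false inside the disc. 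Your chord-to-arc conversion therefore has no valid input there.

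The conclusion you draw from it, angular distance at most $\frac{\pi}{2}\eta$, is nevertheless true. The paper uses it in the form $\int_0^{2\pi}\mathbb{I}_{\{|z\pm e^{i\alpha}|\le\eta\}}\,d\alpha\le 2\arcsin\eta\le\pi\eta$. To prove it, argue directly with angles. Suppose $|z-e^{i\psi}|\le\eta<1$.
\begin{itemize}
\item First, $\cos(\psi-\phi)>0$, since otherwise $|z-e^{i\psi}|^2=\rho^2-2\rho\cos(\psi-\phi)+1\ge1$.
\item Second, $|\sin(\psi-\phi)|$ is the distance from $e^{i\psi}$ to the line $\R e^{i\phi}$, which contains $z$. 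So it is at most $|z-e^{i\psi}|\le\eta$.
\end{itemize}
Hence $\psi$ is within angle $\arcsin\eta\le\frac{\pi}{2}\eta$ of $\phi$; the example above shows $\arcsin\eta$ is sharp.

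With this in hand the rest of your argument goes through:
\begin{itemize}
\item $\omega_u+\omega_v$ lies, modulo $2\pi$, in a single arc of length at most $4\arcsin\eta\le 2\pi\eta$, because the two signs merge after doubling.
\item Each edge then contributes at most $4\pi^2\|\tau\|_\infty^2\eta$, so there is no need to hedge on the constant.
\item Summing over at most $\frac12 d|B_n(r)|$ unordered edges even leaves a factor $2$ to spare.
\end{itemize}
That spare factor would also absorb the cruder repair $|e^{i\phi}-\lambda|\le\dist(z,\Sp^1)+|z-\lambda|\le2\eta$. Finally, $z=0$ has no $\phi$, but there the event is empty since $\dist(0,\Sp^1)=1>\eta$.
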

\begin{proof}
    Since $\tilde{U}_\omega^{B_n(r)}=\bigoplus_{ \substack{u,v \in B_n(r) \\ u\sim v}} \tilde{U}^{uv}_\omega$, with $\sigma(\tilde{U}^{uv}_\omega)=\{\pm e^{i\frac{\omega_u+\omega_v}{2}}\}$, see \eqref{uuv}, we have
    \begin{align}
        \Pp \Big( \text{dist} \big(z, \sigma \left( \tilde{U}_\omega^{B_n(r)} \right) \big) \leq \eta \Big)&\leq \sum_{\substack{u,v \in B_n(r) \\ u\sim v}} \Pp \Big( \text{dist} \big(z, \sigma \left( \tilde{U}_\omega^{uv} \right) \big) \leq \eta \Big) \nonumber \\ & \leq \sum_{\substack{u,v \in B_n \\ u \sim v}} \Pp (|z-e^{i\frac{\omega_u+\omega_v}{2}}|\leq \eta)+\Pp (|z+e^{i\frac{\omega_u+\omega_v}{2}}|\leq \eta). 
    \end{align}
In turn, we estimate
\begin{align}
    \Pp \Big( |z\pm e^{i\frac{\omega_u+\omega_v}{2}}|\leq \eta \Big)\leq \| \tau \|_\infty^2 \, \int_{0}^{2\pi}\int_{0}^{2\pi} \mathbb{I}_{\{|z\pm e^{i\frac{\theta+\theta'}{2}}|\leq \eta\}}d\theta \, d\theta', 
\end{align}
and then change variables $\alpha=\frac{\theta+\theta'}{2}$, $\beta=\theta-\theta'$. The bound
\begin{align*}
    \int_{0}^{2\pi} \mathbb{I}_{\{|z\pm e^{i\alpha}|\leq \eta\}}d\alpha \leq 2 \arcsin(\eta) \leq \pi \, \eta 
\end{align*}
completes the proof.
\end{proof}

\subsection{Finite-volume resolvent estimate} \label{sec: poly dec box res}
We denote by $R_\omega^{\#}(z)$ and $\tilde{R}_\omega^{\#}(z)$ the (reduced) resolvents of $U_\omega^{\#}$ and $\tilde{U}_\omega^{\#}$, for $\#\in\{(n), B_n, B_n^c\}$, for any $z$ in their resolvent sets. Unless it is unclear, we will omit the dependence on $z$ and $ \omega$. Furthermore, we define the following equivalence relation on the elements of the ONB $\{ \ket{uv}\}_{u,v \in V, u \sim v}$ of $\Hp$:
\begin{align*}
    \ket{uv} \sim \ket{u'v'} \iff \ket{uv} = \ket{u'v'} \text{ or } \ket{uv} = \ket{v'u'}.
\end{align*}
Recalling \eqref{eq: scat mat norm}, our goal is the following
\begin{proposition} \label{propo: poly decay resolvent}
    Under the assumptions of Theorem \ref{thm: dyn loc SQW}, we have for any $s \in (0,1)$ and $p > \frac{1}{1-s}$ that there exists $c > 0$ (depending on $d, s, p$) such that:
    \begin{align*}
        \E \Big( \left| \langle xy \, | \, R_\omega^{B_n}(z) \, x'y' \rangle \right|^s \Big) \leq c\left(\|\mathcal S-\mathcal I\||B_n|^2\right)^\frac{s}{1+2sp},
    \end{align*}
    uniformly in $z \in \C \setminus \Sp^1$, for all $n \in \N^*$, all balls $B_n \equiv B_n(r) $ for any root $r \in V$, all families of scattering matrices $\mathcal{S} = (S(x))_{x \in V}$ such that $\| \mathcal{S} - \mathcal{I} \| < |B_n|^{-2} d^{-\frac{1+2ps}{sp}}$, and all $\ket{xy}, \ket{x'y'} \in \Hp_n$ with $\ket{xy} \nsim \ket{x'y'}$.
\end{proposition}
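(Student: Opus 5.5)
We compare $R^{B_n}$ with the resolvent $\tilde R^{B_n}$ of the fully localized walk $\tilde U^{B_n}_\omega$ on the ball. Since $\ket{xy}\nsim\ket{x'y'}$, the two vectors lie in different invariant two-dimensional blocks $\Hp^{uv}$ of $\tilde U^{B_n}_\omega$, so $\langle xy|\tilde R^{B_n}(z)\,x'y'\rangle=0$. The second resolvent identity then gives
\begin{align*}
\langle xy|R^{B_n}x'y'\rangle = -\langle xy|\tilde R^{B_n}\,(U^{B_n}_\omega-\tilde U^{B_n}_\omega)\,R^{B_n}x'y'\rangle .
\end{align*}
By \eqref{eq: scat mat norm}, applied to the families $\mathcal S^{(n)}$ and $\mathcal I$ restricted to the ball, $\|U^{B_n}-\tilde U^{B_n}\|\le\|\mathcal S-\mathcal I\|=:\varepsilon$. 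Thus the matrix element is small whenever $\tilde R^{B_n}$ is not large, and $\|\tilde R^{B_n}\|=\dist(z,\sigma(\tilde U^{B_n}))^{-1}$.

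\textbf{Step 1 (good/bad events).} Fix $\eta\in(0,1)$ and set $G_\eta=\{\dist(z,\sigma(\tilde U^{B_n}_\omega))>\eta\}$. Proposition~\ref{propo: spect gaps fully loc} gives $\Pp(G_\eta^c)\le C\,d|B_n|\eta$, uniformly in $z$ and $r$.

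\textbf{Step 2 (good event).} On $G_\eta$, expand the resolvent identity: $R^{B_n}=\tilde R^{B_n}\sum_k(-\Delta\tilde R^{B_n})^k$ with $\Delta=U^{B_n}-\tilde U^{B_n}$. This Neumann series converges once $\varepsilon<\eta/2$. The zeroth-order term vanishes for non-equivalent edges, so $|\langle xy|R^{B_n}x'y'\rangle|\le 2\varepsilon\eta^{-2}$.

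This step is deterministic, but the factor $|B_n|^2$ in the statement suggests the authors instead estimate matrix elements. To get a cleaner fractional bound I would rather avoid the Neumann series and use Hölder directly on the identity. Expanding $\Delta$ in the basis gives
\begin{align*}
\langle xy|R^{B_n}x'y'\rangle=-\sum_{a,b}\langle xy|\tilde R^{B_n} a\rangle\Delta_{ab}\langle b|R^{B_n}x'y'\rangle .
\end{align*}
Here $\tilde R^{B_n}$ is block diagonal, so $a$ ranges over at most two edges and $|\Delta_{ab}|\le\varepsilon$, while $b$ ranges over at most $d$ neighbours. Taking the $s$-th power and using $|a+b|^s\le|a|^s+|b|^s$ then yields at most $2d$ terms of the form $\varepsilon^s\,|\langle xy|\tilde R a\rangle|^s\,|\langle b|R\,x'y'\rangle|^s$.

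\textbf{Step 3 (Hölder and optimization).} Split each term on $G_\eta$ and $G_\eta^c$. On $G_\eta$, bound $|\langle xy|\tilde R a\rangle|\le\eta^{-1}$ and control $\E|\langle b|R x'y'\rangle|^s$ by Theorem~\ref{thm: frac mom bound}; this contributes $\lesssim d\,\varepsilon^s\eta^{-s}$. On $G_\eta^c$, do not expand: apply Hölder with exponents $p$ and $p'=p/(p-1)$ to get
\begin{align*}
\E\big[|\langle xy|R^{B_n}x'y'\rangle|^s\,\mathbbm 1_{G_\eta^c}\big]\le \E\big[|\langle xy|R^{B_n}x'y'\rangle|^{sp}\big]^{1/p}\,\Pp(G_\eta^c)^{1-1/p}.
\end{align*}
The first factor is bounded by Theorem~\ref{thm: frac mom bound} provided $sp<1$. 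The condition $p>1/(1-s)$ means exactly that the conjugate exponent satisfies $s p'<1$, so the Hölder pairing should be arranged so that the resolvent carries the exponent $sp'<1$; in that arrangement the indicator term is raised to $1/p$, i.e.\ it contributes $(d|B_n|\eta)^{1/p}$. A more refined version iterates the identity once more, producing $\tilde R$ twice, which explains the $2sp$ and $|B_n|^2$ in the exponent. Balancing $\varepsilon^s\eta^{-2s}$ against $(|B_n|\eta)^{1/p}$ gives $\eta=(\varepsilon|B_n|^{2})^{sp/(1+2sp)}$-type choices, and the bound becomes $c(\varepsilon|B_n|^2)^{s/(1+2sp)}$. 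The smallness hypothesis on $\varepsilon$ guarantees $\eta<1$ and that the $d$-dependent constants are absorbed.

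\textbf{Main obstacle.} The main difficulty is the bookkeeping in Step 3: choosing where $\tilde R$ appears (once or twice), matching the exponents so that the fractional moments stay below $1$ and remain uniformly bounded via Theorem~\ref{thm: frac mom bound}, and carrying out the $\eta$-optimization so that it reproduces exactly the exponent $s/(1+2sp)$ and the factor $|B_n|^2$.
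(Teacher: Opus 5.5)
Your Step~3 main line is a correct proof, and it is a modest but genuine variant of the paper's argument. The architecture is the same: the resolvent identity against $\tilde R^{B_n}$, vanishing of the zeroth-order term for $\ket{xy}\nsim\ket{x'y'}$, the good/bad split via Proposition~\ref{propo: spect gaps fully loc}, and H\"older on $G_\eta^c$ with the resolvent carrying the exponent $sp'<1$. The difference is at the good event, which the paper treats as in your Step~2. Perturbation theory for normal operators, which requires $\varepsilon\le\eta/2$, gives $\|R^{B_n}\|\le 2/\eta$, and hence the deterministic bound $c\,\varepsilon\eta^{-2}$. Balancing $\varepsilon^s\eta^{-2s}$ against $(|B_n|\eta)^{1/p}$ with $\eta=(\varepsilon^{sp}/|B_n|)^{1/(1+2sp)}$ gives exactly $(\varepsilon|B_n|^2)^{s/(1+2sp)}$. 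The hypothesis $\varepsilon|B_n|^2 d^{(1+2ps)/(sp)}<1$ is there precisely to ensure $\varepsilon\le\eta/2$ at this $\eta$.

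Your Step~3 uses only $\|\tilde R^{B_n}\|<\eta^{-1}$ and sends $|\langle b|R^{B_n}x'y'\rangle|^s$ through Theorem~\ref{thm: frac mom bound}. This gives $c\,\varepsilon^s\eta^{-s}$ on $G_\eta$ without any perturbative condition. With $\eta=(\varepsilon^{sp}/|B_n|)^{1/(1+sp)}$ you get $c(\varepsilon|B_n|)^{s/(1+sp)}$. This implies the stated bound because $\varepsilon|B_n|\le\varepsilon|B_n|^2<1$ and $s/(1+sp)>s/(1+2sp)$. So your route yields a slightly better exponent and makes the smallness assumption unnecessary: for $\varepsilon|B_n|^2\ge 1$ the claim follows trivially from Theorem~\ref{thm: frac mom bound}.

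Your closing paragraph, however, should be discarded, for three reasons.
\begin{itemize}
\item No ``more refined version'' that iterates the identity is involved. The $\eta^{-2s}$, and with it the $2sp$ and the $|B_n|^2$, come from your own Step~2 bound (one $\tilde R$ and one $R$, each of norm $\lesssim\eta^{-1}$). That bound is cruder than Step~3, not finer.
\item The choice $\eta=(\varepsilon|B_n|^2)^{sp/(1+2sp)}$ does not balance the two terms. It leaves an extra factor $|B_n|^{1/p}$ in $(|B_n|\eta)^{1/p}$.
\item In the Step~2 route the smallness hypothesis is needed for $\varepsilon\le\eta/2$, not merely for $\eta<1$.
\end{itemize}
Finish either with the Step~3 optimization above, or with Step~2 together with the correct $\eta$ and a check that $\varepsilon\le\eta/2$.
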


\begin{proof} 	
	In the following, the generic symbol $c$ denotes an inessential quantity that only depends on $d,s,p$ and can change from line to line. Quantities with a tilde refer to the SQW with the identity as scattering matrices. Since $\Hp^{uv}$ is invariant under $\tilde{U}_\omega^{B_n}$, we have
    \begin{align} \label{eq:res local on Hjk}
        \langle uv \, | \, \tilde{R}_\omega^{B_n} \, u'v' \rangle = 0 \;\; \text{ if } \; \ket{uv} \nsim \ket{u'v'}.
    \end{align}
    The resolvent identity implies
    \begin{align} \label{eq:res id C - C0}
        R_\omega^{B_n} = \tilde{R}_\omega^{B_n} + R_\omega^{B_n} \Big( \tilde{U}_\omega^{B_n} - U_\omega^{B_n} \Big) \tilde{R}_\omega^{B_n}.
    \end{align}
    By definition of the scattering matrices, we have:
    \begin{align} \label{eq:U nearest nbh}
        \langle uv \, | \, U_\omega^{B_n} \, u'v' \rangle = 0 \;\; \text{ if } \; v \neq u'.
    \end{align}
    Equations \eqref{eq:res local on Hjk} - \eqref{eq:U nearest nbh} now yield for $\ket{xy}, \ket{x'y'} \in \Hp_n$ with $\ket{xy} \nsim \ket{x'y'}$ and $z \in \C \setminus \Sp^1$:
    \begin{align*}
        \big| \langle xy \, | \, R_\omega^{B_n} \, x'y' \rangle \big| &= \big| \langle xy \, | \, R_\omega^{B_n} \Big( \tilde{U}_\omega^{B_n} - U_\omega^{B_n} \Big) \tilde{R}_\omega^{B_n} \, x'y' \rangle \big| \\ 
        &\leq \sum_{\substack{\ket{uv}, \ket{vv'} \in \Hp_n \\ \text{s.t } \ket{vv'} \sim \ket{x'y'}}} \big| \langle xy \, | \, R_\omega^{B_n} \, uv \rangle \, \langle uv \, | \, \Big( \tilde{U}_\omega^{B_n} - U_\omega^{B_n} \Big) \, vv' \rangle \, \langle vv' \, | \, \tilde{R}_\omega^{B_n} \, x'y' \rangle \big|.
    \end{align*}
    Using \eqref{eq: scat mat norm} we can bound the second scalar product by $\| \mathcal{S} - \mathcal{I} \|$. Since $U_\omega^{B_n}$ is a normal operator, we can bound the third scalar product by $\frac{1}{\text{dist} \big( z, \sigma(\tilde{U}_\omega^{B_n}) \big) }$. Finally, we note that the number of terms in the sum above is finite and independent of $n$. Thus, there exists a constant $c > 0$ such that
    \begin{align} \label{eq:resolvent bound}
        \big| \langle xy \, | \, R_\omega^{B_n} \, x'y' \rangle \big| &\leq c \, \frac{\| \mathcal{S} - \mathcal{I} \|}{\text{dist} \big( z, \sigma(\tilde{U}_\omega^{B_n}) \big)} \, \sup_{\substack{\ket{uv} \in \Hp_n \text{ s.t} \\ v=x' \text{ or } v = y'}} \big| \langle xy \, | \, R_\omega^{B_n} \, uv \rangle \big|.
    \end{align}
    For any $\eta > 0$, we define the set of "good" events:
    \begin{align*}
        G_\eta(z) = \left\{ \omega \in \T^{V} \text{ s.t. } \text{dist}\Big(z, \sigma \big( \tilde{U}_\omega^{B_n} \big) \Big) > \eta \right\}.
    \end{align*}
    By Proposition \ref{propo: spect gaps fully loc}, we have $\Pp \left( G_\eta(z)^c \right) \leq c \, \eta \, |B_n|$ for all $0<\eta<1$. We take $p > 1/(1-s)$ as in the statement of Proposition \ref{propo: poly decay resolvent} and $1<q<1/s$ such that $\frac{1}{p} + \frac{1}{q} = 1$ and apply Hölder's inequality and Theorem \ref{thm: frac mom bound}:
    \begin{align} \label{eq: first part of expectation}
    \begin{split}
        \E &\left( \chi_{G_\eta(z)^c} \, \left| \langle xy \, | \, R^{B_n} \, x'y' \rangle \right|^s \right)
        \leq \Pp\left(G_\eta(z)^c \right)^\frac{1}{p} \, \E \left( \left| \langle xy \, | \, R^{B_n} \, x'y' \rangle \right|^{sq} \right)^\frac{1}{q} \leq c \big(\eta \, |B_n| \big)^\frac{1}{p}.
    \end{split}
    \end{align}
    Note that Theorem \ref{thm: frac mom bound} requires $sq < 1$, which is equivalent to $p > \frac{1}{1-s}$. 
   By perturbation theory for normal operators, if  $\| \mathcal{S} - \mathcal{I} \| \leq \frac{\eta}{2}$, then 
    \begin{align} \label{eq:spectral implication}
        \text{dist} \Big(z, \sigma \big( \tilde{U}_\omega^{B_n} \big) \Big) > \eta \Longrightarrow \text{dist} \Big(z, \sigma \big( U_\omega^{B_n} \big) \Big) > \frac{\eta}{2}.
    \end{align}
    If $\omega \in G_\eta(z)$, we have $\text{dist} \big( z, \sigma(\tilde{U}_\omega^{B_n}) \big) > \eta$ and therefore by \eqref{eq:spectral implication} and \eqref{eq:resolvent bound}:
    \begin{align} \label{eq: almost final bound}
        \chi_{G_\eta(z)}(\omega) \, \left| \langle xy \, | \, R_\omega^{B_n} \, x'y' \rangle \right|^s \leq \chi_{G_\eta(z)}(\omega) \, (2c)^s \, \frac{\| \mathcal{S} - \mathcal{I} \|^s}{\eta^{2s}} .
    \end{align}
    The estimates \eqref{eq: first part of expectation} and \eqref{eq: almost final bound} thus yield
    \begin{align}
    	 \E &\left(  \left| \langle xy \, | \, R_\omega^{B_n} \, x'y' \rangle \right|^s \right)\leq  c\Big( \big(\eta \, |B_n| \big)^\frac{1}{p} + \frac{\| \mathcal{S} - \mathcal{I} \|^s}{\eta^{2s}} \Big).
    \end{align}
    At this point we balance the two contributions to get an expression for $\eta$
    \begin{align}
    	\big(\eta \, |B_n| \big)^\frac{1}{p}=\bigg( \frac{\| \mathcal{S} - \mathcal{I} \|}{\eta^2} \bigg)^s \ \Rightarrow \
    	\eta = \bigg(\frac{\| \mathcal{S} - \mathcal{I} \|^{sp}}{|B_n|}\bigg)^{\frac{1}{1+2sp}}.
    \end{align} 
    In case
    $\|\mathcal S-\mathcal I\|\leq \eta/2$, this yields in turn the sought for estimate
    \begin{align}
    	 \E \Big( \left| \langle xy \, | \, R_\omega^{B_n}(z) \, x'y' \rangle \right|^s \Big) &\leq c\left(\|\mathcal S-\mathcal I\||B_n|^2\right)^{\frac{s}{1+2sp}}.
    \end{align}
    Using $|B_n|\geq 1$, $d\geq 2$, one finally checks that the assumption
     \begin{align}\label{eq: estim S-I prop3}
     	\| \mathcal{S} - \mathcal{I} \| |B_n|^2 d^{\frac{1+2ps}{sp}}<1
     	\end{align}
     	ensures that the previous condition holds, which ends the proof.  
\end{proof} \noindent

\subsection{The iterative step} \label{sec: iterative step}
The goal of this section is the following
\begin{proposition} \label{propo: max on boundary}
    Under the assumptions of Theorem \ref{thm: dyn loc SQW}, we have for all $ 0<q<1$,  all $s \in \left(0, \frac{1}{3} \right)$ that there exist $N_0\in \N^*$ and $\varphi:\N^*\rightarrow \R^*_+$, (given in \eqref{eq: def phin} below), both depending on $(q,s,d)$, such that $n\geq N_0$ and $\| \mathcal{S} - \mathcal{I} \| \leq \varphi(n)$ imply
    \begin{align}\label{eq: est prop 4} 
        \E \left( \left| \langle xy \, | \, R(z) \, x'y' \rangle \right|^s \right) \leq q \, 
        \max_{\substack{\ket{uv} \in \Hp \\ \dist(v,x)\in \{n, n+1,n+2\}} }
        	\E \left( \left| \langle uv \, | \, R(z) \, x'y' \rangle \right|^s \right)
    \end{align}
    for all $z \in \C$ with $|z|\in [1/2, 3/2]\setminus \{1\}$,  $\ket{xy} \in \Hp$ and $\ket{x'y'} \in \Hp_{n+3}^\perp$, where the ball underlying the definition of $\Hp_{n+3}$ is centered at $x$.
\end{proposition}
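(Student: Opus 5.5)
We will run a geometric resolvent expansion with respect to the ball $B_n = B_n(x)$ centered at $x$, using the decoupled walk $U^{(n)} = U^{B_n}\oplus U^{B_n^c}$ and the boundary operator $T^{(n)} = U - U^{(n)}$ from Section~\ref{subsec: restricting to balls}.

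\textbf{Step 1: one-step expansion.} Since $\ket{xy}\in\Hp_n$ and $\ket{x'y'}\in\Hp_{n+3}^\perp\subset\Hp_n^\perp$, the decoupled resolvent $R^{(n)}$ has vanishing $(xy),(x'y')$ matrix element. The resolvent identity $R = R^{(n)} - R^{(n)} T^{(n)} R$ then gives
\begin{equation}
\langle xy|R\,x'y'\rangle = -\sum_{\ket{ab},\ket{uv}} \langle xy|R^{B_n}\, ab\rangle\,\langle ab|T^{(n)}\, uv\rangle\,\langle uv|R\,x'y'\rangle .
\end{equation}
The operator $T^{(n)}$ is supported on edges whose scattering vertex lies in $S_n(x)$, and its entries are bounded by $\|\mathcal S-\mathcal I\|$. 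So the sum runs over $\ket{ab}$ with $\dist(b)\in\{n-1,n\}$ and $\ket{uv}$ with $\dist(v)\in\{n,n+1\}$ (and similar). The number of terms is at most $c\, d^2 s_n$.

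\textbf{Step 2: decoupling the randomness.} We would apply $|\sum|^s\le\sum|\cdot|^s$ and take expectations. The two resolvent factors $\langle xy|R^{B_n}ab\rangle$ and $\langle uv|R\,x'y'\rangle$ are correlated, so we re-sample as in \cite{HJS:09,Schaefer:2025}. First, condition on all phases except those at the boundary vertices $v$ (and $b$) sitting on $S_n$, $S_{n+1}$. Then express $R$ through a second resolvent expansion one layer further out, $R = R^{(n+1)} - R^{(n+1)}T^{(n+1)}R$ or similar. This way the factor involving $\omega_v$ becomes a resolvent evaluated at edges $\ket{u'v'}$ with $\dist(v')\in\{n+1,n+2\}$.

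This is where the shell $\{n,n+1,n+2\}$ in \eqref{eq: est prop 4} comes from, and why the extra layer $\Hp_{n+3}^\perp$ is needed: it keeps the endpoint $\ket{x'y'}$ outside every shell involved. After the second expansion, the local factors depending on the boundary phases are averaged using Theorem~\ref{thm: frac mom bound}, which bounds the two-variable fractional moment uniformly. The restriction $s<1/3$ enters here, through a Hölder split into three factors each raised to power $3s<1$. Each factor is then bounded either by $C(3s)$ or by its FM with the remaining randomness.

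\textbf{Step 3: smallness.} After this, the bound takes the form
\begin{equation}
\E|\langle xy|R\,x'y'\rangle|^s \le c\, s_{n+1}^{2}\,\|\mathcal S-\mathcal I\|^{s}\,\sup_{ab}\E\big(|\langle xy|R^{B_n}ab\rangle|^{3s}\big)^{1/3}\, \max_{\dist(v,x)\in\{n,n+1,n+2\}}\E|\langle uv|R\,x'y'\rangle|^s .
\end{equation}
Proposition~\ref{propo: poly decay resolvent}, applied with exponent $3s$, bounds the $B_n$ factor by $c(\|\mathcal S-\mathcal I\||B_n|^2)^{s/(1+6sp)}$. This holds whenever $\ket{xy}\nsim\ket{ab}$, which is true for $n\ge N_0$ (say $N_0=3$). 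With $|B_n|,s_n\le c\,d^{n+1}$, choosing
\begin{equation}
\varphi(n) = \min\Big\{ |B_n|^{-2}d^{-\frac{1+6ps}{3sp}},\ \big(q/(c\,d^{4n})\big)^{\kappa}\Big\},
\end{equation}
with a suitable $\kappa = \kappa(s,p)$, makes the prefactor at most $q$.

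\textbf{Main obstacle.} The hardest part is Step 2: legitimately separating the dependence of the inner resolvent on the boundary phases from that of the outer one. Those phases enter both $R^{B_n}$ (through $b\in S_n$) and $R$. The resampling and second expansion must be arranged so that no phase is integrated twice against singular factors. I would first try conditioning on the phases at $S_n\cup S_{n+1}$ and invoking Theorem~\ref{thm: frac mom bound} pairwise, via the $\alpha,\beta$ change of variables.
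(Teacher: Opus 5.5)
Your Step 1, the use of Proposition~\ref{propo: poly decay resolvent} at exponent $3s$ (with $N_0=3$ to guarantee $\ket{xy}\nsim\ket{ab}$, and $\E|X|^s\le\E(|X|^{3s})^{1/3}$), and the choice of $\varphi(n)$ are fine. Bounding the entries of $T^{(n)}$ by $\|\mathcal S-\mathcal I\|$ is also legitimate. The gap is Step 2, which is where the proposition actually lives, and the mechanisms you describe there would not work.

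First, the second expansion as you write it, $R=R^{(n+1)}-R^{(n+1)}T^{(n+1)}R$, applied to $\langle uv|R\,x'y'\rangle$, produces a factor $\langle uv|R^{B_{n+1}}\,\cdot\,\rangle$ that depends on every phase in $B_n$. It also leaves the full $R$ on the right, so nothing becomes independent of $\langle xy|R^{B_n}ab\rangle$. The useful arrangement is the paper's $\langle xy|R^{(n)}T^{(n)}R\,T^{(n+1)}R^{(n+1)}x'y'\rangle$. Its outer factors $\langle xy|R^{(n)}uv\rangle$ and $\langle tt'|R^{(n+1)}x'y'\rangle$ depend on the disjoint families $(\omega_b)_{b\in B_n}$ and $(\omega_b)_{\dist(b)\ge n+1}$. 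But then the right factor involves $R^{(n+1)}$, not $R$. To reach the functional $\E|\langle\cdot|R\,x'y'\rangle|^s$ on which the iteration in Theorem~\ref{thm: expo dec frac mom} runs, you must go back via $R^{(n+1)}=R+R^{(n+1)}T^{(n+1)}R$. That creates a new correlated product needing its own decoupling, and your Step~3 simply asserts the final form.

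Second, even in the three-factor form, the outer factors share phases with the middle factor $\langle vv'|R\,wt\rangle$. Theorem~\ref{thm: frac mom bound} controls that factor by averaging over $\omega_{v'}$ and $\omega_t$, and $\omega_t$, $t\in S_{n+1}$, also enters $\langle tt'|R^{(n+1)}x'y'\rangle$. Suppose you condition on the remaining phases and split by H\"older into three factors at exponent $3s$. Then both outer factors come out as $\E(|\cdot|^{3s})^{1/3}$, on the right as well as on the left, contrary to your Step~3. Since $\E(|X|^{3s})^{1/3}\ge\E|X|^s$, the recursion in $\E|\cdot|^s$ does not close. Invoking Theorem~\ref{thm: frac mom bound} ``pairwise'' does not help: it bounds one matrix element averaged over two phases, not a product of several elements depending on them.

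The missing idea is the resampling of Appendix~\ref{sec: first resampling argu}. Take independent copies $\widehat\omega_r$ for $r\in J=\{v,v',t,t'\}$ and write $R^{(n)}=\widehat R^{(n)}-\widehat R^{(n)}\widehat D\,U^{(n)}R^{(n)}$ and $R^{(n+1)}=\widehat R^{(n+1)}-R^{(n+1)}\widehat D\,U^{(n+1)}\widehat R^{(n+1)}$, with $\widehat D$ supported on $J$.
\begin{itemize}
\item In the main term the outer factors do not depend on $\omega_J$, so $\E(\cdot\,|\,J^c)$ acts on the middle factor alone.
\item In the correction terms the $3s$-H\"older is applied only conditionally. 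The factors integrated over $\omega_J$ are bounded by Theorem~\ref{thm: frac mom bound} (after $UR=1+zR$ where needed).
\item The surviving $\widehat R$-elements are reduced via $U\widehat R=1+z\widehat R$ and $\ket{pr}\neq\ket{x'y'}$; this is where $\Hp_{n+3}^\perp$ is used. They are independent of $\omega_J$, so the cube root returns them to exponent $s$.
\item Replacing $\widehat\omega_J$ by $\omega_J$ in distribution then factorizes the inside and outside expectations.
\item A second resampling on $\{t',v,v'\}$ handles the return from $R^{(n+1)}$ to $R$.
\end{itemize}
That second resampling costs an extra factor $s_n$, so your prefactor $s_{n+1}^2$ undercounts. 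Any fixed power of $d^n$ is, however, absorbed into $\varphi(n)$.
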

\begin{proof}
    We use the resolvent identity twice, once to decouple over the boundary of $B_n \equiv B_n(x) $ and then over that of $ B_{n+1} \equiv B_{n+1}(x) $. Omitting the dependencies on $z$ and $\omega$ in the respective resolvents, 
  this yields
    \begin{align*}
        \langle xy \, | \, R \, x'y' \rangle 
        &= \langle xy \, | \, R^{(n)} \, x'y' \rangle - \langle xy \, | \, R^{(n)} T^{(n)} R^{(n+1)} \, x'y' \rangle + \langle xy \, | \, R^{(n)} T^{(n)} R T^{(n+1)} R^{(n+1)} \, x'y' \rangle.
    \end{align*}
    The first of the three terms is zero, since $\ket{xy} \in \Hp_n$ and $\ket{x'y'} \in \Hp_n^\perp$ and both subspaces are invariant under $R^{(n)}$. Similarly, since $\ket{x'y'} \in \Hp_{n+3}^\perp$ and using that $T^{(n)}$ has range $1$, we conclude that the second scalar product is also zero. Inserting the identity operator between each of the remaining terms, we conclude:
    \begin{align} \label{eq: much before first resampling}
        \nonumber \langle xy \, | \, R \, x'y' \rangle 
        = \sum_{\substack{\ket{uv}, \ket{u'v'},\\ \ket{wt}, \ket{w't'} \in \Hp}} \langle xy \, | \, R^{(n)} \, uv \rangle \, \langle uv \, | \, T^{(n)} \, u'v' \rangle \, \langle u'v' \, | \, R \, wt \rangle \, \times \quad & \\[-3ex] \langle wt \, | \, T^{(n+1)} \, w't' \rangle \, \langle w't' \, | \, R^{(n+1)} \, x'y' \rangle & .
    \end{align}
    We now introduce the abbreviation
\begin{align} \label{eq: def boundary}
    \partial B_n \coloneqq \big\{ \big(\ket{uv}, \ket{u'v'} \big) \in \Hp \times \Hp \, \text{ s.t. } \langle uv | \, T^{(n)} \, u'v' \rangle \neq 0 \big\} , 
\end{align}
and stress that this set does not coincide with boundary $V_{\partial F}$ for a consistent set $F$ defined in Section \ref{sec:RSQW}, since $\partial B_n$ is a subset of directed edges, while $V_{\partial F}$ is a subset of vertices. Since $x\in V$ is fixed in this proof,  for any $v\in V$,  we abbreviate by $\dist(v)$ the graph distance $d(v,x)$. Note that if $\big(\ket{uv}, \ket{u'v'} \big) \in \partial B_n$, then $u'=v$ and $\dist(v) = n$.
    We now bound the moduli of terms in~\eqref{eq: much before first resampling} containing the transition operators $T^{(n)}$ and $T^{(n+1)}$ by 2,
    and obtain, using $0<s<1$,
    \begin{align} \label{eq: before first resampling}
    \begin{split}
        &\E \big( \big| \langle xy \, | \, R \, x'y' \rangle \big|^s \big) \\ &\leq c \, \sum_{\substack{(\ket{uv}, \ket{u'v'}) \in \partial B_n, \\ (\ket{wt}, \ket{w't'}) \in \partial B_{n+1}, \\ \ket{uv} \in \Hp_n, \ket{w't'} \in \Hp_{n+1}^\perp}} \E \left( \left| \langle xy \, | \, R^{(n)} \, uv \rangle \right|^s \left| \langle u'v' \, | \, R \, wt \rangle \right|^s \left| \langle w't' \, | \, R^{(n+1)} \, x'y' \rangle \right|^s \right).
    \end{split}
    \end{align}
    Subsequently, $c$ denotes a constant independent of $n$ that may change from line to line.  Note that even though the first and third terms are independent, they are coupled by the second term. To decouple the terms, we use a resampling argument, which is detailed in Section~\ref{sec: first resampling argu}, and yields:
    \begin{equation} \label{eq: first resampling}
        \E \big( \big| \langle xy \, | \, R \, x'y' \rangle \big|^s \big) \leq c \mkern-5mu\sum_{\substack{\ket{wt} \in \Hp_n, \\ \dist(t) \in\{ n-1, n\}}} \E \left( \left| \langle xy \, | \, R^{(n)} wt \rangle \right|^s \right) \mkern-5mu\sum_{\substack{\ket{w't'}  \in \Hp_{n+1}^\perp, \\ \dist(t') \in \{ n+1, n+2\}}} \mkern-5mu \E \left( \left| \langle w't' \, | \, R^{(n+1)} \, x'y' \rangle \right|^s \right).
    \end{equation}
    Using the resolvent identity $R^{(n+1)} = R + R^{(n+1)} \, T^{(n+1)} \, R$ on the second term in \eqref{eq: first resampling} and inserting identities between each of the operators, we obtain for fixed $\ket{w't'}$:
    \begin{align*}
        \E &\Big(\left| \langle w't' \, | \, R^{(n+1)} \, x'y' \rangle \right|^s \Big) \leq \E \left(\left| \langle w't' \, | \, R \, x'y' \rangle \right|^s \right) \\
        &+ \E \bigg( \sum_{\ket{uv}, \ket{u'v'} \in \Hp} \left| \langle w't' \, | \, R^{(n+1)} uv \rangle \right|^s \, \left| \langle uv \, | \, T^{(n+1)} u'v' \rangle \right|^s \, \left| \langle u'v' \, | \, R \, x'y' \rangle \right|^s \bigg).
    \end{align*}
    Since $\ket{w't'} \in \Hp_{n+1}^\perp$, we need $(\ket{uv}, \ket{u'v'}) \in \partial B_{n+1}$ with $\ket{uv} \in \Hp_{n+1}^\perp$ to get a nonzero contribution. Bounding $T^{(n+1)}$, we have
    \begin{align*}
        \E \Big(\big| \langle w't' \, | \, &R^{(n+1)} \, x'y' \rangle \big|^s \Big) \leq \E \left(\left| \langle w't' \, | \, R \, x'y' \rangle \right|^s \right) \\ 
        &+ c \sum_{\substack{(\ket{uv}, \ket{u'v'}) \in \partial B_{n+1}, \\ \ket{uv} \in \Hp_{n+1}^\perp}} \E \bigg( \left| \langle w't' \, | \, R^{(n+1)} \, uv \rangle \right|^s \, \left| \langle u'v' \, | \, R \, x'y' \rangle \right|^s \bigg).
    \end{align*}
Plugging this into \eqref{eq: first resampling} yields:
    \begin{align} \label{eq: before second resampling}
        \E \big( \big| \langle xy \, | \, R \, x'y' \rangle \big|^s \big) \leq &c \, \bigg( \sum_{\substack{\ket{wt} \in \Hp_n, \\ \dist(t) \in \{ n-1, n\}}} \E \left( \left| \langle xy \, | \, R^{(n)} \, wt \rangle \right|^s \right) \bigg) 
        \sum_{\substack{\ket{w't'}  \in \Hp_{n+1}^\perp, \\ \dist(t') \in \{n+1, n+2\}}} \Bigg( \E \left(\left| \langle w't' \, | \, R \, x'y' \rangle \right|^s \right) \nonumber \\ &+ \tilde{c} \sum_{\substack{(\ket{uv}, \ket{u'v'}) \in \partial B_{n+1}, \\ \ket{uv} \in \Hp_{n+1}^\perp}} \E \bigg( \left|\langle w't' \, | \, R^{(n+1)} \, uv \rangle \right|^s \, \left| \langle u'v' \, | \, R \, x'y' \rangle \right|^s \bigg) \Bigg).
    \end{align}
    A second re-sampling argument explained in the appendix allows to estimate \eqref{eq: before second resampling}. We express the result in terms of $s_n=c e^{\gamma n}$, a common upper bound on $|S_n|$ and $|B_n|$ (using the bound $|B_n| \leq \sum_{k=0}^n |S_n| \leq c e^{\gamma n}$) :
    \begin{align} \label{eq: second resampling}
    \begin{split}
        \E \big( &\big| \langle xy \, | \, R \, x'y' \rangle \big|^s \big) \\ &\leq c \, s_n \, \sum_{\substack{\ket{wt} \in \Hp_n, \\ \dist(t) \in \{ n-1, n\} }} \E \left( \left| \langle xy \, | \, R^{(n)} \, wt \rangle \right|^s \right) \, \sum_{\substack{\ket{uv} \in \Hp, \\ \dist(v) \in \{ n, n+1, n+2\} }} \E \left( \big| \langle uv \, | \, R \, x'y' \rangle \big|^s \right).
    \end{split}
    \end{align}
    We prove \eqref{eq: second resampling} in Section \ref{sec: second resampling argu}. Note that since $\ket{xy}, \ket{wt} \in \Hp_n$, we can replace $R^{(n)}$ by $R^{B_n}$. Thus, assuming $|B_n|^2\| \mathcal{S} - \mathcal{I} \| < d^{-\frac{1+2ps}{sp}}$ for some $p > \frac{1}{1-s}$,
   we recall \eqref{eq: def sn} and apply Proposition~\ref{propo: poly decay resolvent} to the first sum in \eqref{eq: second resampling} to get 
    \begin{align*}
        \sum_{\substack{\ket{wt} \in \Hp_n, \\ \dist(t) \in \{ n-1, n\} }} \E \left( \left| \langle xy \, | \, R^{(n)} \, wt \rangle \right|^s \right) \leq c \, s_n \left(\|\mathcal S-\mathcal I\||B_n|^2\right)^\frac{s}{1+2sp},
    \end{align*}
    We have used, and will use again, that for $p\in \Z$ fixed, $e^{-\gamma |p|}s_n\leq s_{n+p}\leq e^{\gamma |p|}s_n.$
    The second sum is bounded according to
    \begin{align*}
        \sum_{\substack{\ket{uv} \in \Hp, \\ \dist(v) \in \{n, n+1, n+2\}}} \E \left( \big| \langle uv \, | \, R \, x'y' \rangle \big|^s \right) \leq c \, s_n \, \max_{\substack{\ket{uv} \in \Hp, \\ \dist(v) \in \{n, n+1, n+2\}}} \E \left( \big| \langle uv \, | \, R \, x'y' \rangle \big|^s \right).
    \end{align*}
    Inserting the last two estimates into \eqref{eq: second resampling}, we obtain
    \begin{align*}
        \E \big( &\big| \langle xy \, | \, R \, x'y' \rangle \big|^s \big) \leq c \, s_n^3 \left(\|\mathcal S-\mathcal I\||B_n|^2\right)^\frac{s}{1+2sp} \, \max_{\substack{\ket{uv} \in \Hp, \\ \dist(v) \in \{n, n+1, n+2\}}} \E \left( \big| \langle uv \, | \, R \, x'y' \rangle \big|^s \right).
    \end{align*}
    Using \eqref{eq: def sn} and $|B_n| \leq c e^{\gamma n}$, we impose that the prefactor satisfies 
    \begin{align}
    	c \, s_n^3 \left(\|\mathcal S-\mathcal I\||B_n|^2\right)^\frac{s}{1+2sp} \leq \tilde{c} \, e^{\gamma(3+\frac{2s}{1+2sp})n}\|\mathcal S-\mathcal I\|^\frac{s}{1+2sp}\leq q,
    \end{align}
    where $0<q<1$ and $\tilde{c}$ is a constant depending on $q,s,p$. Choosing $p(s)>1/(1-s)$, this condition is satisfied if
    \begin{align}\label{eq: def phin}
    	 \|\mathcal S-\mathcal I\|\leq e^{-\gamma \big(\frac{3+6sp(s)+2s}{s}\big)n }\big(q/\tilde{c}\big)^{\frac{1+2sp(s)}{s}}:=\varphi(n). 
    \end{align}
    To ensure \eqref{eq: estim S-I prop3}, we take $n\geq N_0$,  which ends the proof of Proposition \ref{propo: max on boundary}.
\end{proof}

\subsection{Exponential decay} \label{sec: main proof}
Repeated applications of Proposition \ref{propo: max on boundary} allow us to prove Theorem \ref{thm: expo dec frac mom}:

\begin{proof}
    We take $N_0\geq 2$, $q$ from Proposition \ref{propo: max on boundary} and set $n = N_0$. The LHS of \eqref{eq: expo dec of frac mom} is bounded due to Theorem \ref{thm: frac mom bound}, which is why it suffices to consider only $\ket{xy}$ and $\ket{x'y'}$ such that $|x-x'| > n+3$. This condition guarantees that $\ket{x'y'} \in \Hp_{n+3}^\perp$, since $\Hp_{n+3}$ is centered at $x$.  An application of Proposition \ref{propo: max on boundary} yields for the fixed value $\| \mathcal{S} - I \| = \varphi(n)= \varphi(N_0)$: 
    \begin{align} \label{eq: finite volume met}
        \E \left( \left| \langle xy \, | \, (U - z)^{-1} \, x'y' \rangle \right|^s \right) \leq q \, \max_{\substack{\ket{uv} \in \Hp \\ \dist(v,x)\in \{n, n+1,n+2\}}} \E \left( \left| \langle uv \, | \, (U - z)^{-1} \, x'y' \rangle \right|^s \right).
    \end{align}
    We define 
    \begin{align}
    	r(x)\coloneqq\max_{y\sim x}  \E \left( \left| \langle xy \, | \, (U - z)^{-1} \, x'y' \rangle \right|^s \right) 
    \end{align}
    and use the fact that for $u\sim v$, $\dist(u,x)\in \{m-1,m,m+1\}$ if $\dist(v,x)=m$ to get from \eqref{eq: finite volume met}
    \begin{align}
    	r(x)\leq q \max_{\substack{u \in V \\ \dist(u,x)\in A_n(x)}} r(u), 
    \end{align}
    where $A_n(x)$ is the annulus
    \begin{align}
    	A_n(x)=\{v\in V \ | \ \dist(v,x)\in \{n-1, \dots, n+3\} \}, \ \text{ with } \ |A_n(x)|\leq c s_n.
    \end{align} 
    Let $u_1\in A_n(x)$ be the locus of the maximum above and consider $u_2\in A_n(u_1)$ such that 
    \begin{align}
    	r(x)\leq q r(u_1)\leq q^2 \max_{\substack{u \in V \\ \dist(u,u_1)\in A_n(x)}} r(u) = q^2 r(u_2).
    \end{align}
    The vertices $x, u_1, u_2$ are distinct since $q<1$ and $r(x)>0$, and $\dist(x,u_2)\leq 2(n+3)$. We can iterate this procedure 
    at least $M= \lfloor \frac{\dist(x,x')}{n+3} \rfloor$ times, so that $\dist(u_k,x')>n+3$, $k\in \{1, \dots, M\}$, and apply Theorem \ref{thm: frac mom bound} to $r(u_M)$ to obtain 
    \begin{align}
    	r(x)\leq q^M r(u_M) \leq q^M C.
    \end{align}
     Setting $g = \frac{|\log(q)|}{n+3}$ eventually yields 
    \begin{align*}
        \E \left( \left| \langle xy \, | \, (U - z)^{-1} \, x'y' \rangle \right|^s \right) \leq c \, e^{-g\dist(x,x')}.
    \end{align*}
    To finish the proof, we observe that as $q\rightarrow 0$, $n=N_0$ defined in \eqref{eq: def phin} can be taken as $N_0=1$, so that $g$ behaves as $|\ln q|$ in this limit. Hence, $g>0$ can be chosen arbitrarily large, taking $0<q<1$ arbitrarily small, as announced. 
\end{proof}
We are now ready to prove Theorem \ref{thm: dyn loc SQW} as an application of Theorem \ref{thm:fmec}:
\begin{proof}[Proof of Theorem \ref{thm: dyn loc SQW}]
    Given $ x \in V $ and $L \in \mathbb{N} $, we define for 
    the set of oriented edges
    \begin{align*}
        E_L = \{ (uv) \in D \, | \, d(x,u) \leq L, d(x,v) \leq L \} .
    \end{align*}
    This is an element of the consistent family $\mathcal{F} = \{ F \subset D \, | \, (uv) \in F \iff (vu) \in F \}$, see Section~\ref{sec:RSQW}. Since $E_L \nearrow D$, we have  
    $
        \lim_{L \to \infty} \| T_\omega^{D, E_L} \ket{uv}  \| = 0 $ for any fixed edge $\ket{uv} \in \Hp$. 
    By Proposition~\ref{propo: EC lower semicont}, we bound the eigenfunction correlator $Q_\omega^D$ on the set of all directed edges $D$ by its analogue $Q_\omega^{E_L}$ on the finite set $E_L$, see \eqref{eq:ECconv}. We choose $L$ big enough such that $(xy), (x'y') \in E_L$, and consider from now on the random SQW $U_\omega^{E_L}$. 
    
Since the LHS in~\eqref{eq:dynloc} is bounded by one, we may assume without loss of generality that $ xy $ and $ x'y' $, when considered as two vertex pairs, have a graph distance greater or equal to $ 2 r \geq 4 $ with some $ r < L /2 $. This $ r $ can be chosen such that $ d(x,x')\leq 2 (r+1) $ and that $(xy) \in E_r $ and $(x'y') \in E_r^c $. We then apply Theorem \ref{thm:fmec} with $B = E_r \subsetneq E_L $. The fractional moment and spectral averaging conditions \eqref{eq: assu frac mom} and \eqref{eq: assu spec avg} are satisfied due to Theorems \ref{thm: frac mom bound} and \ref{thm: spec avg}. The latter uses that $(xy)$ and $(x'y')$ are far enough apart such that $\omega_{x'}$ is independent of the randomness in $B$. Theorem \ref{thm:fmec} thus implies that for any $s \in (0, 1/3)$, $g>0$ and $\beta \in (0,s)$ with $\beta \leq 1 - \beta / s$ and all open $I \subset \Sp^1$
    \begin{align}
    \nonumber \mathbb{E}&\left[Q^{E_L}((x'y'),(xy);I,\beta)\right] \\ &\leq C_W^{\frac{\beta}{s}} \sum_{(uv)\in B, (u'v') \in E} t_{(uv),(u'v')}^\beta \sup_{\delta \in (0,1) } \left\{ \int_I \mathbb{E}\left[ |\langle xy | (U^B - \delta e^{i\theta})^{-1} uv \rangle|^{s} \right] \frac{d\theta}{2\pi}\right\}^{\frac{\beta}{s}} .
    \end{align}
    We note that $t_{(uv),(u'v')}=0$ unless $v=u'$ and $d(x,v) = r$, in which case it can be bounded by $2$. By Theorem \ref{thm: expo dec frac mom}, there exist $\varphi, c > 0$ such that the fractional moment on the RHS is bounded by $c \, e^{-g d(x,u)}$ whenever $\| \mathcal{S} - \mathcal{I} \| \leq \varphi$. Since $u \sim v$, we have $r-1 \leq d(x,u) \leq r+1$. In total, this yields
    \begin{align*}
        \mathbb{E}\left[Q^{E_L}((x'y'),(xy);I,\beta)\right] \leq 2 C_W^\frac{\beta}{s} \sum_{\substack{v \in E_L, d(x,v) = r \\ u \sim v, v' \sim v}} \big( c \, e^{-g(r-1)} \big)^\frac{\beta}{s} \leq 2 C_W^\frac{\beta}{s} \, c^\frac{\beta}{s} \, d^2 \, s_r \, e^{g \, \frac{\beta}{s}} \, e^{-g \frac{\beta}{s} \, r} . 
    \end{align*}
    Using $s_r \leq c_d e^{\gamma r}$, see \eqref{eq: def sn}, and $r \geq d(x,x')/2  -1 $, we obtain for a new constant $c$
    \begin{align*}
        \mathbb{E}\left[Q^{E_L}((x'y'),(xy);I,\beta)\right] \leq c \, \exp(\frac{-(g \frac{\beta}{s} - \gamma)}{2} \, d(x,x')) .
    \end{align*}
    Since $g>0$ is not fixed, we can arbitrarily increase the rate of decay above. Due to all constants being independent of $L$, we obtain the same bound for $Q_\omega^D$ by Proposition \ref{propo: EC lower semicont}. Dynamical localization finally follows from \eqref{eq: loc from ec decay}.
\end{proof}

\appendix
\section{Resampling arguments}
\subsection{First re-sampling argument} \label{sec: first resampling argu}
We derive equation \eqref{eq: first resampling} from \eqref{eq: before first resampling}, following the steps of \cite[Prop. 13.1]{HJS:09}. We let $c$ denote a constant, which is independent of $n$, but may change from line to line.
\subsubsection{Splitting terms}
We fix $(\ket{uv}, \ket{u'v'}) \in \partial B_n, (\ket{wt}, \ket{w't'}) \in \partial B_{n+1}$ such that $\ket{uv} \in \Hp_n$ and $\ket{w't'} \in \Hp_{n+1}^\perp$ and define $J = \{v, v', t, t' \}$. 

For later use, we note that $\ket{uv} \in \Hp_n$, so $\dist(u) \leq n$ and $\dist(v) \leq n$. Additionally, due to $\langle uv | \, T^{(n)} \, u'v' \rangle \neq 0$, we actually have $u' = v$, $\dist(v)=n$ and $n-1\leq \dist(v')\leq n+1$.  On the other hand, due to $\ket{w't'} \in \Hp_{n+1}^\perp$ and $\langle wt | \, T^{(n+1)} \, w't' \rangle \neq 0$, we conclude that $w'=t$, and $\dist(t) = n+1$. Thus, $\dist(t')=n+2$ and $n \leq \dist(w) \leq n+2$.

\medskip

For all sites $r \in J$ we choose i.i.d. random variables $(\widehat{\omega}_r)_{r\in J}$ that are uniformly distributed on $\T$ and independent from $(\omega_r)_{r \in V}$. With $ P_r^{\rm O} $ the projection onto the outgoing subspace at $ r $ (see Section~\ref{sec:SQW}), we define 
\begin{align*}
    D_{\omega, \widehat{\omega}} = \sum_{r \in J} e^{i \widehat{\omega}_r} \, P_r^{\rm O} + \sum_{r \notin J} e^{i \omega_r} \, P_r^{\rm O}.
\end{align*}
$D_{\omega, \widehat{\omega}}$ uses the phase $\widehat{\omega}_r$ on all $r \in J$ and $\omega_r$ on all other sites. We set
\begin{align*}
    U_{\omega, \widehat{\omega}}^{(n)} = D_{\omega, \widehat{\omega}} \, U^{(n)} \;\; \text{and} \;\; \widehat{R}^{(n)} = \big( U_{\omega, \widehat{\omega}}^{(n)} - z \big)^{-1}.
\end{align*}
To simplify the notation we define $\widehat{D} = \sum_{r \in J} \big( e^{i \omega_r} - e^{i \widehat{\omega}_r} \big) \, P_r^{\rm O}$. Applying the resolvent identity to $R^{(n)}$ and $R^{(n+1)}$ yields
\begin{align*}
    R^{(n)} &= \widehat{R}^{(n)} - \widehat{R}^{(n)} \widehat{D} \, U^{(n)} R^{(n)} \;\; \text{ and } \;\; R^{(n+1)} = \widehat{R}^{(n+1)} - R^{(n+1)} \widehat{D} \, U^{(n+1)} \widehat{R}^{(n+1)}.
\end{align*}
Since $s<1$, we use this to bound each term in the RHS of \eqref{eq: before first resampling}:
\begin{align} \label{eq: splitting first res}
\begin{split}
    &\E \Big( \big| \langle xy \, | \, R^{(n)} \, uv \rangle \big|^s \, \big| \langle u'v' \, | \, R \, wt \rangle \big|^s \, \big| \langle w't' \, | \, R^{(n+1)} \, x'y' \rangle \big|^s \Big) \\
    &\leq \widehat{\E} \E \bigg( \Big( \big| \langle xy \, | \, \widehat{R}^{(n)} \, uv \rangle \big|^s + \big| \langle xy \, | \, \widehat{R}^{(n)} \widehat{D} \, U^{(n)} R^{(n)} \, uv \rangle \big|^s \Big) \, \big| \langle u'v' \, | \, R \, wt \rangle \big|^s \\
    &\hphantom{{} \leq \widehat{\E} \E \bigg(} \Big( \big| \langle w't' \, | \, \widehat{R}^{(n+1)} \, x'y' \rangle \big|^s + \big| \langle w't' \, | \, R^{(n+1)} \widehat{D} \, U^{(n+1)} \widehat{R}^{(n+1)} \, x'y' \rangle \big|^s \Big) \bigg) \\
    & = A_1 + A_2 + A_3 + A_4,
\end{split}
\end{align}
where
\begin{align*}
    A_1 &= \widehat{\E}\E \bigg( \Big| \langle xy \, | \, \widehat{R}^{(n)} \, uv \rangle \Big|^s  \; \Big| \langle u'v' \, | \, R \, wt \rangle \Big|^s \;  \Big| \langle w't' \, | \, \widehat{R}^{(n+1)} \, x'y' \rangle \Big|^s  \bigg)\\
    A_4 &= \widehat{\E}\E \bigg( \Big| \langle xy \, | \, \widehat{R}^{(n)} \widehat{D} \, U^{(n)} R^{(n)} \, uv \rangle \Big|^s \; \Big| \langle u'v' \, | \, R \, wt \rangle \Big|^s \\ & \hphantom{{} = \widehat{\E}\E \bigg(} \Big| \langle w't' \, | \, R^{(n+1)} \widehat{D} \, U^{(n+1)} \widehat{R}^{(n+1)} \, x'y' \rangle \Big|^s \bigg).
\end{align*}
The intermediate terms $A_2$ and $A_3$ are defined analogously. Here, $\E$ denotes the expectation over $(\omega_r)_{r \in V}$ and $\widehat{\E}$ the one over $(\widehat{\omega}_r)_{r \in J}$. 

\subsubsection{Estimate for $A_1$} \label{sec: estforA1}
Let 
$\E \big( ... |J \big)$
 be the conditional expectation with respect to the $\sigma$-field generated by $(\omega_r)_{r \notin J}$. By the law of total expectation, we have for any random variable $X$:
\begin{align} \label{eq: total expectation}
    \E (X) = \E \big( \E (X | J^c ) \big), \ \text{ where } \ 	\E \big( X|J^c \big)= \int_{\T}\dots\int_{\T} X \prod_{r\in J}d\theta_j/2\pi.
\end{align}
This gives us:
\begin{align*}
    A_1 &= \widehat{\E} \bigg( \E \bigg( \big| \langle xy | \, \widehat{R}^{(n)} uv \rangle \big|^s  \; \E \Big( \big| \langle u'v' | \, R \, wt \rangle \big|^s \, \Big| \, J^c \Big) \, \big| \langle w't' | \, \widehat{R}^{(n+1)} x'y' \rangle \big|^s \bigg) \bigg).
\end{align*}
We have used that the operator $U_{\omega, \widehat{\omega}}^{(n)}$ depends only on the random variables $(\omega_r)_{r \notin J}$ and $(\widehat{\omega}_r)_{r \in J}$. Thus, when taking the expectation $\E$, which is with respect to the random variables $(\omega_r)_{r \in V}$, the resolvents $\widehat{R}^{(n)}$ and $\widehat{R}^{(n+1)}$ are measurable with respect to the $\sigma$-algebra generated by $(\omega_r)_{r \notin J}$ and can be pulled out of the conditional expectation. By Theorem \ref{thm: frac mom bound}, we can bound the conditional expectation:
\begin{align*}
    \E \Big( \big| \langle u'v' \, | \, R \, wt \rangle \big|^s \, \Big| \, J^c \Big) \leq C(s),
\end{align*}
since $v', t \in J$. We obtain:
\begin{align*}
    A_1 \leq c \; \widehat{\E} \bigg( \E \Big( \big| \langle xy \, | \, \widehat{R}^{(n)} \, uv \rangle \big|^s  \; \big| \langle w't' \, | \, \widehat{R}^{(n+1)} \, x'y' \rangle \big|^s \Big) \bigg).
\end{align*}
Notice that the inequality above does not contain $(\omega_r)_{r \in J}$. Since $(\omega_r)_{r \in J}$ and $(\widehat{\omega}_r)_{r \in J}$ are identically distributed, we can replace the latter by the former:
\begin{align*}
    A_1 \leq c \; \E \Big( \big| \langle xy \, | \, R^{(n)} uv \rangle \big|^s  \; \big| \langle w't' \, | \, R^{(n+1)} \, x'y' \rangle \big|^s \Big).
\end{align*}
We stress that $\ket{uv} \in \Hp_n$, $\ket{w't'} \in \Hp_{n+1}^\perp$ and $\Hp_n$, $\Hp_{n+1}^\perp$ are invariant under $R^{(n)}$ and $R^{(n+1)}$ respectively. Thus, the two scalar products above are independent random variables, and we obtain:
\begin{align} \label{eq: A1 estimate}
    A_1 \leq c \; \E \Big( \big| \langle xy \, | \, R^{(n)} uv \rangle \big|^s \Big) \; \E \Big( \big| \langle w't' \, | \, R^{(n+1)} \, x'y' \rangle \big|^s \Big).
\end{align}
This estimate is of the required form.

\subsubsection{Estimate for $A_4$} \label{subsec: A4}
We apply \eqref{eq: total expectation} to $A_4$ and use Hölder's inequality twice:
\begin{align}\label{eq: A4 terms}
\begin{split} 
    A_4 &\leq \widehat{\E} \bigg( \E \bigg( \E \Big( \big| \langle xy \, | \, \widehat{R}^{(n)} \widehat{D} \, U^{(n)} R^{(n)} \, uv \rangle \big|^{3s} \Big| J^c \Big)^\frac{1}{3} \, \E \Big( \big| \langle u'v' \, | \, R \, wt \rangle \big|^{3s} \Big| J^c \Big)^\frac{1}{3} \\
    & \hphantom{{} \leq \widehat{\E} \bigg( \E \bigg( } \E \Big( \big| \langle w't' \, | \, R^{(n+1)} \widehat{D} \, U^{(n+1)} \widehat{R}^{(n+1)} \, x'y' \rangle \big|^{3s} \Big| J^c \Big)^\frac{1}{3} \bigg) \bigg).
\end{split}
\end{align}
Due to $s < \frac{1}{3}$, we can bound the second expectation in \eqref{eq: A4 terms} by Theorem \ref{thm: frac mom bound}. We continue with the first term. Using the definition of $\widehat{D}$ and that the number of terms in that definition is bounded independently of $n$, we get:
\begin{align*}
    \big| &\langle xy \, | \, \widehat{R}^{(n)} \widehat{D} \, U^{(n)} R^{(n)} \, uv \rangle \big|^{3s} 
    {\leq} \sum_{r \in J} \sum_{p \sim r} \big| e^{i \omega_r} - e^{i \widehat{\omega}_r} \big|^{3s} \, \big| \langle xy \, | \, \widehat{R}^{(n)} pr \rangle \big|^{3s} \, \big| \langle pr \, | \, U^{(n)} R^{(n)} \, uv \rangle \big|^{3s}.
\end{align*}
Note that $\ket{xy} \in \Hp_n$, which is invariant under $\widehat{R}^{(n)}$. Thus, $\ket{pr}$ is an element of $\Hp_n$ as well and  we obtain
\begin{align*}
    \big| &\langle xy | \, \widehat{R}^{(n)} \widehat{D} \, U^{(n)} R^{(n)} uv \rangle \big|^{3s} \leq c \sum_{r \in J} \sum_{\substack{ p \sim r, \\ \ket{pr} \in \Hp_n}} \big| \langle xy \, | \, \widehat{R}^{(n)} pr \rangle |^{3s} \, | \langle pr \, | \, U^{(n)} R^{(n)}  uv \rangle \big|^{3s}.
\end{align*}
We remark that $\widehat{R}^{(n)}$ depends only on $(\widehat{\omega}_r)_{r \in J}$ and $(\omega_r)_{r \notin J}$. Since the expectation $\E$ is with respect to $(\omega_r)_{r \in V}$, the first scalar product is measurable with respect to the $\sigma$-algebra generated by $(\omega_r)_{r \notin J}$. Therefore:
\begin{align}
\begin{split} \label{eq: A4 first term intermediate step}
    \E &\Big( \big| \langle xy \, | \, \widehat{R}^{(n)} \widehat{D} \, U^{(n)} R^{(n)} \, uv \rangle \big|^{3s} \Big| J^c \Big) 
    \leq c \sum_{r \in J} \sum_{\substack{ p \sim r, \\ \ket{pr} \in \Hp_n}} \big| \langle xy \, | \, \widehat{R}^{(n)} pr \rangle \big|^{3s} \, \E \Big( \big| \langle pr \, | \, U^{(n)} R^{(n)} \, uv \rangle \big|^{3s} \, \Big| \, J^c \Big).
\end{split}
\end{align}
Since $D_\omega$ is unitary, we have:
\begin{align*}
    \big| \langle pr \, | \, U^{(n)} R^{(n)} \, uv \rangle \big| = \big| \langle e^{i \omega_r} \, pr \, | \, I + z \, R^{(n)} \, uv \rangle \big|.
\end{align*}
Using this, Theorem \ref{thm: frac mom bound} and $1/2<|z|<3/2$, we can bound the expectation in \eqref{eq: A4 first term intermediate step}, since $r$ and $v$ belong to $J$, and obtain for the first term in \eqref{eq: A4 terms}:
\begin{align*}
    \E &\Big( \big| \langle xy \, | \, \widehat{R}^{(n)} \widehat{D} \, U^{(n)} R^{(n)} \, uv \rangle \big|^{3s} \Big| J^c \Big) \leq c \sum_{r \in J} \sum_{\substack{ p \sim r, \\ \ket{pr} \in \Hp_n}} \big| \langle xy \, | \, \widehat{R}^{(n)} \, pr \rangle \big|^{3s}.
\end{align*}

We can obtain a similar estimate for the third term in \eqref{eq: A4 terms}: As above, we use the definition of $\widehat{D}$, apply the expectation, use that $\widehat{R}^{(n+1)}$ is measurable with respect to $(\omega_r)_{r \notin J}$ and apply Theorem \ref{thm: frac mom bound} to obtain for $t'$ and $r$ in $J$:
\begin{align*}
    \E \Big( \big| \langle w't' | R^{(n+1)} \widehat{D} U^{(n+1)} \widehat{R}^{(n+1)} x'y' \rangle \big|^{3s} \Big| J^c \Big) \leq c \sum_{r \in J} \sum_{\substack{p \sim r, \\ \ket{pr} \in \Hp_{n+1}^c}} \big| \langle pr | U^{(n+1)} \widehat{R}^{(n+1)} x'y' \rangle \big|^{3s}.
\end{align*}
Due to our assumptions, $\ket{x'y'} \in \Hp_{n+3}^\perp$ so that $(pr) \neq (x'y')$. Indeed, by definition $r \in J = \{v, v', t, t'\}$ and $\ket{pr} \in \Hp_{n+1}^\perp$.  We conclude from the discussion  at the top of Section \ref{sec: first resampling argu} that $r \neq v$, $r=t'$ implies $\dist(r)=n+2$ and $r\in \{v',t\}$ implies $\dist(r)=n+1$ and $\dist(p)=n+2$.
This  yields $\dist(r) \in \{n+1, n+2\}$ while $\ket{x'y'} \in \Hp_{n+3}^\perp$ implies $\dist(y') \geq n+3$ and thus $(pr) \neq (x'y')$.

\medskip

Since $D_{\omega, \widehat{\omega}}$ is unitary, it follows:
\begin{align*}
    \big| \langle pr | U^{(n+1)} \widehat{R}^{(n+1)} x'y' \rangle \big| &= \big| \langle pr | I+ z \, \widehat{R}^{(n+1)} x'y' \rangle \big| = |z| \, \big| \langle pr | \widehat{R}^{(n+1)} x'y' \rangle \big|.
\end{align*}
Using $1/2<|z| < 3/2$, this yields for the third term in \eqref{eq: A4 terms}:
\begin{align*}
    \E \Big( \big| \langle w't' \, | \, R^{(n+1)} \widehat{D} U^{(n+1)} \widehat{R}^{(n+1)} x'y' \rangle \big|^{3s} \Big| J^c \Big) \leq c \sum_{r \in J} \sum_{\substack{p \sim r, \\ \ket{pr} \in \Hp_{n+1}^\perp}} \big| \langle pr | \widehat{R}^{(n+1)} x'y' \rangle \big|^{3s}.
\end{align*}
Since $J$ has a fixed finite number of elements and $3s<1$, we have $\big( \sum_{j \in J} \alpha_j^{3s} \big)^\frac{1}{3} \leq c \sum_{j \in J} \alpha_j^s$ for some constant $c$ and all $\alpha_j > 0$. Using the obtained bounds for the three terms in \eqref{eq: A4 terms}, we conclude:
\begin{align*}
    A_4 \leq c \, \sum_{r \in J} \sum_{\substack{ p \sim r, \\ \ket{pr} \in \Hp_n}} \, \sum_{r' \in J} \sum_{\substack{p' \sim r', \\ \ket{p'r'} \in \Hp_{n+1}^\perp}} \widehat{\E} \E \Big( \big| \langle xy \, | \, \widehat{R}^{(n)} pr \rangle \big|^{s} \, \big| \langle p'r' \, | \, \widehat{R}^{(n+1)} \, x'y' \rangle \big|^{s} \Big).
\end{align*}
We note that the last line only depends on the random variables $(\omega_r)_{r \notin J}$ and $(\widehat{\omega}_r)_{r \in J}$. Since $(\omega_r)_{r \in J}$ and $(\widehat{\omega}_r)_{r \in J}$ are independent and identically distributed, we can replace the latter by the former. The two scalar products inside the expectation are then independent random variables by the same reasoning as for $A_1$. This yields:
\begin{align} \label{eq: A4 estimate}
    A_4 \leq c \, \sum_{\substack{r \in J, \\ \ket{pr} \in \Hp_n}} \E \Big( \big| \langle | xy \, | \, R^{(n)} pr \rangle \big|^{s} \Big) \, \sum_{\substack{r' \in J, \\ \ket{p'r'} \in \Hp_{n+1}^\perp}}  \E \Big( \big| \langle p'r' \, | \, R^{(n+1)} \, x'y' \rangle \big|^{s} \Big).
\end{align}
We can interpolate the arguments used for $A_1$ and $A_4$ to obtain similar estimates for $A_2$ and $A_3$. We note that in the sum above, we have $\dist(r) \in \{n-1, n\}$ and $\dist(r') \in \{n+1, n+2 \}$. Using \eqref{eq: before first resampling} and \eqref{eq: splitting first res}, we finally obtain \eqref{eq: first resampling}. We stress that all constants obtained in this section are independent of $n$ and of the center of the ball $B_n$.

\subsection{Second re-sampling argument} \label{sec: second resampling argu}
We derive equation \eqref{eq: second resampling} from \eqref{eq: before second resampling}. Since the methods used are similar to the ones in Appendix~\ref{sec: first resampling argu}, we will omit details. We let $c > 0$ denote some constant independent of $n$ that may change from line to line. 

Fixing $\ket{w't'} \in \Hp_{n+1}^\perp$ with $\dist(t') \in \{ n+1, n+2\}$, $(\ket{uv}, \ket{u'v'}) \in \partial B_{n+1}$ with $\ket{uv} \in \Hp_{n+1}^\perp$, we define $\Tilde{J} = \{ t', v, v' \}$. Note that $\dist(v)=n+1$ and $n\leq \dist(v)\leq n+2$. Similar to our approach in Appendix \ref{sec: first resampling argu}, we re-sample the random variables on all sites in $\tilde{J}$. Let $(\tilde{\omega}_r)_{r \in \tilde{J}}$ be an i.i.d. family of random variables uniformly distributed on the torus $\T$. Let $D_{\omega, \tilde{\omega}}$ use the phase $\tilde{\omega}_r$ on all sites $r \in \tilde{J}$ and $\omega_r$ on all other sites. Let $U_{\omega, \tilde{\omega}} = D_{\omega, \tilde{\omega}} U$ and $\tilde{R}$ be the re-sampled quantum walk and its resolvent. To simplify the notation we define $\tilde{D} = \sum_{r \in \tilde{J}} \big( e^{i \omega_r} - e^{i \tilde{\omega}_r} \big) \, P_r^{\rm O}$. Using the resolvent identity $R = \tilde{R} - R \, \tilde{D} \, U \, \tilde{R}$, we obtain for the last expectation in \eqref{eq: before second resampling}:
\begin{align} \label{eq: split in B1 and B2}
\begin{split}
    \E \bigg( \left|\langle w't' \, | \, R^{(n+1)} \, uv \rangle \right|^s \, \left| \langle u'v' \, | \, R \, x'y' \rangle \right|^s \bigg) 
    &{\leq} \tilde{\E} \E \bigg( \left|\langle w't' \, | \, R^{(n+1)} \, uv \rangle \right|^s \, \big| \langle u'v' \, | \, \tilde{R} \, x'y' \rangle \big|^s \bigg) \\ + \tilde{\E} \E \bigg( \left|\langle w't' \, | \, R^{(n+1)} \, uv \rangle \right|^s \, &\big| \langle u'v' \, | \, R \tilde{D} U \tilde{R} \, x'y' \rangle \big|^s \bigg) = B_1 + B_2.
\end{split}
\end{align}
We write $\E (... | \tilde{J})$ for the conditional expectation with respect to the $\sigma$-algebra generated by $(\omega_r)_{r \notin \tilde{J}}$. We proceed as in section \ref{sec: estforA1} and apply the law of total expectation (equation \eqref{eq: total expectation}) to $B_1$, use that $\tilde{R}$ is independent of the random variables $(\omega_r)_{r \in \tilde{J}}$ and apply Theorem \ref{thm: frac mom bound}, since $\{t',v\}\subset\Tilde{J}$. We can then replace the random variables $(\tilde{\omega}_r)_{r \in \tilde{J}}$ by $(\omega_r)_{r \in \tilde{J}}$ to obtain:
\begin{align} \label{eq: B1 estimate}
    B_1 \leq c \, \E \Big( \big| \langle u'v' \, | \, R \, x'y' \rangle \big|^s \Big).
\end{align}
The law of total expectation, Hölder's inequality and Theorem \ref{thm: frac mom bound} give for $B_2$:
\begin{align} \label{eq: B2 first estimate}
    B_2 \leq c \, \tilde{\E} \E \Big( \E \Big( \big| \langle u'v' \, | \, R \tilde{D} U \tilde{R} \, x'y' \rangle \big|^{2s} \, \Big| \, \tilde{J}^c \Big)^\frac{1}{2} \Big).
\end{align}
Using the definition of $\tilde{D}$, $2s<1$ and Theorem \ref{thm: frac mom bound}, we obtain for the inner expectation:
\begin{align} \label{eq: B2 second estimate}
    \E &\Big( \big| \langle u'v' \, | \, R \tilde{D} U \tilde{R} \, x'y' \rangle \big|^{2s} \, \Big| \, \tilde{J}^c \Big) \leq c \sum_{\substack{\ket{pr} \in \Hp \\ \text{s.t. } r \in \tilde{J}}} \big| \langle pr \, | \, U \tilde{R} \, x'y' \rangle \big|^{2s}.
\end{align}
By the same argument as in Appendix \ref{sec: first resampling argu}, we have for every $r \in \tilde{J}$ that $|r| \leq n+2$. Since $\ket{x'y'} \in \Hp_{n+3}^\perp$, it follows that $y' \notin \tilde{J}$. This yields:
\begin{align} \label{eq: B2 third estimate}
    \big| \langle pr | \, U \tilde{R} \, x'y' \rangle \big| = \big| \langle pr | \, \big( I + z \tilde{R} \big) \, x'y' \rangle \big| = |z| \, | \langle pr | \, \tilde{R} \, x'y' \rangle |.
\end{align}
Since $|\tilde{J}| \leq 3$, we have $\big( \sum_{j \in \tilde{J}} \alpha_j^{2s} \big)^\frac{1}{2} \leq c \sum_{j \in \tilde{J}} \alpha_j^{s}$ for all $\alpha_j > 0$. Using $|z|<3/2$, we insert \eqref{eq: B2 second estimate} and \eqref{eq: B2 third estimate} into \eqref{eq: B2 first estimate} and replace $\tilde{\omega}_r$ by $\omega_r$ to obtain:
\begin{align} \label{eq: B2 estimate}
    B_2 \leq c \sum_{\substack{\ket{pr} \in \Hp \\ \text{s.t. } r \in \tilde{J}}} \E \Big( | \langle pr \, | \, R \, x'y' \rangle |^{s} \Big).
\end{align}
Plugging the estimates \eqref{eq: split in B1 and B2}, \eqref{eq: B1 estimate} and \eqref{eq: B2 estimate} into \eqref{eq: before second resampling} we obtain:
\begin{align*}
    \E \big( \big| &\langle xy | \, R \, x'y' \rangle \big|^s \big) \leq c \, \bigg( \sum_{\ket{wt} \in \Hp_n, \dist(t) \in\{ n-1, n\}} \E \left( \big| \langle xy | \, R^{(n)} \, wt \rangle \big|^s \right) \bigg) \\ &\times \sum_{\substack{ \ket{w't'} \in \Hp_{n+1}^\perp \\ \dist(t') \in \{n+1, n+2\}}} \Bigg( \E \left( \big| \langle w't' | \, R \, x'y' \rangle \big|^s \right) + c \sum_{\substack{(\ket{uv}, \ket{u'v'}) \in \partial B_{n+1}, \\ \ket{uv} \in \Hp_{n+1}^\perp}} \Bigg(\E \Big( \big| \langle u'v' | \, R \, x'y' \rangle \big|^s \Big) \\ &\hphantom{{} \times \sum_{\substack{ \ket{w't'} \in \Hp_{n+1}^\perp \\ \dist(t') \in \{n+1, n+2\}}} \Bigg(} + \sum_{\substack{\ket{pr} \in \Hp \\ r \in \{ t', v, v'\} }} \E \Big( | \langle pr | \, R \, x'y' \rangle |^{s} \Big) \Bigg) \Bigg).
\end{align*}
 The sum over $\ket{w't'}$ leads to at most $d \, (|S_{n+1}| + |S_{n+2}|) \leq  c s_n$ terms. Similarly, the sum over $\ket{uv}, \ket{u'v'}$ leads to $\mathcal{O}(s_n)$ terms. We can therefore replace the last three expectations by a single sum of $\E \big( | \langle uv \, | R \, xy \rangle |^s \big)$ over $\ket{uv} \in \Hp$ with $\dist(v) \in \{n, n+1, n+2\}$. This leads to an additional term of the order $\mathcal{O}(s_n)$ in \eqref{eq: second resampling}.

\paragraph{Acknowledgements.} 	
AJ and AS were partially supported by the French National Research
Agency, ANR Dynacqus ANR-24-CE40-5714-02.
AS was supported by QuantAlps and the MSCA Cofund QuanG (Grant Nr: 101081458), funded by the European Union. The views and opinions expressed are those of the authors only and do not necessarily reflect those of the European Union or the granting authority. Neither the European Union nor the granting authority can be held responsible for them. SW was supported by the DFG under EXC-2111 -- 390814868.

\bibliography{Bibliography}
		\bibliographystyle{plain}
\end{document}